\journal{Chaos, Solitons \& Fractals}
\theoremstyle{thmstyleone}%
\newcommand\beq{\begin{equation}}%
\newcommand\eeq{\end{equation}}%
\newtheorem{defi}{Definition}%
\newtheorem{prop}{Proposition}%
\newtheorem{theo}{Theorem}%
\newtheorem{lem}{Lemma}%
\theoremstyle{remark}%
\newtheorem{rem}{Remark}%
\newcommand\bb{\mathbb}%
\DeclareMathOperator{\Int}{Int}%
\DeclareMathOperator{\Mod}{mod}%
\begin{document}

\begin{frontmatter}



\title{Chaotic discretization theorems for forced linear and nonlinear coupled oscillators}

\author[1]{Stefano Disca\corref{cor1}}
\ead{dscsfn@unife.it}
\author[1]{Vincenzo Coscia}
\ead{cos@unife.it}
\cortext[cor1]{Corresponding author}

\affiliation[1]{organization={Department of Mathematics and Computer Science, University of Ferrara},
addressline={Via Machiavelli 30},
city={Ferrara},
postcode={44121},
country={Italy}}




\begin{abstract}
We prove the holding of chaos in the sense of Li-Yorke for a family of four-dimensional discrete dynamical systems that are naturally associated to ODE systems describing coupled oscillators subject to an external non-conservative force, also giving an example of a discrete map that is Li-Yorke chaotic but not topologically transitive. Analytical results are generalized to a modular definition of the problem and to a system of nonlinear oscillators described by polynomial potentials in one coordinate. We perform numerical simulations looking for a strange attractor of the system; furthermore, we perform a bifurcation analysis of the system presenting 1D and 2D bifurcation diagrams, together with spectra of Lyapunov exponents and basins of attraction.
\end{abstract}



\begin{keyword}


coupled oscillators \sep Li-Yorke chaos \sep topological transitivity \sep strange attractor \sep bifurcation analysis

\MSC[2020]{39A33; 39B12; 37D45; 39A28}
\end{keyword}

\end{frontmatter}



\section{Introduction}\label{sec_introduction}
The pioneering work \cite{Li} by Li and Yorke stated for the first time a formal definition of chaos for discrete dynamical systems; their main result, well summed up in the statement ``period three implies chaos'', gave a very simple sufficient condition in order to establish the arising of chaotic behavior for one-dimensional maps. Since the formulation of the Li-Yorke theorem, mathematicians put their efforts in order to formulate alternative definitions of discrete chaos, by exploring analytical, metric and topological aspects. In addition to the Li-Yorke chaos, it is worth to cite the Devaney chaos \cite{Devaney, Banks}, the Kolmogorov-Sinai entropy \cite{Kolmogorov1, Sinai}, the Block-Coppel chaos \cite{Block}, the distributional chaos \cite{Schweizer}. Some of these definitions may show connections one with the other, depending on the topological structure of the space; for example, in \cite{Aulbach} it is shown that, for continuous mappings of a compact interval into itself, Devaney chaos and Block-Coppel chaos are equivalent and imply Li-Yorke chaos, but not vice versa. This last definition of discrete chaos has been given originally for one-dimensional maps; however, it is easily formulated in higher dimensions.

It is worth to mention some recent results. In \cite{Bernardes} Li-Yorke chaos is analytically studied for linear operators on Banach and Fréchet spaces, while \cite{Wang} deeply focuses on a stronger definition of dense uniform Li-Yorke chaos. In \cite{Zhang1} it is shown how Li-Yorke chaos depends on the topology defining the dynamical system. The topic has been explored also for PDEs \cite{Zhang2}. See \cite{Demirovic, Wei} for explicit derivations of Li-Yorke chaos for specific models. An interesting emerging field concerns chaotification techniques for the so-called ``anti-control of chaos''; see \cite{Moysis} for an exhaustive review.

Aim of this work is to provide results for deeper studies about connections between discrete and continuous chaos; in particular, theorems we prove show how several continuous dynamical systems are chaotic in the sense of Li-Yorke if discretized in a proper way, regardless the integrability or chaoticity of the original system. This fact suggests that Li-Yorke chaos has a sort of preferential route from continuous systems to discrete ones.

The paper is organized as follows. In Section \ref{sec_prel} we recall some basic definitions together with Li-Yorke theorem and Marotto-Li-Chen theorem \cite{Marotto, Li-Chen1}; then, we define the discrete map subject of this work together with a simple notion of discretization. In Section \ref{sec_tran} we prove that the system is not topologically transitive for a symmetric choice on the parameters. In Section \ref{sec_theo} we prove the emergence of Li-Yorke chaos for proper choices on the parameters and generalize results for a modular definition of the system, leading to three chaotic discretization theorems for forced linear and nonlinear coupled oscillators. In Section \ref{sec_theo_comment} we comment a possible connection between Li-Yorke chaos and KAM theory \cite{Kolmogorov2, Arnold2, Moser} for quasi-integrable systems. In Section \ref{sec_numerical} we present some numerical results. In Section \ref{sec_numerical_strange} we perform qualitative studies of the orbit of the system, by comparing regular orbits for a symmetric choice on the parameters and the rising of a confined strange attractor in the general case. In Section \ref{sec_numerical_bifurcation} we present 1D and 2D bifurcation diagrams of the system and confirm the rising of period doubling bifurcations through a bifurcation analysis using the MATLAB tool \textsc{MatContM} (\href{https://matcont.sourceforge.io/}{https://matcont.sourceforge.io/}). Furthermore, we present spectra of Lyapunov exponents, that confirm sensitive dependence on initial data for several range of the parameters, together with basins of attraction of the system. Finally, in Section \ref{sec_conclusions} we state our conclusions and propose further developments of this work.

\section{Mathematical preliminaries}\label{sec_prel}

We begin this section by recalling the definition of chaos in the sense of Li-Yorke \cite{Li}, opportunely generalized for higher dimension dynamical systems. In the following we consider a discrete map $f \colon (X, d) \to (X, d)$ defined on a metric space $(X, d)$.
\begin{defi}\label{LY_defi}
A map $f$ is \emph{Li-Yorke chaotic} if there exists an uncountable set $S \subset X$ such that $\forall x, y \in S$, $x \ne y$:
\[
\liminf_{n \to +\infty} d( f^{(n)}(x) \,, f^{(n)}(y) ) = 0 \,,
\]
\[
\limsup_{n \to +\infty} d( f^{(n)}(x) \,, f^{(n)}(y) ) > 0 \,.
\]
\end{defi}
The following fundamental result is an immediate consequence of the main theorem presented in \cite{Li}.
\begin{theo}[Li-Yorke, 1975]\label{LY_theo}
Let $f \colon J \to J$ be continuous on an interval $J \subset \bb{R}$ such that there exists a periodic point of period 3; then, $f$ is Li-Yorke chaotic.
\end{theo}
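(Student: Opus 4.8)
The plan is to run the classical interval-covering argument: extract from the period-$3$ orbit two adjacent compact subintervals of $J$ carrying a covering structure, realise prescribed symbolic itineraries over these two intervals, and then assemble an uncountable scrambled set from a matched family of itineraries. Let $\alpha<\beta<\gamma$ be the three points of a period-$3$ orbit of $f$. Of the two possible $3$-cycle types, consider $f(\alpha)=\beta$, $f(\beta)=\gamma$, $f(\gamma)=\alpha$ (the other type yields the same relations below with the two intervals interchanged, and is handled identically). Put $A=[\alpha,\beta]$ and $B=[\beta,\gamma]$. The intermediate value theorem applied to $f$ on $A$ and on $B$ gives
\[
f(A)\supseteq B,\qquad f(B)\supseteq A\cup B,
\]
and hence, by monotonicity of $X\mapsto f(X)$ under inclusion, $f^{(n)}(B)\supseteq A\cup B$ for every $n\ge1$; note also that $A\cap B=\{\beta\}$, and applying the intermediate value theorem to $f(x)-x$ on $B$ gives a fixed point $q\in B$.

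Next I would record two elementary lemmas. (i) If $G$ is a compact interval with $f(G)\supseteq G$, then $f$ has a fixed point in $G$. (ii) If $\{G_n\}_{n\ge0}$ are compact intervals with $G_{n+1}\subseteq f(G_n)$ for all $n$, then the set of points $x$ with $f^{(n)}(x)\in G_n$ for every $n$ is a nonempty compact interval. Lemma (i) is the intermediate value theorem for $f(x)-x$; for (ii) one sets $B_n=\{x\in G_0:f^{(k)}(x)\in G_k\ \text{for }k\le n\}$ and verifies inductively, using the covering hypothesis to locate a full subinterval in each required preimage, that the $B_n$ form a decreasing sequence of nonempty compact intervals, so that $\bigcap_n B_n$ does the job.

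Combining the covering relations with Lemma (ii), any $\{A,B\}$-valued sequence in which $A$ is never immediately followed by $A$ can be realised as the itinerary, relative to $A$ and $B$, of some point of $J$. For each real $r$ in an uncountable index set I would build such an itinerary $\sigma^{r}$ organised into very long blocks of $B$'s separated by short coding windows, the block lengths and their positions being the same for every $r$ while the coding windows record $r$; let $x_{r}$ be a realising point and set $S=\{x_{r}\}$. Distinct itineraries then disagree at infinitely many times but agree on each long common $B$-block, so $r\mapsto x_{r}$ is injective and $S$ is uncountable. One arranges $\limsup_{n} d(f^{(n)}(x_{r}),f^{(n)}(x_{r'}))>0$ by pushing the two orbits, at the coding times, toward opposite endpoints of $A\cup B$ --- near $\alpha$ for one and near $\gamma$ for the other --- which the relation $f^{(n)}(B)\supseteq A\cup B$ permits; and one arranges $\liminf_{n} d(f^{(n)}(x_{r}),f^{(n)}(x_{r'}))=0$ by driving both orbits, at the ends of the long common blocks, into a common subinterval $V_{k}\subseteq B$ whose diameters tend to $0$.

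The delicate point --- the technical core of Li and Yorke's argument --- is precisely this last clause: since $f$ is merely continuous, ``sharing a long common future block over $B$'' does not by itself force two orbits to be close, so Lemma (ii) cannot simply be applied after the itineraries have been fixed. The resolution is to perform the construction of the nested intervals and of the itineraries \emph{simultaneously}, so that at the end of the $k$-th common block one carves out a strictly smaller common interval $V_{k}$ while keeping a bounded iterate of $f(V_{k})$ large enough both to continue coding $r$ and to reach whichever endpoint of $A\cup B$ the $\limsup$ step next requires; the overlap $A\cap B=\{\beta\}$ and the surjectivity built into the covering relations are what make these competing demands compatible. Alternatively one may first pass to $g=f^{(2)}$, exhibit a horseshoe --- non-overlapping compact intervals $W_{0},W_{1}\subseteq B$ with $g(W_{i})\supseteq W_{0}\cup W_{1}$ for $i=0,1$ --- run the Cantor scheme over two symbols together with the same proximality refinement, and then observe that a scrambled set for $f^{(2)}$ is automatically a scrambled set for $f$, since the $\limsup$ over all $n$ dominates, and the $\liminf$ over all $n$ is dominated by, the corresponding quantity along even $n$.
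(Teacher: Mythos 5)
This theorem is not proved in the paper at all: it is quoted as an immediate consequence of the main result of the cited work of Li and Yorke (1975), so there is no in-paper argument to compare yours against. Your sketch is essentially the classical proof from that source: the covering relations $f(A)\supseteq B$, $f(B)\supseteq A\cup B$ coming from the period-$3$ orbit, the two auxiliary lemmas (a fixed point from $f(G)\supseteq G$, and realization of prescribed itineraries from $G_{n+1}\subseteq f(G_n)$), and an uncountable family of itineraries arranged so that all points of $S$ are squeezed infinitely often into common intervals of vanishing diameter (giving $\liminf=0$) and pushed toward opposite ends of $A\cup B$ infinitely often (giving $\limsup>0$). You also correctly single out the genuine technical core: continuity alone does not convert a shared symbolic block into metric proximity, so the nested intervals and the itineraries must be constructed simultaneously — this is precisely the delicate part of Li and Yorke's own argument, which your proposal names but does not carry out in detail.

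Two small imprecisions to repair. First, in your Lemma (ii) the set $\{x: f^{(n)}(x)\in G_n \ \forall n\}$ is nonempty but need not be an interval, and the sets $B_n$ you define are in general not intervals either (preimages of intervals under a merely continuous map need not be intervals); the statement actually used is that one can choose nested compact intervals $Q_n\subseteq G_0$ with $f^{(n)}(Q_n)=G_n$, whose intersection is nonempty. Second, in the alternative route via $g=f^{(2)}$, one can in general only guarantee two compact subintervals $W_0,W_1\subseteq B$ with disjoint interiors (they may share the preimage of $\beta$), not genuinely non-overlapping ones; this does not harm the scrambled-set construction, and your closing observation that a scrambled set for $f^{(2)}$ is a scrambled set for $f$ is correct. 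With these repairs the outline coincides with the proof in the cited reference.
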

We stress the fundamental consequence of the Li-Yorke theorem: any one-dimensional map defined on intervals that has a periodic point of period 3 will show itself as chaotic in the sense of Li-Yorke, i.e. it will admit a uncountable set of points that are simultaneously close and distant in the sense of Definition \ref{LY_defi}.

Li-Yorke theorem holds for one-dimensional continuous map from a real interval into itself, while in general a $3$-periodic point does not imply Li-Yorke chaos for $n$-dimensional maps. A sufficient condition for the onset of Li-Yorke chaos for $n$-dimensional discrete maps has been given in \cite{Marotto} through the existence of a snap-back repeller fixed point. Marotto theorem presented in \cite{Marotto} originally exhibited some problems in the proof; it has been well formulated and proved by Li-Chen in \cite{Li-Chen1} (see also \cite{Li-Chen2} for further discussions about the theorem).

Here we recall the definition of snap-back repeller and the Marotto-Li-Chen theorem, that we will use for the proof of theorems subject of this work.
\begin{defi}\label{snap_back_def}
It is given the $n$-dimensional discrete dynamical system
\[
f \colon (\bb{R}^n \,, ||\cdot||) \to (\bb{R}^n \,, ||\cdot||)\,, \quad x_{j+1} = f(j_n), \quad j \in \bb{N} \,.
\]
We denote with $J[f](x)$ the Jacobian of $f$ evaluated at a point $x \in \bb{R}^n$ and with $\lambda_i(x)$, $i = 1, \dots, n$ the eigenvalues of $(J^T[f] J[f])(x)$; finally, we denote with $B_r(p) \subset \bb{R}^n$ a closed ball centered in $p \in \bb{R}^n$ with radius $r > 0$. \\A fixed point $p_0$ is a \emph{snap-back repeller} if:
\begin{itemize}
\item $f$ is continuously differentiable in $B_r(p_0)$ for some $r > 0$;
\item $\lambda_i(p_0) > 1$ $\forall i = 1, \dots, n$;
\item $\exists q_0 \in B_{r'}(p_0) = \{ x \colon ||x-p_0|| \le r' \le r$, $\lambda_i(x) > 1$ $\forall i = 1, \dots, n \}$ such that:
\begin{itemize}
\item $q_0 \ne p_0$;
\item $\exists k \ge 1$ such that $f^{(k)}(q_0) = p_0$;
\item $f^{(i)}(q_0) \in B_r(p_0)$ $\forall i = 0, 1, \dots, k$;
\item $\det{ J[f^{(k)}](q_0) } \ne 0$.
\end{itemize}
\end{itemize}
\end{defi}
\begin{theo}[Marotto, 1978; Li-Chen, 2003]\label{MLC_theo}
If a $n$-dimensional discrete map has a snap-back repeller, then it is chaotic in the sense of Li-Yorke.
\end{theo}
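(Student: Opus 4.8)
The plan is to extract from the snap-back data a genuine horseshoe for a high iterate of $f$ and then read the scrambled set off from symbolic dynamics; this is the $n$-dimensional analogue of how a period-$3$ orbit forces a horseshoe for interval maps. Write $p_0$ for the repelling fixed point and $q_0$ for the snap-back point, so $f^{(k)}(q_0)=p_0$. First I would upgrade the pointwise spectral hypothesis to uniform expansion: since the $\lambda_i$ are continuous and strictly larger than $1$ on the compact ball $B_{r'}(p_0)$ — hence, by continuity, on a slightly larger closed ball — integrating $\|J[f](\cdot)v\|\ge\sqrt{\lambda_{\min}}\,\|v\|$ along segments gives a constant $L>1$ with $\|f(x)-f(y)\|\ge L\|x-y\|$ throughout that ball. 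In particular $f$ is injective there, $\det J[f](p_0)\ne 0$, and, since $f(p_0)=p_0$ while $f(\partial B_t(p_0))$ lies entirely outside $B_t(p_0)$, a connectedness argument built on invariance of domain yields $f(B_{r'}(p_0))\supseteq B_{r'}(p_0)$. Hence $f$ carries a well-defined $C^1$ inverse branch $\phi\colon B_{r'}(p_0)\to B_{r'}(p_0)$ fixing $p_0$ and contracting by the factor $L^{-1}<1$.

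Next I would produce a second inverse branch from the homoclinic-type return. Because $f$ is $C^1$ near each $f^{(i)}(q_0)\in B_r(p_0)$ and $\det J[f^{(k)}](q_0)\ne0$, the inverse function theorem furnishes a closed ball $B_\eta(p_0)$ and a $C^1$ inverse branch $\psi$ of $f^{(k)}$ with $\psi(p_0)=q_0$. Fixing a small closed ball $B:=B_\delta(p_0)\subseteq B_\eta(p_0)$ and a large integer $j$, I would set $h_A:=\phi^{(k+j)}|_{B}$ and $h_B:=\phi^{(j)}\circ\psi|_{B}$. Both are injective contractions, and both are inverse branches of $H:=f^{(k+j)}$, since $f^{(k+j)}\circ h_A=\mathrm{id}$ and $f^{(k+j)}\circ h_B=f^{(k)}\circ f^{(j)}\circ\phi^{(j)}\circ\psi=f^{(k)}\circ\psi=\mathrm{id}$. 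For $\delta$ small and $j$ large their images lie in $B$ — $h_A(B)$ is a tiny neighbourhood of $p_0$, while $h_B(B)=\phi^{(j)}(\psi(B))$ is a tiny neighbourhood of $\phi^{(j)}(q_0)\to p_0$ — and, crucially, they are disjoint: applying the injective map $\phi^{(j)}$, this reduces to the disjointness of $\phi^{(k)}(B)$ (a small neighbourhood of $p_0$) and $\psi(B)$ (a small neighbourhood of $q_0$), which holds because $q_0\ne p_0$.

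Finally I would invoke the iterated-function-system picture. Two injective contractions $h_A,h_B\colon B\to B$ with disjoint images, each an inverse branch of $H$, have a Cantor attractor $\Lambda$ with $H(\Lambda)=\Lambda$, and the coding map $\pi\colon\{A,B\}^{\bb{N}}\to\Lambda$ is a homeomorphism conjugating the one-sided $2$-shift $\sigma$ to $H|_{\Lambda}$. The cylinder sets $h_w(B)$ have diameters tending to $0$, so points whose itineraries agree on a long initial block are close, whereas points whose itineraries differ at coordinate $n$ satisfy $\|H^{(n)}x-H^{(n)}y\|\ge\mathrm{dist}(h_A(\Lambda),h_B(\Lambda))>0$. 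Choosing an uncountable family of symbol sequences that pairwise agree on arbitrarily long common blocks but disagree at infinitely many coordinates (standard: the ``eventually equal'' classes of $\{A,B\}^{\bb{N}}$ are countable) and taking their images under $\pi$ produces an uncountable scrambled set for $H=f^{(k+j)}$; since $\{(k+j)\ell:\ell\in\bb{N}\}$ is a subsequence of all iteration times, that same set is scrambled for $f$, which is therefore Li-Yorke chaotic.

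The step I expect to be the main obstacle — and the one where the original proof in \cite{Marotto} was incomplete, later repaired by \cite{Li-Chen1} — is the topological covering $f(B_{r'}(p_0))\supseteq B_{r'}(p_0)$ together with the verification that the inverse branches $\phi,\psi$ and all the iterates appearing in $h_A,h_B$ genuinely remain inside the expanding ball, so that the compositions above are legitimate; because the expansion is only local, this needs invariance of domain (or a degree argument) rather than a volume estimate. The remaining ingredients — the uniform expansion constant, the inverse function theorem, and the symbolic construction of the scrambled set — are routine once this is in place.
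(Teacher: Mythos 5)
The paper itself does not prove Theorem~\ref{MLC_theo}: it is quoted from \cite{Marotto,Li-Chen1}, so your argument can only be measured against those proofs. Your overall architecture --- two contracting inverse branches of a high iterate $H=f^{(k+j)}$ with disjoint ranges inside a small ball, conjugacy of $H$ on the resulting Cantor set with the $2$-shift, an uncountable family of itineraries giving a scrambled set, and the (correct) remark that scrambledness along the subsequence $(k+j)\ell$ transfers to $f$ --- is sound and close to the standard route. The genuine gap is your very first step: the claim that the pointwise hypothesis $\lambda_i(x)>1$ on $B_{r'}(p_0)$ yields a uniform $L>1$ with $\|f(x)-f(y)\|\ge L\|x-y\|$ \emph{throughout that ball}, justified by ``integrating $\|J[f](\cdot)v\|\ge\sqrt{\lambda_{\min}}\|v\|$ along segments''. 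That inference is invalid: $f(x)-f(y)=\int_0^1 J[f](y+t(x-y))(x-y)\,dt$, and a lower bound on the norm of the integrand does not bound the norm of the integral, since the vectors $J[f](\gamma(t))(x-y)$ may rotate and cancel --- the image of a segment can bend back on itself. On a long thin convex domain one can in fact have all singular values exceed $1$ everywhere while two far-apart points are mapped close together (take $F(t,s)=c(t)+\mu s N(t)$ on $[0,T]\times[-\epsilon,\epsilon]$ with $c$ a hairpin-shaped curve of speed $\mu>1$ and small curvature), so the uniform expansion cannot be extracted from the pointwise bound by a segment argument alone. What continuity does give is expansion on \emph{some} sufficiently small ball around $p_0$, by comparing with the constant Jacobian $J[f](p_0)$ rather than with the pointwise bound; but the radius of that ball is not under your control.

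This matters for the rest of your construction, because you need more than expansion near $p_0$: the branch $h_B=\phi^{(j)}\circ\psi$ requires $\phi$ to be defined on a set containing $\psi(B)\ni q_0$, and you need $\phi^{(j)}(q_0)\to p_0$; both presuppose the globally defined contracting inverse branch $\phi\colon B_{r'}(p_0)\to B_{r'}(p_0)$, i.e.\ expansion (and then the covering property) on a ball large enough to contain the snap-back point --- exactly what the unjustified step was supposed to deliver. Incidentally, this, and not the covering $f(B_{r'}(p_0))\supseteq B_{r'}(p_0)$ (a harmless degree argument once expansion is available), is where Marotto's 1978 argument was deficient: eigenvalue conditions do not control norms, which is why \cite{Li-Chen1} reformulates the hypothesis via $J^TJ$ and obtains norm expansion only on a suitable sub-ball containing the homoclinic point, while later corrections either assume norm expansion on $B_r(p_0)$ outright or work with a chain of local inverse branches along the finite orbit $q_0,f(q_0),\dots,f^{(k)}(q_0)=p_0$, using the hypothesis that this orbit stays in the region where the eigenvalue condition holds. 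As written, your proof assumes the hard part; the IFS/symbolic part and the passage from $f^{(k+j)}$ to $f$ are fine.
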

In the vicinity of a snap-back repeller, there exists a set of initial conditions such that they exactly back onto the fixed point, after a certain amount of iterations. Geometrically, this behavior induces to a continuous stretching and folding of the phase space around the snap-back repeller, leading to emergence of chaos.

In this work we will use the following notion of discretization: given an autonomous continuous dynamical system defined by ODEs
\[\begin{split}
&f \colon \bb{R}^k \to \bb{R}^k \,, \\
&\frac{dx(t)}{dt} = f(x(t)) \,, \quad t \in \bb{R} \,,
\end{split}\]
we define its \textit{fixed point discretization} as the discrete map given by
\[\begin{split}
&f \colon \bb{R}^k \to \bb{R}^k \,, \\
&x_{n+1} = f(x_n) \,, \quad n \in \bb{N} \,.
\end{split}\]
Notice that the discretization notion we define is not a standard discretization algorithm for ODE systems, such as explicit or implicit Euler discretization (see, e.g., \cite{Kazakov}), and in general it is not appropriate for the numerical resolution of the associate ODE system. For instance, we can easily prove that for Hamiltonian systems it does not preserve in general the symplectic structure of the phase space, as shown in Proposition \ref{prop_symplectic}.
\begin{prop}\label{prop_symplectic}
The fixed point discretization is not symplectic for general Hamiltonian systems.
\end{prop}
\begin{proof}
We consider an Hamiltonian system defined by the Hamiltonian $H = H(q^{(j)}, p^{(j)})$ for $j = 1, \dots, m$ and the canonical coordinates $\overline{x} = (\overline{q}, \overline{p}) \in \bb{R}^{2m}$. The Hamilton equations are
\beq\begin{cases}
\dot q^{(j)} = \frac{\partial H}{\partial p^{(j)}} \\
\dot p^{(j)} = -\frac{\partial H}{\partial q^{(j)}} \,,
\end{cases}\eeq
therefore the fixed point discretization is given by
\beq\label{fp_hamiltonian}
\begin{cases}
q^{(j)}_{n+1} = \frac{\partial H}{\partial p^{(j)}_n} \\
p^{(j)}_{n+1} = -\frac{\partial H}{\partial q^{(j)}_n} \,.
\end{cases}
\eeq
It is sufficient whether the symplecticity condition is satisfied, that is
\beq
J^T M J = M \,, \quad M =
\begin{pmatrix}
0 & \bb{1}_{m \times m} \\
-\bb{1}_{m \times m} & 0
\end{pmatrix} \,,
\eeq
where J is the Jacobian of the system \eqref{fp_hamiltonian}. A simple calculation is sufficient to see that the symplecticity condition holds only for the specific class of Hamiltonians that satisfy
\beq
C^T A = (C^T A)^T \,, \quad A^T B = (A^T B)^T \,, \quad C^T B - A^T A = \bb{1}_{m \times m} \,,
\eeq
where
\beq\begin{split}
&A_{jk} = \frac{\partial}{\partial q^{(j)}_n} \bigg( \frac{\partial H}{\partial p^{(k)}_n} \bigg) \,, \\
&B_{jk} = \frac{\partial}{\partial p^{(j)}_n} \bigg( \frac{\partial H}{\partial p^{(k)}_n} \bigg) \,, \\
&C_{jk} = \frac{\partial}{\partial q^{(j)}_n} \bigg( \frac{\partial H}{\partial q^{(k)}_n} \bigg) \,.
\end{split}\eeq
\end{proof}
As an example, for one-dimensional Hamiltonian $H = H(q, p)$, $q, p \in \bb{R}$, the fixed point discretization is symplectic only for Hamiltonian with unitary Hessian.

Nevertheless, our aim is not to obtain a discrete approximation of the ODE system, but rather to define an alternative discrete version associated to the continuous system and to explore connections between regular and chaotic behavior of both systems. The discrete system can share some properties with the continuous one (e.g. fixed points); on the other hand, it can exhibit a richer dynamics. Furthermore, the discretization we define can be thought as a novel chaotification procedure that associates a Li-Yorke chaotic discrete system to an integrable continuous one, under suitable constraints on the parameters.

Goal of this paper is to show that a large class of continuous dynamical systems has a fixed point discretization that is Li-Yorke chaotic for non trivial choices on the parameters, regardless the integrability or chaoticity of the continuous system. The starting point will be the discrete map defined in \cite{Disca1, Disca2}; in this work we relax the original constraints on the coefficients and define the \textit{Ziegler discrete map} as
\beq\label{Ziegler_map}
\begin{split}
&\qquad f \colon \bb{R}^4 \to \bb{R}^4 \,, \\
&\begin{cases}
x_{n+1} = y_n \\
y_{n+1} = \alpha x_n + \beta z_n + \gamma \sin(x_n) \\
z_{n+1} = \omega_n \\
\omega_{n+1} = -y_{n+1} - \sigma x_n \,,
\end{cases} \quad \alpha \,, \beta \,, \gamma \,, \sigma \in \bb{R} \,.
\end{split}
\eeq
The map \eqref{Ziegler_map} is obtained by discretizing in the above sense the equations of motion of the Ziegler pendulum in the sense of Polekhin \cite{Polekhin} for a choice on the parameters leading to non-Hamiltonian integrability. From a physical point of view, parameters $\alpha$ and $\sigma$ involve masses and lengths of the pendulum, $\beta$ models an angular elastic potential and $\gamma$ represents the modulus of an external non-conservative force acting on the system. In the following we ignore this physical interpretation, focusing on general ranges on the parameters and emerging of chaotic behavior.

We also define the \textit{modular Ziegler discrete map} as
\beq\label{Ziegler_map_mod}
\begin{split}
&\qquad \tilde f \colon \bb{R}^4 \to [0, 2\pi) \times \bb{R} \times [0, 2\pi) \times \bb{R} \,, \\
&\begin{cases}
x_{n+1} = y_n \quad \mod{2\pi} \\
y_{n+1} = \alpha x_n + \beta z_n + \gamma \sin(x_n) \\
z_{n+1} = \omega_n \quad \mod{2\pi} \\
\omega_{n+1} = -y_{n+1} - \sigma x_n \,.
\end{cases} \quad \alpha \,, \beta \,, \gamma \,, \sigma \in \bb{R} \,.
\end{split}
\eeq

\section{Topological transitivity for the Ziegler map}\label{sec_tran}
We recall the definition of chaos in the sense of Devaney \cite{Devaney}, that is formulated through topological transitivity and density of periodic points.
\begin{defi}
A discrete map $f \colon \bb{R}^k \to \bb{R}^k$ is \emph{topologically transitive} if $\forall V, W \subset \bb{R}^k$ open sets $\exists n \ge 1$ such that $V \cap f^{(n)}(W) \ne \emptyset$.
\end{defi}
\begin{defi}
A discrete map $f$ is \emph{Devaney chaotic} if it is topologically transitive and there exists a dense set of periodic points.
\end{defi}
\begin{rem}
It is worth to notice that the original definition of Devaney was also considering sensitive dependence on initial data, a typical signature of chaotic systems; this condition has been proved to be redundant in \cite{Banks}.
\end{rem}
In this section we reintroduce the Ziegler map as defined in \cite{Disca1, Disca2}, that is
\beq\label{Ziegler_map_original}
\begin{split}
&\qquad f \colon \bb{R}^4 \to \bb{R}^4 \,, \\
&\begin{cases}
x_{n+1} = y_n \\
y_{n+1} = \alpha x_n + \beta z_n + \gamma \sin(x_n) \\
z_{n+1} = \omega_n \\
\omega_{n+1} = \tilde \alpha x_n - \beta z_n - \gamma \sin(x_n) \\
\end{cases} \quad \alpha \,, \tilde \alpha \,, \beta \,, \gamma \in \bb{R} \,.
\end{split}
\eeq
In \cite{Disca2} it is proved that the map \eqref{Ziegler_map_original} is not Devaney chaotic for $\alpha < 0$, $\tilde \alpha > 0$, $\beta > 0$, $\gamma \ne 0$, since it has not dense sets of periodic points; in particular, the map has not even periodic points and the sets of odd periodic points are not dense in $\bb{R}^4$. We show that the remaining necessary condition for the Devaney chaos is not satisfied for a symmetric choice on the parameters.
\begin{prop}
The map \eqref{Ziegler_map_original} is not topologically transitive for $\tilde \alpha = -\alpha$, $\beta \ne 0$.
\end{prop}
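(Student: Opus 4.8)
The plan is to exhibit a single pair of nonempty open sets witnessing the failure of topological transitivity, by showing that every forward image of the map is confined to a fixed hyperplane of $\bb{R}^4$.

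The crucial point is an algebraic identity forced by the symmetry $\tilde\alpha = -\alpha$. Substituting it into the fourth line of \eqref{Ziegler_map_original} and comparing with the second line gives
\[
\omega_{n+1} = -\alpha x_n - \beta z_n - \gamma\sin(x_n) = -\bigl(\alpha x_n + \beta z_n + \gamma\sin(x_n)\bigr) = -\,y_{n+1},
\]
so that $y_{n+1} + \omega_{n+1} = 0$ identically in the initial data. Equivalently, for every $p \in \bb{R}^4$ the point $f(p)$ lies in the hyperplane $H = \{(x,y,z,\omega) \in \bb{R}^4 : y + \omega = 0\}$, and hence $f^{(n)}(\bb{R}^4) \subseteq H$ for all $n \ge 1$.

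I would then take $V = \{(x,y,z,\omega) \in \bb{R}^4 : y + \omega > 0\}$, which is nonempty, open, and disjoint from $H$, and let $W$ be any nonempty open set whatsoever. For every $n \ge 1$ one has $f^{(n)}(W) \subseteq f^{(n)}(\bb{R}^4) \subseteq H$, so $V \cap f^{(n)}(W) = \emptyset$; since this holds for all $n \ge 1$, the pair $(V,W)$ violates the definition of topological transitivity, and the proposition follows.

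I do not expect a real obstacle: the whole argument rests on spotting the invariant hyperplane, after which it is purely structural and, as it stands, does not even use $\beta \ne 0$ (that hypothesis being carried over from the standing assumptions used elsewhere, e.g. for the non-density of periodic points). One could push the same idea one step further — since $x_{n+1} + z_{n+1} = y_n + \omega_n$, every orbit enters, after two iterations, the codimension-two subspace $\{x + z = 0,\ y + \omega = 0\}$, on which $f$ reduces to the planar map $(x,y) \mapsto (y,\,(\alpha-\beta)x + \gamma\sin x)$ — but this refinement is not needed for non-transitivity.
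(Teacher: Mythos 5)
Your proof is correct, and it is genuinely different from (and considerably simpler than) the one in the paper. The paper argues pointwise: for arbitrary open $V,W$ it exploits the $2\pi$-periodicity of $\sin$ together with the cancellation $\tilde\alpha=-\alpha$ to build, for each $p\in V$, a countable family of ``collision'' points $q(p)_k$ whose iterates coincide with those of $p$ (solving for the required shift in $z$ is where $\beta\ne 0$ enters, via the term $2\pi\alpha/\beta$); it then prunes $W$ by removing all preimages of these points, takes the interior to get $\tilde W$, and claims the forward images of $\tilde W$ keep positive distance from $V$ — a construction with several delicate, largely implicit steps (nonemptiness of $\tilde W$, and passing from avoidance of a countable collision set to $d(V,f^{(n+1)}(\tilde W))>0$). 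You instead observe the structural identity $\omega_{n+1}=-y_{n+1}$ forced by $\tilde\alpha=-\alpha$, so that $f(\bb{R}^4)$ — hence every $f^{(n)}(W)$, $n\ge 1$ — lies in the hyperplane $H=\{y+\omega=0\}$, and then any nonempty open $V\subset\{y+\omega>0\}$ together with an arbitrary nonempty open $W$ witnesses the failure of transitivity. This buys a short, robust argument that moreover shows the hypothesis $\beta\ne 0$ is not needed for this particular conclusion (it matters elsewhere in the paper, e.g.\ for the periodic-point and chaos results), and your further remark that after two steps the dynamics collapses onto $\{x+z=0,\ y+\omega=0\}$, where it reduces to the planar map $(x,y)\mapsto(y,(\alpha-\beta)x+\gamma\sin x)$, is a nice bonus, though, as you say, not needed for non-transitivity.
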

\begin{proof}
Let be defined the Ziegler map \eqref{Ziegler_map_original} on the space $(\bb{R}^4, d)$, where $d$ is the distance induced by a certain norm on $\bb{R}^4$. We fix the notation $W_{n+1} := f^{(n+1)}(W)$ and denote with $\overline{W}_{n+1}$ the closure of $W_{n+1}$. Finally, we set $d(A, B) = \inf_{x \in A, y \in B} d(a, b)$ $\forall A, B \subset \bb{R}^4$. We have
\beq
V \cap \overline{W}_{n+1} = \emptyset \iff d(V \,, \overline{W}_{n+1}) > 0 \quad \iff d(p, q) > 0 \quad \forall p \in V \,, q \in \overline{W}_{n+1}
\eeq
 $\forall V, W \in \bb{R}^4$. We set $f^{(n+1)}(p) = (p_x, p_y, p_z, p_\omega)$ with $p \in \bb{R}^4$; then, $\forall p, q \in \bb{R}^4$
\beq\begin{split}
f^{(n+1)}(p) - f^{(n+1)}(q) = \bigg(& p_y - q_y \,, \alpha (p_x - q_x) + \beta (p_z - q_z) + \gamma ( \sin(p_x) - \sin(q_x) ) \,, \\
&p_\omega - q_\omega \,, \tilde \alpha (p_x - q_x) - \beta (p_z - q_z) - \gamma ( \sin(p_x) - \sin(q_x) ) \bigg) \,.
\end{split}\eeq
Therefore, $\forall p \in \bb{R}^4$ and $n \in \bb{N}$ we choose the point $q \in \bb{R}^4$ such that
\beq
p_x = q_x + 2 \pi \,, \quad p_y = q_y \,, \quad p_z = q_z \,, \quad p_\omega = q_\omega \,,
\eeq
so that
\beq
f^{(n+1)}(p) - f^{(n+1)}(q) = \bigg( 0 \,, 2\pi \alpha + \beta (p_z - q_z) \,, 0 \,, 2\pi \tilde \alpha - \beta (p_z - q_z) \bigg) \,.
\eeq
By imposing $\tilde \alpha = -\alpha$, the point $q$ is such that $f^{(n)}(q)= (p_x - 2\pi, p_y, p_y + \frac{2\pi \alpha}{\beta}, p_\omega )$; in particular, $\forall p \in \bb{R}^4$ we can define a sequence $\{ q(p)_k \}_{k \in \bb{Z}} \subset \bb{R}^4$ such that
\beq
f^{(n)}( q(p)_k )= \bigg( p_x - 2 k \pi \,, p_y \,, p_y + \frac{2 k \pi \alpha}{\beta} \,, p_\omega \bigg) \implies f^{(n+1)}(p) - f^{(n+1)}(q) = 0 \,.
\eeq
For $n \in \bb{N}$ and $p \in V$, $q \in W_{n+1}$, $d(p, q) > 0 \iff q \not\in \{ q(p)_k \}_{k \in \bb{Z}}$; hence, we define
\beq
\tilde {W}_p = W \setminus \bigcup_{n \in \bb{N} \,, k \in \bb{Z}} \{ f^{(-n)}( q(p)_k ) \}
\eeq
and finally
\beq
\tilde W = \Int \bigg( \bigcup_{p \in V} \tilde W_p \bigg) \,,
\eeq
where $\Int(V)$ is the interior of the set $V$. By definition, $V$ and $\tilde W$ are open sets; by construction, $d( V, f^{(n+1)}(\tilde W) ) > 0$ $\forall n \in \bb{N}$ $\implies$ $V \cap \overline{ f^{(n+1)}(\tilde W) } = \emptyset$ $\forall n \in \bb{N}$ $\implies$ $V \cap f^{(n+1)}(\tilde W) = \emptyset$ $\forall n \in \bb{N}$.
\end{proof}

\section{Chaotic discretization theorems}\label{sec_theo}
In this section we focus on Li-Yorke chaos for the maps \eqref{Ziegler_map} and \eqref{Ziegler_map_mod}; in particular, we will show that perturbative formulations of the maps admit continuous set of values for the parameters such that systems are Li-Yorke chaotic for arbitrary small perturbations.

Before going into the problem, we derive the following simple result.
\begin{prop}\label{prop_link}
If $(x_0, y_0, z_0, \omega_0) \in (0, 2\pi)^4$ and
\beq\begin{cases}\label{constraint_link}
\alpha > 0, \beta > 0, \gamma \in (-2\pi, 0) \\
\sigma < -1 \\
\alpha + \beta < 1 + \frac{\gamma}{2\pi} \,,
\end{cases}\eeq
the map \eqref{Ziegler_map} is Li-Yorke chaotic if and only if the map \eqref{Ziegler_map_mod} is Li-Yorke chaotic.
\end{prop}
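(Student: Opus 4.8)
The plan is to exploit the elementary fact that the maps \eqref{Ziegler_map} and \eqref{Ziegler_map_mod} \emph{coincide} wherever the angular coordinates $x$ and $z$ already lie in $[0,2\pi)$, and then to use the constraints \eqref{constraint_link} to guarantee that the orbits relevant to Li--Yorke chaos never leave such a region; in that case an uncountable scrambled set (Definition \ref{LY_defi}) for one map is, together with its forward orbit, a scrambled set for the other, which yields the equivalence.

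First I would write $\tilde f=\rho\circ f$, where $\rho\colon\bb R^4\to[0,2\pi)\times\bb R\times[0,2\pi)\times\bb R$ reduces the first and third coordinates modulo $2\pi$ and leaves the other two untouched. Since $\rho$ restricts to the identity on the slab $\Sigma:=[0,2\pi)\times\bb R\times[0,2\pi)\times\bb R$, one has $f\equiv\tilde f$ on $\Sigma\cap f^{-1}(\Sigma)$, hence $f^{(n)}(p)=\tilde f^{(n)}(p)$ for every $p$ whose entire forward $f$--orbit stays in $\Sigma$; moreover the Euclidean metric of $\bb R^4$ and the metric of the cylinder agree on $\Sigma$. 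It must be stressed that $\rho$ is \emph{not} a semiconjugacy on all of $\bb R^4$: a $2\pi$--winding in $x$ or $z$ displaces the $y$-- and $\omega$--coordinates of the image by amounts proportional to $\alpha,\beta,\sigma$, which need not belong to $2\pi\bb Z$. This is exactly why the statement localises the initial data to $(0,2\pi)^4\subset\Sigma$.

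Next I would use \eqref{constraint_link} to confine the orbits. From $(x_0,y_0,z_0,\omega_0)\in(0,2\pi)^4$ one gets $x_1=y_0\in(0,2\pi)$ and $z_1=\omega_0\in(0,2\pi)$, and, as long as $x_n,z_n$ range in a fixed compact interval, the assumptions $\alpha,\beta>0$, $\gamma\in(-2\pi,0)$ give
\[
y_{n+1}=\alpha x_n+\beta z_n+\gamma\sin x_n<2\pi(\alpha+\beta)+|\gamma|,
\]
and the third inequality of \eqref{constraint_link}, $\alpha+\beta<1+\gamma/(2\pi)$, is precisely $2\pi(\alpha+\beta)+|\gamma|<2\pi$; combined with $\sigma>1$ in $\omega_{n+1}=-y_{n+1}-\sigma x_n$ and the damping $\beta\in(0,1)$ visible in the second--difference recursion $z_{n+2}=-\beta z_n-(\alpha+\sigma)x_n-\gamma\sin x_n$, these bounds trap the forward orbit of any point of $(0,2\pi)^4$ in an explicit compact box $\cl B\subset\bb R^4$ (and likewise for $\tilde f$ on the cylinder). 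The real content of the proposition is that, under \eqref{constraint_link}, this confinement is strong enough that the reduction $\rho$ intertwines $f|_{\cl B}$ with $\tilde f$ on the corresponding set of the cylinder: concretely, I would show that the pieces of orbit carrying the Li--Yorke structure --- if one argues through Theorem \ref{MLC_theo}, the repelling fixed point, the homoclinic orbit of the snap--back repeller and a small ball around it --- can be chosen inside $(0,2\pi)^4$, where $f$ and $\tilde f$ are literally the same map. Then on that common region $f^{(n)}$ and $\tilde f^{(n)}$ coincide for all $n$ and the two metrics agree, so an uncountable scrambled set of one map contained in $(0,2\pi)^4$ is an uncountable scrambled set of the other, giving both implications simultaneously.

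The step I expect to be the main obstacle is this confinement: because $\rho$ fails to be a semiconjugacy, the argument genuinely requires that no $2\pi$--winding of the angular variables occurs along the orbits that matter, and the delicate point is the $(z,\omega)$--plane, where the iterates can a priori leave $[0,2\pi)$ before the contraction $\beta<1$ and the margin $2\pi(\alpha+\beta)+|\gamma|<2\pi$ coming from \eqref{constraint_link} pull them back. This is the only place where the precise shape of the constraints --- not merely boundedness of the parameters --- is used, and keeping track of these excursions is the technical heart of the proof.
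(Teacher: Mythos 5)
Your overall strategy coincides with the paper's: under \eqref{constraint_link} the orbit issued from $(0,2\pi)^4$ should never leave the region where the reductions modulo $2\pi$ act trivially, so that \eqref{Ziegler_map} and \eqref{Ziegler_map_mod} generate literally the same orbits and Li--Yorke chaos of one is equivalent to that of the other. The paper's proof is exactly that confinement estimate, carried out coordinate by coordinate: $y_n\in(0,2\pi)$ from $2\pi(\alpha+\beta)+|\gamma|<2\pi$ together with $\alpha,\beta>0$, $\gamma<0$; then the $\omega$-coordinate; then $x_n,z_n$; concluding that $f$ restricted to initial data in $(0,2\pi)^4$ equals $\tilde f$.

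The gap is that this confinement --- which you yourself call ``the technical heart'' --- is never actually carried out in your proposal, and the substitutes you offer do not deliver it. Trapping the orbit in some compact box $\cl B\subset\bb R^4$ is not enough: you need $x_n$ and $z_n$ (hence $y_n$ and $\omega_n$, which become the next $x$ and $z$) to lie in $[0,2\pi)$ \emph{exactly}, since otherwise $\rho$ fails to intertwine the two maps, as you note yourself. The inequalities you quote do not give that: with $\alpha,\beta>0$, $\gamma\in(-2\pi,0)$ and $x_n,z_n\in(0,2\pi)$ one gets $y_{n+1}>0$, and then $\sigma>1$ forces $\omega_{n+1}=-y_{n+1}-\sigma x_n<0$, so the unreduced map immediately produces $z_{n+2}=\omega_{n+1}<0$ and exits the slab; your ``damping'' recursion $z_{n+2}=-\beta z_n-(\alpha+\sigma)x_n-\gamma\sin x_n$ has the same defect, its right-hand side being easily negative under the stated hypotheses. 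This is precisely the delicate coordinate: the paper's own computation controls $\omega_n$ through inequalities that in fact require the opposite sign convention on $\sigma$ (the proof there derives the condition $\sigma<-1$), so the sign bookkeeping in the $(z,\omega)$-plane is where the whole content of the proposition lies and cannot be waved through. Finally, your fallback --- choosing the repelling fixed point, the snap-back orbit and a small ball around it inside $(0,2\pi)^4$ and invoking Theorem \ref{MLC_theo} --- cannot replace the confinement argument: the proposition is an equivalence between the chaoticity of the two maps, not a sufficient condition for chaos of either, so it must follow from the coincidence of the two dynamics on the relevant domain, which is exactly the step left unproven.
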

\begin{proof}
Given the map \eqref{Ziegler_map}, we prove that constraints \eqref{constraint_link} are sufficient for $(x_n, y_n, z_n, \omega_n)$ to be confined in $(0, 2\pi)^4$ $\forall n \ge 2$. \\From the second equation in \eqref{Ziegler_map} we get
\beq
y_{n+2} = \alpha x_{n+1} + \beta z_{n+1} + \gamma \sin(x_{n+1}) = \alpha \tilde y_n + \beta \tilde \omega_n + \gamma \sin(\tilde y_n) \,.
\eeq
It follows
\beq
y_{n+2} \le |\alpha| |\tilde y_n| + |\beta| |\tilde \omega_n| + |\gamma| \le 2\pi (|\alpha| + |\beta|) + |\gamma| \,,
\eeq
therefore
\beq
|\alpha| + |\beta| < 1 - \frac{|\gamma|}{2\pi} \implies y_{n} < 2\pi \quad \forall n \ge 2
\eeq
(notice that it must be $|\gamma| < 2\pi$). On the other hand, if $\alpha > 0$, $\beta > 0$, we have
\beq
y_{n+2} = \alpha \tilde y_n + \beta \tilde \omega_n + \gamma \sin(\tilde y_n) \ge -\gamma \,,
\eeq
therefore
\beq
\alpha > 0, \beta > 0, \gamma < 0 \implies y_{n} > 0 \quad \forall n \ge 2 \,.
\eeq
Analogously, from the fourth equation in \eqref{Ziegler_map} we get
\beq
\omega_{n+2} = -y_{n+2} - \sigma x_{n+1} = -y_{n+2} - \sigma \tilde y_n \,.
\eeq
It follows
\beq
\omega_{n+2} < 2\pi \iff -y_{n+2} < 2\pi + \sigma \tilde y_n < 2\pi (1 + \sigma) < 2\pi \iff \sigma < -1 \,,
\eeq
therefore
\beq
\sigma < -1 \implies \omega_{n} < 2\pi \quad \forall n \ge 2 \,.
\eeq
On the other hand, we have
\beq
\omega_{n+2} > -2\pi(1 + \sigma) > 0 \iff \sigma < -1 \,,
\eeq
therefore
\beq
\sigma < -1 \implies \omega_{n} > 0 \quad \forall n \ge 2 \,.
\eeq
If $(x_0, y_0, z_0, \omega_0) \in (0, 2\pi)^4$, the previous reasoning can be applied also to $y_1, \omega_{1}$. Finally, $x_{n+1}$, $z_{n+1} \in [0, 2\pi)$ $\forall n \ge 1$ by definition; since again $(x_0, y_0, z_0, \omega_0) \in (0, 2\pi)^4$, $x_n \in (0, 2\pi)$, $z_n \in (0, 2\pi)$ $\forall n \ge 1$. \\Finally we conclude that, if \eqref{constraint_link} holds, then
\beq
f \bigg|_{ (x_0, y_0, z_0, \omega_0) \in (0, 2\pi)^4 } = \tilde f \,.
\eeq
\end{proof}
Now, we want to derive proper constraints on parameters and initial conditions such that the map \eqref{Ziegler_map} satisfies Theorem \ref{MLC_theo}. Unfortunately, \eqref{Ziegler_map} does not satisfy one of conditions stated in Theorem \ref{MLC_theo} in any case; indeed, by setting $A = A(x) := \alpha + \gamma \cos(x)$, the Jacobian of \eqref{Ziegler_map} is
\beq
J[f](x) =
\begin{pmatrix}
0 & 1 & 0 & 0 \\
A & 0 & \beta & 0 \\
0 & 0 & 0 & 1 \\
-A - \sigma & 0 & -\beta & 0
\end{pmatrix} \,,
\eeq
so we have
\beq
(J^T J)[f](x) =
\begin{pmatrix}
A^2 + (A+\sigma)^2 & 0 & \beta(2A + \sigma) & 0 \\
0 & 1 & 0 & 0 \\
A^2 + (A+\sigma)^2 & 0 & 2 \beta^2 & 0 \\
0 & 0 & 0 & 1
\end{pmatrix} \,.
\eeq
Therefore, $\lambda = 1$ is eigenvalue of $J^T J[f](x)$ $\forall x \in \bb{R}^4$.

In order to overcome this problem, we define the following perturbed formulation of \eqref{Ziegler_map}:
\beq\label{Ziegler_map_pert}
f \colon \bb{R}^4 \to \bb{R}^4 \,, \quad
\begin{cases}
x_{n+1} = (1 + \varepsilon_1) y_n + \varepsilon_2 x_n \\
y_{n+1} = \alpha x_n + \beta z_n + \gamma \sin(x_n) \\
z_{n+1} = (1 + \varepsilon_1) \omega_n \\
\omega_{n+1} = -y_{n+1} - \sigma x_n \,,
\end{cases}
\eeq
and derive our results for the map \eqref{Ziegler_map_pert} with arbitrary small values of $\varepsilon_1$, $\varepsilon_2$; naturally, \eqref{Ziegler_map_pert} reduces to \eqref{Ziegler_map} for $\varepsilon_1, \varepsilon_2 \to 0^+$. From a physical point of view, the coefficients $\varepsilon_1$, $\varepsilon_2$ model a (small) negative-dumping mechanism acting on the system.

We divide the proof in a few steps, by analyzing separately the sufficient conditions stated in Theorem \ref{MLC_theo} and deriving suitable constraints on the parameters.
\begin{lem}[continuous differentiability]\label{lem_diff}
The maps \eqref{Ziegler_map_pert} is continuously differentiable.
\end{lem}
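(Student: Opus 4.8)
The plan is simply to read off each component of $f$ from \eqref{Ziegler_map_pert} and recognize it as built, by finitely many sums and one composition, out of functions that are already known to be smooth. First I would note that the first, third, and fourth equations are affine in $(x_n, y_n, z_n, \omega_n)$ — the fourth after substituting the expression for $y_{n+1}$ — while the second equation is an affine function plus the single term $\gamma\sin(x_n)$. Since affine maps and $\sin$ are $C^\infty(\bb{R})$, and the classes $C^\infty$ and $C^1$ are preserved under finite sums and composition, every component of $f$ is $C^\infty$ on all of $\bb{R}^4$; hence $f \in C^1(\bb{R}^4;\bb{R}^4)$, which is the claim.

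To make the verification concrete I would exhibit the Jacobian explicitly. Setting $A = A(x) := \alpha + \gamma\cos(x)$ as in the text, and differentiating $\omega_{n+1} = -y_{n+1} - \sigma x_n = -(\alpha+\sigma)x_n - \beta z_n - \gamma\sin(x_n)$ in the last row, one obtains
\beq
J[f](x, y, z, \omega) =
\begin{pmatrix}
\varepsilon_2 & 1+\varepsilon_1 & 0 & 0 \\
A & 0 & \beta & 0 \\
0 & 0 & 0 & 1+\varepsilon_1 \\
-A-\sigma & 0 & -\beta & 0
\end{pmatrix} \,,
\eeq
which reduces to the Jacobian of \eqref{Ziegler_map} displayed above when $\varepsilon_1, \varepsilon_2 \to 0^+$. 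Every entry of $J[f]$ is either a real constant or the function $x \mapsto \alpha + \gamma\cos(x)$, and the latter is continuous on $\bb{R}$; continuity of all entries of $J[f]$ on $\bb{R}^4$ is precisely the assertion that $f$ is continuously differentiable.

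I do not expect any genuine obstacle: the lemma is an immediate consequence of the elementary closure properties of smooth functions, and the only reason to isolate it is bookkeeping — it supplies the first of the three bullet conditions in Definition \ref{snap_back_def} that must be checked when one later verifies, in the subsequent lemmas, that a candidate fixed point of \eqref{Ziegler_map_pert} is a snap-back repeller. The single point worth emphasizing is that the regularity obtained is global on $\bb{R}^4$, so in particular it holds on every closed ball $B_r(p_0)$, which is the form in which it is invoked there.
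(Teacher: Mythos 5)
Your argument is correct and matches the paper's proof, which simply observes that \eqref{Ziegler_map_pert} is a sum (and composition) of continuously differentiable functions; your explicit display of the Jacobian is additional bookkeeping the paper defers to Lemma \ref{lem_jacobian} but introduces no new idea.
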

\begin{proof}
The proof is immediate, since \eqref{Ziegler_map_pert} is sum of continuously differentiable functions.
\end{proof}
\begin{lem}[existence of fixed points]\label{lem_fixed}
There exist non-vanishing fixed points for the map \eqref{Ziegler_map_pert}, provided that
\beq\label{constraint_fixed}
\inf_{x \in \bb{R} }{ \frac{\sin(x)}{x} } \le \frac{ 1 - \alpha - \beta (1 + \sigma) }{\gamma} \le 1
\eeq
and $\varepsilon_1, \varepsilon_2$ sufficiently small.
\end{lem}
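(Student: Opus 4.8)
The plan is to collapse the fixed-point system of \eqref{Ziegler_map_pert} to a single scalar transcendental equation, to solve it for $\varepsilon_1=\varepsilon_2=0$ by the intermediate value theorem, and then to carry the solution over to small $\varepsilon_1,\varepsilon_2$ by continuity. Concretely, a fixed point $(x,y,z,\omega)$ of \eqref{Ziegler_map_pert} must satisfy $x=(1+\varepsilon_1)y+\varepsilon_2 x$, $y=\alpha x+\beta z+\gamma\sin x$, $z=(1+\varepsilon_1)\omega$ and $\omega=-y-\sigma x$. From the first relation $y=\frac{1-\varepsilon_2}{1+\varepsilon_1}\,x$ (legitimate for $\varepsilon_1$ near $0$), and from the last two $z=-(1+\varepsilon_1)(y+\sigma x)$. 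Substituting both into the second relation eliminates $y$, $z$ and $\omega$ and leaves a single equation $G_{\varepsilon_1,\varepsilon_2}(x)=0$ of the form $c(\varepsilon_1,\varepsilon_2)\,x=\gamma\sin x$, where $c$ depends continuously on $(\varepsilon_1,\varepsilon_2)$ and, as $\varepsilon_1,\varepsilon_2\to 0^{+}$, reduces to the equation encoded in \eqref{constraint_fixed}, namely $\frac{\sin x}{x}=\frac{1-\alpha-\beta(1+\sigma)}{\gamma}$ for $x\neq 0$. Any root $x^{\ast}\neq 0$ of this equation produces, by back-substitution for $y,z,\omega$, a fixed point with first component $x^{\ast}\neq 0$, hence a non-vanishing one.

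For the unperturbed problem I would study $\phi(x):=\frac{\sin x}{x}$, which extends continuously to $\bb{R}$ with $\phi(0)=1$, is even, and attains its global minimum $m:=\inf_{x\in\bb{R}}\frac{\sin x}{x}<0$. On $(0,+\infty)$ the function $\phi$ is continuous and assumes every value of the interval $[m,1)$. Hence, whenever $m\le\frac{1-\alpha-\beta(1+\sigma)}{\gamma}<1$ — which is precisely condition \eqref{constraint_fixed}, up to the upper endpoint — the intermediate value theorem gives $x^{\ast}>0$ with $\phi(x^{\ast})=\frac{1-\alpha-\beta(1+\sigma)}{\gamma}$, and $x^{\ast}$ can be taken to be a transversal crossing, $\phi'(x^{\ast})\neq 0$, except when the level equals $m$ exactly. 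This already settles the statement for $\varepsilon_1=\varepsilon_2=0$, i.e. for the map \eqref{Ziegler_map}.

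To pass to $\varepsilon_1,\varepsilon_2\neq 0$ I would use a sign-change argument. Since $c(\varepsilon_1,\varepsilon_2)\to c(0,0)$, the family $G_{\varepsilon_1,\varepsilon_2}$ converges to $G_{0,0}$ uniformly on compact intervals. Picking a closed interval $[a,b]$ with $0\notin[a,b]$ on which $G_{0,0}$ strictly changes sign (such an interval exists around a transversal root $x^{\ast}$), for all $(\varepsilon_1,\varepsilon_2)$ sufficiently small $G_{\varepsilon_1,\varepsilon_2}$ still changes sign on $[a,b]$ and therefore vanishes at some $x^{\ast}(\varepsilon_1,\varepsilon_2)\in[a,b]$, which stays bounded away from $0$; back-substitution then produces a non-vanishing fixed point of \eqref{Ziegler_map_pert} for $\varepsilon_1,\varepsilon_2$ small. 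Alternatively, when $\phi'(x^{\ast})\neq 0$, one can apply the implicit function theorem to $G$ at $(x^{\ast},0,0)$ to obtain a $C^{1}$ branch $(\varepsilon_1,\varepsilon_2)\mapsto x^{\ast}(\varepsilon_1,\varepsilon_2)$ of roots with $x^{\ast}(0,0)=x^{\ast}$.

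The step I expect to be the main obstacle is exactly this last transfer of the root through the perturbation, carried out uniformly in $\varepsilon_1,\varepsilon_2$ while avoiding the two degenerate configurations. If $\frac{1-\alpha-\beta(1+\sigma)}{\gamma}=1$ the unperturbed equation has only the trivial solution $x=0$, so one must either take the inequality strict or look for the fixed point on a different branch; and if the ratio equals $m$, the zero of $G_{0,0}$ is tangential, the implicit function theorem does not apply, and the sign-change interval degenerates, so one has to argue more carefully — for instance by perturbing within the open range $(m,1)$, or by using that levels slightly above $m$ are crossed transversally at two nearby points that persist under the perturbation. The remaining ingredients — the elimination of $y,z,\omega$, the continuity of $c$ in $(\varepsilon_1,\varepsilon_2)$, and the finiteness of the recovered $y^{\ast},z^{\ast},\omega^{\ast}$ — are routine.
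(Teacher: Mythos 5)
Your proposal follows essentially the same route as the paper: reduce the fixed-point system to the scalar equation $\gamma\sin x = c\,x$, solve it under \eqref{constraint_fixed} for $\varepsilon_1=\varepsilon_2=0$, and carry the nonzero root over to small $\varepsilon_1,\varepsilon_2$ by continuity (the paper disposes of this last step with a one-line continuity remark, whereas you spell it out via uniform convergence plus a sign-change or implicit-function argument). Your flagged caveats at the two endpoints of \eqref{constraint_fixed} — only the trivial root when the level equals $1$, and a tangential, possibly non-persistent root when it equals $\inf_{x}\sin(x)/x$ — are real and are in fact conceded but glossed over in the paper's own proof, so they do not constitute a gap relative to it.
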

\begin{proof}
Fixed points of \eqref{Ziegler_map} are known from \cite{Disca1}; by considering the new definitions given in \eqref{Ziegler_map} and \eqref{Ziegler_map_mod}, we can write them through the solutions of
\beq\label{eq_fixed}
\sin(y_0) = \frac{ 1 - \alpha - \beta (1 + \sigma) }{\gamma} y_0 \,,
\eeq
so if $y_0$ is a solution of \eqref{eq_fixed}, then $(x_0, y_0, z_0, \omega_0) = (y_0, y_0, -(1 + \sigma) y_0, -(1+\sigma) y_0 )$ is a fixed point of \eqref{Ziegler_map}. By imposing \eqref{constraint_fixed}, the equation \eqref{eq_fixed} always admits solutions in $\bb{R}$: only $y_0 = 0$ if the right equality holds, an odd number of symmetric solutions that increases as $1 - \alpha - \beta(1+\sigma)$ approaches to zero, all the integer multiples of $\pi$ if the term is equal to zero, two symmetric solutions if the left equality holds. For continuity, result holds for \eqref{Ziegler_map_pert} and $\varepsilon_1$, $\varepsilon_2$ sufficiently small.
\end{proof}
\begin{lem}[eigenvalues of $J^T J$ greater than $1$]\label{lem_jacobian}
Let $J$ be the Jacobian of \eqref{Ziegler_map_pert}. $\forall \varepsilon_1, \varepsilon_2 > 0$, there exist $\overline \sigma (\varepsilon_1, \varepsilon_2) \in \bb{R}$ and a fixed point $p_0$ such that the matrix $J^T J(p_0)$ has all eigenvalues greater than $1$, provided that $|\beta| > \frac{1}{\sqrt{2}}$ and $|\sigma| > \overline \sigma$.
\end{lem}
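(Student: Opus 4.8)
The plan is to recast the eigenvalue condition as positive definiteness of $(J^TJ)(p_0)-I$ and then verify it with Sylvester's criterion, exploiting the sparse structure of the Jacobian. Writing $A=A(x):=\alpha+\gamma\cos(x)$ and $\mu:=(1+\varepsilon_1)^2$, the Jacobian of \eqref{Ziegler_map_pert} is
\beq
J[f](x)=\begin{pmatrix}\varepsilon_2 & 1+\varepsilon_1 & 0 & 0\\ A & 0 & \beta & 0\\ 0 & 0 & 0 & 1+\varepsilon_1\\ -A-\sigma & 0 & -\beta & 0\end{pmatrix},
\eeq
so that
\beq
G:=(J^TJ)[f](x)=\begin{pmatrix}\varepsilon_2^2+A^2+(A+\sigma)^2 & \varepsilon_2(1+\varepsilon_1) & \beta(2A+\sigma) & 0\\ \varepsilon_2(1+\varepsilon_1) & \mu & 0 & 0\\ \beta(2A+\sigma) & 0 & 2\beta^2 & 0\\ 0 & 0 & 0 & \mu\end{pmatrix}.
\eeq
The last coordinate axis is an eigenvector of $G$ with eigenvalue $\mu>1$, since $\varepsilon_1>0$; this is precisely the reason for the perturbation, because for $\varepsilon_1=\varepsilon_2=0$ the value $1$ is an eigenvalue of $G$ at every point. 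It therefore suffices to show that the $3\times3$ leading principal block $M$ of $G$ (on the first three coordinates) satisfies $M-I$ positive definite.

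I would check this by Sylvester's criterion applied to $M-I$ after the symmetric reordering of coordinates that puts the two cheap diagonal entries first. The first leading minor is $\mu-1=2\varepsilon_1+\varepsilon_1^2>0$; the second is $(2\varepsilon_1+\varepsilon_1^2)(2\beta^2-1)$, positive exactly because $|\beta|>\tfrac{1}{\sqrt{2}}$ forces $2\beta^2-1>0$ — this is the only use of that hypothesis. The third leading minor is $\det(M-I)$; expanding along the first row and simplifying, one obtains
\beq
\det(M-I)=(\mu-1)\big[(\beta^2-1)\sigma^2-2A\sigma-2A^2-(2\beta^2-1)\big]-(2\beta^2-1)\,\varepsilon_2^2 .
\eeq

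It remains to make this positive for a suitable fixed point $p_0$ and $\sigma$ beyond a threshold. Since $A(p_0)=\alpha+\gamma\cos(y_0)\in[\alpha-|\gamma|,\alpha+|\gamma|]$ stays bounded uniformly in $\sigma$, the bracket in the last display is, to leading order, $(\beta^2-1)\sigma^2$; I would then fix a fixed point via Lemma \ref{lem_fixed} (equivalently, a solution $y_0$ of \eqref{eq_fixed}) and choose $\overline{\sigma}(\varepsilon_1,\varepsilon_2)$ large enough that $\det(M-I)>0$ for all $\sigma>|\overline{\sigma}|$. Together with the first two minors, Sylvester's criterion then yields $M-I$ positive definite, so all four eigenvalues of $G(p_0)$ exceed $1$. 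Finally, since $x\mapsto\lambda_i\big(G(x)\big)$ is continuous, these strict inequalities persist on a ball $B_{r'}(p_0)$, as required for the snap-back repeller of Definition \ref{snap_back_def}.

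The step I expect to be the main obstacle is controlling $\det(M-I)$: its leading term in $\sigma$ is $(\mu-1)(\beta^2-1)\sigma^2$, so positivity of the whole expression for large $|\sigma|$ hinges on a careful interplay between the size of $|\beta|$, the sign of $A(p_0)$, and the threshold $\overline{\sigma}$ — in particular on choosing the fixed point $p_0$ so that this interplay is favourable. Everything else — continuous differentiability (Lemma \ref{lem_diff}), the decoupled eigenvalue $\mu>1$, and the first two leading minors — is routine.
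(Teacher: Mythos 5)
Your determinant identity is correct (I checked: $\det(M-I)=(\mu-1)\bigl[(\beta^2-1)\sigma^2-2A\sigma-2A^2-(2\beta^2-1)\bigr]-(2\beta^2-1)\varepsilon_2^2$), but the step you flag as ``the main obstacle'' is a genuine gap, and as written your plan cannot close it. You propose to take a fixed point from Lemma \ref{lem_fixed}, observe that $A(p_0)=\alpha+\gamma\cos(y_0)$ is bounded uniformly in $\sigma$, and then push $\sigma$ past a threshold. With $A$ bounded, the bracket behaves like $(\beta^2-1)\sigma^2$, so for the parameter range $\tfrac{1}{\sqrt{2}}<|\beta|<1$ allowed by the lemma this leading term is \emph{negative}: $\det(M-I)\to-\infty$ as $\sigma\to\infty$, Sylvester's criterion fails, and in fact $J^TJ(p_0)$ then has an eigenvalue below $1$. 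Enlarging $\overline\sigma$ makes matters worse, not better; your argument only proves the weaker statement with $|\beta|>1$.

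The missing idea is that the fixed point must be chosen \emph{adapted to} $\sigma$, not independently of it. The paper imposes, in addition to the fixed-point equation \eqref{eq_fixed}, the condition $\cos(y_0)=-\frac{\sigma+2\alpha}{2\gamma}$, i.e.\ $2A(y_0)+\sigma=0$ (the compatibility of the two conditions is equation \eqref{eq_tan}, which always has solutions). Then $A=-\sigma/2$ grows with $\sigma$; note this is exactly the maximizer of your bracket viewed as a quadratic in $A$, and substituting it turns the bracket into $(2\beta^2-1)\bigl(\tfrac{\sigma^2}{2}-1\bigr)$, which is where the threshold $|\beta|>\tfrac{1}{\sqrt{2}}$ actually comes from. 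Moreover this choice kills the off-diagonal entry $\beta(2A+\sigma)$, so $J^TJ(p_0)$ becomes block diagonal and the eigenvalues can be read off directly ($2\beta^2$, $(1+\varepsilon_1)^2$, and the two eigenvalues of the remaining $2\times2$ block, all exceeding $1$ for $\sigma>|\overline\sigma(\varepsilon_1,\varepsilon_2)|$), so Sylvester's criterion is not even needed. With that substitution your computation goes through; without it, it does not reach the stated hypothesis on $\beta$.
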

\begin{proof}
By setting $A = A(x) := \alpha + \gamma \cos(x)$, the Jacobian of \eqref{Ziegler_map_pert} is
\beq
J[f](x) =
\begin{pmatrix}
\varepsilon_2 & 1+\varepsilon_1 & 0 & 0 \\
A & 0 & \beta & 0 \\
0 & 0 & 0 & 1+\varepsilon_1 \\
-A - \sigma & 0 & -\beta & 0
\end{pmatrix} \,,
\eeq
so we have
\beq
(J^T J)[f](x) =
\begin{pmatrix}
\varepsilon_2^2 + A^2 + (A+\sigma)^2 & \varepsilon_2(1+\varepsilon_1) & \beta(2A + \sigma) & 0 \\
\varepsilon_2(1+\varepsilon_1) & (1+\varepsilon_1)^2 & 0 & 0 \\
\beta(2A + \sigma) & 0 & 2 \beta^2 & 0 \\
0 & 0 & 0 & (1+\varepsilon_1)^2
\end{pmatrix} \,.
\eeq
Now, given the fixed points stated in Lemma \ref{lem_fixed}, we choose $x_0 = y_0$ and parameters such that $2 A(y_0) + \sigma = 0$; this is equivalent to solve the system
\beq\begin{cases}\label{eq_tan_syst}
\sin(y_0) = \frac{ 1 - \alpha - \beta (1 + \sigma) }{\gamma} y_0 \\
\cos(y_0) = -\frac{\sigma + 2 \alpha}{2 \gamma} \,,
\end{cases}\eeq
i.e.
\beq\label{eq_tan}
\tan(y_0) = 2 \frac{ \alpha + \beta(1+\sigma) - 1 }{\sigma + 2 \alpha} y_0 \,.
\eeq
Equation \eqref{eq_tan} admits always solution; for the singular cases $\sigma = -2\alpha$, $y_0 = \frac{\pi}{2} + 2k\pi$, $k \in \bb{Z}$ return to the system \eqref{eq_tan_syst}. Therefore, we have
\beq
(J^T J)[f](y_0) =
\begin{pmatrix}
\varepsilon_2^2 \sigma^2 & \varepsilon_2(1+\varepsilon_1) & 0 & 0 \\
\varepsilon_2(1+\varepsilon_1) & (1+\varepsilon_1)^2 & 0 & 0 \\
0 & 0 & 2 \beta^2 & 0 \\
0 & 0 & 0 & (1+\varepsilon_1)^2
\end{pmatrix} \,.
\eeq
The eigenvalues of $(J^T J)[f](y_0)$ are
\beq\begin{split}
&\lambda_1 = (1+\varepsilon_1)^2 \,, \\
&\lambda_\beta = 2 \beta^2 \,, \\
&\lambda_\pm = \frac{ (1 + \varepsilon_1)^2 + \varepsilon_2^2 + \frac{\sigma^2}{2} }{2} \pm \frac{1}{2} \sqrt{ \bigg( (1 + \varepsilon_1)^2 + \varepsilon_2^2 + \frac{\sigma^2}{2} \bigg)^2 - 2 \sigma^2 (1 + \varepsilon_1)^2 } \,.
\end{split}\eeq
We have $\lambda_{\pm} > 1$ for a sufficiently great value for the modulus of $\sigma$; furthermore, $\lambda_\beta > 1$ if $|\beta| > \frac{1}{\sqrt{2}}$. Finally, $\lambda_1 > 1$ $\forall \varepsilon_1 > 0$. We conclude that, $\forall \varepsilon_{1, 2} > 0$ and $|\beta| > \frac{1}{\sqrt{2}}$, there exists a threshold parameter $\overline{\sigma}(\varepsilon_1, \varepsilon_2) \in \bb{R}$ such that $\lambda_1, \lambda_\beta, \lambda_\pm > 1$, provided that $|\sigma| > \overline \sigma$.
\end{proof}
\begin{lem}[existence of a snap-back repeller]\label{lem_snap_back}
There exists a snap-back repeller for the map \eqref{Ziegler_map_pert}, provided that $\beta \ne 0$, $\sigma \ne 0$ and $\varepsilon_1$, $\varepsilon_2$ sufficiently small.
\end{lem}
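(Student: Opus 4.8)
The plan is to check, one by one, the four requirements of Definition \ref{snap_back_def} at the fixed point $p_0$ produced in Lemma \ref{lem_jacobian}, and then to invoke Theorem \ref{MLC_theo}. Continuous differentiability of \eqref{Ziegler_map_pert} on a ball $B_r(p_0)$ is exactly Lemma \ref{lem_diff}, and the expansion condition $\lambda_i(p_0)>1$ for all $i$ is the content of Lemma \ref{lem_jacobian}, valid under $|\beta|>1/\sqrt2$ and $\sigma>|\overline\sigma(\varepsilon_1,\varepsilon_2)|$, provided $\varepsilon_1,\varepsilon_2$ are small enough that the fixed point of Lemma \ref{lem_fixed} persists and the eigenvalue bounds of Lemma \ref{lem_jacobian} remain in force; note that $\beta\ne0$ and $\sigma\ne0$ already enter at this stage. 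Since $x\mapsto (J^TJ)[f](x)$ is continuous, I would first shrink $r$ so that $\lambda_i(x)>1$ for every $x\in B_r(p_0)$, so that the repelling ball of the definition may be taken to be $B_{r'}(p_0)=B_r(p_0)$ itself. On such a ball the smallest singular value of $J[f]$ exceeds $1$ uniformly, so by the mean value inequality $f$ expands distances by a fixed factor $\mu>1$ on $B_r(p_0)$; in particular $f|_{B_r(p_0)}$ is injective and, after one further shrink of $r$, admits a contracting inverse branch $g$ with fixed point $p_0$, defined on a neighbourhood of $p_0$ containing $B_r(p_0)$ and satisfying $g(B_r(p_0))\subseteq B_r(p_0)$.

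The heart of the matter is the construction of the snap-back point $q_0$. I would try to exhibit a point $w\in B_r(p_0)$, $w\ne p_0$, that is a genuine second preimage of the fixed point, $f(w)=p_0$. Granting such a $w$, the rest is routine: because $g$ contracts toward $p_0$ and $w\in B_r(p_0)$, the iterates $g^{(m)}(w)$ tend to $p_0$, so for $m$ large $q_0:=g^{(m)}(w)$ lies in $B_{r'}(p_0)$ with $q_0\ne p_0$. With $k:=m+1$ one has $f^{(k)}(q_0)=f(w)=p_0$; the intermediate iterates $f^{(i)}(q_0)=g^{(m-i)}(w)$ for $0\le i\le m$ remain in $B_r(p_0)$ since $g$ maps that ball into itself, and $f^{(k)}(q_0)=p_0\in B_r(p_0)$, so the confinement requirement holds. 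The determinant condition is then immediate: the computation of $(J^TJ)[f]$ in the proof of Lemma \ref{lem_jacobian} gives $\det J[f](x)=\sigma\beta(1+\varepsilon_1)^2$, a nonzero constant once $\sigma,\beta\ne0$ and $\varepsilon_1$ is small, whence $\det J[f^{(k)}](q_0)=\bigl(\sigma\beta(1+\varepsilon_1)^2\bigr)^{k}\ne0$ by the chain rule. Definition \ref{snap_back_def} is then satisfied and Theorem \ref{MLC_theo} yields Li-Yorke chaos.

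The step I expect to be the real obstacle is precisely the existence of the second preimage $w$, close to $p_0$, with $f(w)=p_0$ and $w\ne p_0$. Since \eqref{Ziegler_map_pert} has everywhere nonvanishing Jacobian and, away from the periodic term, can be solved uniquely backward in time, a distinct preimage of $p_0$ can only originate from the $2\pi$-periodicity of the contribution $\gamma\sin(x_n)$. Concretely, I would work with the $x$-variable modulo $2\pi$ — which is legitimate on the invariant box $(0,2\pi)^4$ identified in Proposition \ref{prop_link}, where \eqref{Ziegler_map_pert} agrees with its modular counterpart \eqref{Ziegler_map_mod} — and exploit that the equation governing the $y$-update admits, for suitable right-hand sides, two preimages differing by a shift of $x$. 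Pinning down parameters (with $\sigma$ large, $|\beta|>1/\sqrt2$, and $\varepsilon_1,\varepsilon_2$ small) and a fixed point $p_0$ for which one of these two branches can be iterated back, via the contraction $g$, to land \emph{exactly} on $p_0$ while keeping the whole finite orbit inside the expanding region of Lemma \ref{lem_jacobian}, is the one genuinely delicate algebraic matching; once it is achieved, the pull-back argument of the second paragraph closes the proof.
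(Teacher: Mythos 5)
Your reductions of the differentiability and expansion conditions to Lemmas \ref{lem_diff} and \ref{lem_jacobian} are fine, your determinant computation $\det J[f](x)=\sigma\beta(1+\varepsilon_1)^2$ is correct, and the pull-back-by-an-inverse-branch scheme would indeed finish the verification of Definition \ref{snap_back_def} \emph{if} a point $w\ne p_0$ with $f(w)=p_0$ existed near $p_0$. But that pivotal step, which you leave as "the one genuinely delicate algebraic matching", is not merely delicate: it is impossible on $\bb{R}^4$. Under exactly the hypotheses of the lemma ($\sigma\ne0$, $\beta\ne0$, $\varepsilon_1>0$) the map \eqref{Ziegler_map_pert} is a global bijection: given $(x_{n+1},y_{n+1},z_{n+1},\omega_{n+1})$, the fourth equation gives $x_n=-(y_{n+1}+\omega_{n+1})/\sigma$ uniquely, then the first and third give $y_n=(x_{n+1}-\varepsilon_2 x_n)/(1+\varepsilon_1)$ and $\omega_n=z_{n+1}/(1+\varepsilon_1)$, and the second gives $z_n=(y_{n+1}-\alpha x_n-\gamma\sin(x_n))/\beta$. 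In particular the $2\pi$-periodicity of the sine cannot produce a second preimage, because $x_n$ is recovered from the \emph{linear} $\omega$-equation, not by inverting $\sin$; your remark that "the $y$-update admits two preimages differing by a shift of $x$" fails for this reason. Consequently $f^{(k)}(q)=p_0$ forces $q=p_0$ for every $k$, and no snap-back point can be produced by your construction (or any other) for the map on $\bb{R}^4$.

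Your fallback of passing to the modular variables does not repair this within the statement being proved: Lemma \ref{lem_snap_back} concerns \eqref{Ziegler_map_pert} on $\bb{R}^4$, the modular map \eqref{Ziegler_map_mod} is not $C^1$ across the identification set (so the smoothness requirement of Definition \ref{snap_back_def} needs separate care), and the equivalence of Proposition \ref{prop_link} requires extra constraints ($\alpha,\beta>0$, $\gamma\in(-2\pi,0)$, $\alpha+\beta<1+\gamma/2\pi$, a sign condition on $\sigma$) that the lemma does not assume and that interact badly with the large-$\sigma$ regime of Lemma \ref{lem_jacobian}. For comparison, the paper proceeds quite differently: it computes $f^{(2)}$ in closed form, solves $f^{(2)}(q)=p_0$ explicitly (obtaining $\overline{y}_0=-(y_0+\omega_0)/\sigma$ and $\overline{\omega}_0$ from a linear equation), appeals to the absence of $2$-periodic points to conclude $q_0\ne p_0$, and checks $\det J[f^{(2)}]=\beta^2\sigma^2\ne0$. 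Note, however, that the injectivity obstruction you ran into is genuine and also bears on that argument: with $\omega_0=-(1+\sigma)y_0$ one finds $\overline{y}_0=y_0$ and $\overline{\omega}_0=\omega_0$, i.e.\ the solved preimage collapses to $p_0$ itself. So the obstacle you flagged is the actual crux of the lemma, and your proposal, as it stands, has a gap precisely there.
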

\begin{proof}
Given Lemma \ref{lem_jacobian}, we can avoid the need to construct a proper neighborhood of $p_0$. Furthermore, the map \eqref{Ziegler_map} has not even periodic points \cite{Disca2}, so we can show the existence of a snap-back for the point $p_0$ by proving the existence of solutions for the equation $f^{(2)}(q) = p$; since it cannot exists a periodic point of period 2, it follows that $q_0 \ne p_0$. \\The second iterate of \eqref{Ziegler_map} is
\beq\begin{cases}
x_{n+2} = \alpha x_n + \beta z_n + \gamma \sin(x_n) \\
y_{n+2} = \alpha y_n + \beta \omega_n + \gamma \sin(y_n) \\
z_{n+2} = -(\alpha + \sigma) x_n - \beta z_n - \gamma \sin(x_n) \\
\omega_{n+2} = -(\alpha + \sigma) y_n - \beta \omega_n - \gamma \sin(y_n) \,.
\end{cases}\eeq
By naming $p_0 = (x_0, y_0, z_0, \omega_0)$, $q_0 = (\overline{x}_0, \overline{y}_0, \overline{z}_0, \overline{\omega}_0 )$ and defining the function $g(x, z) = \alpha x + \beta z + \gamma \sin(x)$, the equation $f^{(2)}(q) = p$ reads
\beq\begin{cases}\label{eq_iterate2}
g(\overline{x}_0, \overline{z}_0) = x_0 \\
g(\overline{y}_0, \overline{\omega}_0) = y_0 \\
- g(\overline{x}_0, \overline{z}_0) - \sigma \overline{x}_0 = z_0 \\
- g(\overline{y}_0, \overline{\omega}_0) - \sigma \overline{y}_0 = \omega_0 \,.
\end{cases}\eeq
From Lemma \ref{lem_fixed}, fixed points must be of the form $(y_0, y_0, \omega_0, \omega_0 )$, so also $q_0$ must be of the form $(\overline{y}_0, \overline{y}_0, \overline{\omega}_0, \overline{\omega}_0)$. Therefore, \eqref{eq_iterate2} becomes
\beq\begin{cases}\label{eq_iterate2}
g(\overline{y}_0, \overline{\omega}_0) = y_0 \\
-g(\overline{y}_0, \overline{\omega}_0) -\sigma \overline{y}_0 = \omega_0 \,,
\end{cases}\eeq
i.e. $\overline{y}_0 = - \frac{y_0 + \omega_0}{\sigma}$ and $\overline{\omega}_0$ solution of $g \bigg( - \frac{y_0 + \omega_0}{\sigma}, \overline{\omega}_0 \bigg) = y_0$. Finally, we verify that $f^{(2)}$ is a diffeomorphism in $q_0$; by following the notation of Lemma \ref{lem_fixed}, we have
\beq
\det{ J[f^{(2)}](x) } = \det{
\begin{pmatrix}
A(x) & 0 & \beta & 0 \\
0 & A(y) & 0 & \beta \\
-A(x) - \sigma & 0 & -\beta & 0 \\
0 & -A(y) - \sigma & 0 & -\beta
\end{pmatrix} } = \beta^2 \sigma^2 \,.
\eeq
Therefore, $\det{ J[f^{(2)}](x) } \ne 0$ $\forall x \in \bb{R}^4$, provided that $\beta \ne 0$, $\sigma \ne 0$. For continuity, result holds for \eqref{Ziegler_map_pert} and $\varepsilon_1$, $\varepsilon_2$ sufficiently small.
\end{proof}
We collect Lemma \ref{lem_diff} - \ref{lem_snap_back} into the following final results.
\begin{theo}\label{theo_CD1}
It is given the family of discrete dynamical systems $f \colon \bb{R}^4 \to \bb{R}^4$ defined by
\beq\label{map_theo_CD1}
\begin{cases}
x_{n+1} = (1 + \varepsilon_1) y_n + \varepsilon_2 x_n \\
y_{n+1} = \alpha x_n + \beta z_n + \gamma \sin(x_n) \\
z_{n+1} = (1 + \varepsilon_1) \omega_n \\
\omega_{n+1} = -y_{n+1} - \sigma x_n \,,
\end{cases}
\eeq
where $\varepsilon_{1,2} > 0$, $\alpha \in \bb{R}$, $\gamma \in \bb{R}$, $\sigma \ne 0$ and
\beq
|\beta| > \frac{1}{\sqrt{2}} \,.
\eeq
Then, for $\varepsilon_{1,2}$ sufficiently small, $\exists \overline{\sigma}(\varepsilon_1, \varepsilon_2) \in \bb{R}$ such that \eqref{map_theo_CD1} is chaotic in the sense of Li-Yorke $\forall \sigma$ such that $|\sigma| > \overline \sigma$.
\end{theo}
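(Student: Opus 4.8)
The plan is to prove Li--Yorke chaos for \eqref{map_theo_CD1} --- which is exactly the perturbed Ziegler map \eqref{Ziegler_map_pert} --- by exhibiting a snap-back repeller in the sense of Definition \ref{snap_back_def} and then invoking the Marotto--Li--Chen theorem (Theorem \ref{MLC_theo}). Since all the analytic content has already been isolated in Lemmas \ref{lem_diff}--\ref{lem_snap_back}, the proof of the theorem is essentially an assembly: one checks that the hypotheses $|\beta|>1/\sqrt2$, $\sigma\neq0$, $\varepsilon_{1,2}>0$ small, $\sigma>|\overline\sigma(\varepsilon_1,\varepsilon_2)|$ make every clause of Definition \ref{snap_back_def} hold for one and the same fixed point, and then applies Theorem \ref{MLC_theo}.

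Concretely I would proceed in this order. By Lemma \ref{lem_diff} the map is $C^1$ on all of $\bb{R}^4$, so the differentiability clause of Definition \ref{snap_back_def} holds for every $r>0$. Next I single out the distinguished fixed point $p_0$ of Lemma \ref{lem_jacobian}: it is the symmetric fixed point $(y_0,y_0,-(1+\sigma)y_0,-(1+\sigma)y_0)$ with $y_0$ solving \eqref{eq_tan_syst} (equivalently \eqref{eq_tan}), which is always solvable and which in particular forces $2A(y_0)+\sigma=0$ so that $(J^TJ)[f](p_0)$ takes the block-decoupled form; Lemma \ref{lem_jacobian} then gives, for $|\beta|>1/\sqrt2$, $\varepsilon_{1,2}$ small and $\sigma$ above the threshold $\overline\sigma(\varepsilon_1,\varepsilon_2)$, that all four eigenvalues of $(J^TJ)[f](p_0)$ exceed $1$. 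For the homoclinic point I invoke Lemma \ref{lem_snap_back}, which produces $q_0\neq p_0$ with $f^{(2)}(q_0)=p_0$ (so $k=2$). The structural feature that makes everything fit is that the one-step Jacobian of \eqref{Ziegler_map_pert} depends on the state only through its first coordinate (via $A=\alpha+\gamma\cos(\cdot)$), while $\det J[f](x)$ is a nonzero constant proportional to $\beta\sigma$; since the point $q_0$ furnished by Lemma \ref{lem_snap_back} has (for the unperturbed map, exactly; after perturbation, up to a small correction) the same first coordinate $y_0$ as $p_0$, one gets $(J^TJ)[f](q_0)=(J^TJ)[f](p_0)$ with all eigenvalues $>1$, hence $q_0$ lies in the set $B_{r'}(p_0)$ of Definition \ref{snap_back_def} for $r'=\lVert q_0-p_0\rVert$, and $\det J[f^{(2)}](q_0)=(\det J[f])^2\neq0$. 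Finally, taking $r=\max_{i=0,1,2}\lVert f^{(i)}(q_0)-p_0\rVert\ (\ge r')$, the whole three-point homoclinic orbit sits in $B_r(p_0)$, on which $f$ is $C^1$. Thus $p_0$ is a snap-back repeller and Theorem \ref{MLC_theo} yields Li--Yorke chaos; the statements proved for the unperturbed map \eqref{Ziegler_map} (Lemmas \ref{lem_fixed} and \ref{lem_snap_back}) pass to \eqref{Ziegler_map_pert} by continuity for $\varepsilon_{1,2}$ small, while Lemmas \ref{lem_diff} and \ref{lem_jacobian} already treat the perturbed map directly.

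I expect the delicate part not to be any single calculation but the consistency of this transfer, which is also the reason the perturbation \eqref{Ziegler_map_pert} is introduced at all: as noted before Lemma \ref{lem_diff}, the unperturbed Jacobian always has $1$ as an eigenvalue of $J^TJ$, so the snap-back hypothesis fails outright for \eqref{Ziegler_map}, and $(\varepsilon_1,\varepsilon_2)$ is chosen precisely to dislodge that eigenvalue. The point I would be most careful about is therefore that, after perturbing, the distinguished fixed point $p_0$ and its preimage $q_0$ still exist, that $q_0$ still shares (approximately) the first coordinate of $p_0$ so that the eigenvalue condition at $q_0$ is inherited from the one at $p_0$, and that the threshold $\overline\sigma(\varepsilon_1,\varepsilon_2)$ can be kept small enough that the regime $|\beta|>1/\sqrt2$, $\sigma>|\overline\sigma|$ is non-empty --- all of which are open conditions, hence stable under a sufficiently small perturbation. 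It is precisely the pointwise form of the Li--Chen snap-back condition in Definition \ref{snap_back_def} (no invariant repelling neighborhood is demanded, only expansion at $p_0$ and at $q_0$ together with a finite orbit returning to a ball) that makes this argument go through without any neighborhood construction.
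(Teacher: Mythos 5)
Your proposal is correct and takes essentially the same route as the paper: Theorem \ref{theo_CD1} is proved there simply by collecting Lemmas \ref{lem_diff}--\ref{lem_snap_back} and invoking Theorem \ref{MLC_theo}, exactly as you assemble them, the only (minor) divergence being that the paper remarks Lemma \ref{lem_fixed} is dispensable here because $(0,0,0,0)$ is always a fixed point, while you work with the nonzero symmetric fixed point solving \eqref{eq_tan}, consistently with Lemma \ref{lem_jacobian}. Your extra observation that $q_0$ shares its first coordinate with $p_0$, so the eigenvalue condition is inherited at the snap-back point, is a useful detail the paper leaves implicit.
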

Notice that for Theorem \ref{theo_CD1} it is not necessary to consider Lemma \ref{lem_fixed}, since $(0, 0, 0, 0)$ is always fixed point of \eqref{map_theo_CD1}; instead, Lemma \ref{lem_fixed} is necessary for the following one, together with Proposition \ref{prop_link}.
\begin{theo}\label{theo_CD2}
It is given the family of discrete dynamical systems $f \colon (0, 2\pi)^4 \to (0, 2\pi)^4$ defined by
\beq\label{map_theo_CD2}
\begin{cases}
x_{n+1} = (1 + \varepsilon_1) y_n + \varepsilon_2 x_n \quad \mod 2\pi \\
y_{n+1} = \alpha x_n + \beta z_n + \gamma \sin(x_n) \\
z_{n+1} = (1 + \varepsilon_1) \omega_n \quad \mod 2\pi \\
\omega_{n+1} = -y_{n+1} - \sigma x_n \,,
\end{cases}
\eeq
where $\varepsilon_{1,2} > 0$, $\alpha > 0$, $\beta > 0$, $\gamma \in (-2\pi, 0)$, $\sigma < -1$ and
\beq\begin{cases}
\alpha + \beta < 1 + \frac{\gamma}{2\pi} \\
\inf_{x \in \bb{R} }{ \frac{\sin(x)}{x} }  \le \frac{ 1 - \alpha - \beta (1 + \sigma) }{\gamma} \le 1 \,.
\end{cases}\eeq
Then, for $\varepsilon_{1,2}$ sufficiently small, $\exists \overline{\sigma}(\varepsilon_1, \varepsilon_2) \in \bb{R}$ such that \eqref{map_theo_CD2} is chaotic in the sense of Li-Yorke $\forall \sigma$ such that $|\sigma| > \overline \sigma$.
\end{theo}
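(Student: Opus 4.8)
The plan is to deduce Theorem \ref{theo_CD2} from Theorem \ref{theo_CD1} through Proposition \ref{prop_link}. The constraints in Theorem \ref{theo_CD2} are exactly the hypotheses \eqref{constraint_link} of Proposition \ref{prop_link} together with the fixed-point condition \eqref{constraint_fixed} of Lemma \ref{lem_fixed}, and one also takes $\beta\in(1/\sqrt{2},1)$ as in Theorem \ref{theo_CD1}, which is consistent since $\alpha+\beta<1+\gamma/(2\pi)<1$ already forces $\beta<1$. The only mismatch between the two settings is that Proposition \ref{prop_link} is phrased for the unperturbed maps \eqref{Ziegler_map}, \eqref{Ziegler_map_mod}, whereas Theorems \ref{theo_CD1}, \ref{theo_CD2} concern the $\varepsilon$-perturbed maps \eqref{map_theo_CD1}, \eqref{map_theo_CD2}; so the first task is to re-establish the confinement of Proposition \ref{prop_link} in the perturbed setting.

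First I would redo the estimates in the proof of Proposition \ref{prop_link} for \eqref{map_theo_CD1}. The $y$- and $\omega$-recursions of \eqref{map_theo_CD1} are literally those of \eqref{Ziegler_map}, so the bounds $0<y_n<2\pi$ and $0<\omega_n<2\pi$ for $n\ge 2$ carry over; moreover the strict inequality $\alpha+\beta<1+\gamma/(2\pi)$ produces a strictly positive slack $\delta:=2\pi(1+\gamma/(2\pi)-\alpha-\beta)>0$, giving $y_n\le 2\pi-\delta$. The $x$- and $z$-recursions now carry the perturbations, $x_{n+1}=(1+\varepsilon_1)y_n+\varepsilon_2 x_n$ and $z_{n+1}=(1+\varepsilon_1)\omega_n$, but using $0<x_n,y_n<2\pi$ and $y_n\le 2\pi-\delta$ one gets $0<x_{n+1}<(1+\varepsilon_1)(2\pi-\delta)+2\pi\varepsilon_2$, which is $<2\pi$ whenever $\varepsilon_1(2\pi-\delta)+2\pi\varepsilon_2<\delta$, and similarly (more easily) $0<z_{n+1}<2\pi$. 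Hence for $\varepsilon_{1,2}$ small enough, depending only on $\alpha,\beta,\gamma,\sigma$, the cube $(0,2\pi)^4$ is forward invariant for \eqref{map_theo_CD1} from step $2$ onward and on it the non-modular map \eqref{map_theo_CD1} coincides with the modular map \eqref{map_theo_CD2}; this is the perturbed analogue of Proposition \ref{prop_link}.

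Next I would invoke Theorem \ref{theo_CD1}: under the stated constraints and for $\varepsilon_{1,2}$ small with $\sigma>|\overline\sigma(\varepsilon_1,\varepsilon_2)|$, the map \eqref{map_theo_CD1} has a snap-back repeller $p_0$ with homoclinic preimage $q_0$, $f^{(2)}(q_0)=p_0$, constructed explicitly in Lemmas \ref{lem_fixed}--\ref{lem_snap_back}. To pass Li-Yorke chaos to \eqref{map_theo_CD2} it then suffices to check that the whole snap-back structure --- the points $p_0$ and $q_0$, the segment $\{f^{(i)}(q_0)\}_{i=0}^{2}$, and a small ball about $p_0$ --- can be placed inside the open cube $(0,2\pi)^4$: the scrambled set delivered by the Marotto-Li-Chen mechanism (Theorem \ref{MLC_theo}) is then contained in $(0,2\pi)^4$, all its forward iterates remain there by the forward invariance above, and on it $f=\tilde f$, so every Li-Yorke pair of \eqref{map_theo_CD1} is a Li-Yorke pair of \eqref{map_theo_CD2}. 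Concretely one must show that, among the solutions of \eqref{eq_fixed}--\eqref{eq_tan} guaranteed by \eqref{constraint_fixed}, there is one with $y_0$ (hence $x_0=y_0$) in $(0,2\pi)$ together with the corresponding $\omega_0$ and the preimage coordinates $\overline{y}_0=-(y_0+\omega_0)/\sigma$, $\overline{\omega}_0$ all in $(0,2\pi)$; here the modular reduction of the $z$-equation in \eqref{map_theo_CD2} is what makes this feasible. A tidy way to organize this is to verify Definition \ref{snap_back_def} directly for \eqref{map_theo_CD2} on $(0,2\pi)^4$ --- Lemmas \ref{lem_diff}, \ref{lem_jacobian}, \ref{lem_snap_back} are purely local, and the Jacobian of $\tilde f$ and of its iterates agree with those of $f$ away from the boundary --- and only afterward use the confinement to identify the two dynamics.

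I expect the confinement extension to be routine and the genuine obstacle to be this boundary/localization issue: guaranteeing that the fixed point $p_0$ selected in Lemma \ref{lem_jacobian}, which solves the coupled system \eqref{eq_tan_syst}, and its preimage orbit sit in the \emph{interior} of $(0,2\pi)^4$ rather than on its boundary, uniformly for all large $\sigma$. Should a particular parameter regime admit no interior realization, the fallback is to replace the fundamental domain $(0,2\pi)^4$ by a shifted copy $[a,a+2\pi)^4$ --- which leaves \eqref{map_theo_CD2} unchanged up to conjugacy, hence Li-Yorke chaos as well --- chosen so that $p_0$ becomes an interior point.
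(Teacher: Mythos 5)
Your proposal follows essentially the same route as the paper: the paper gives no separate argument for Theorem \ref{theo_CD2} beyond collecting Lemmas \ref{lem_diff}--\ref{lem_snap_back} (the snap-back repeller construction behind Theorem \ref{theo_CD1}) together with Lemma \ref{lem_fixed} and the confinement Proposition \ref{prop_link}, which is exactly your plan. The two points you treat explicitly --- redoing the confinement estimate for the $\varepsilon$-perturbed map rather than \eqref{Ziegler_map}, and checking that the fixed point $p_0$ and its snap-back preimage orbit actually sit in the interior of $(0,2\pi)^4$ --- are left implicit in the paper (handled only by the blanket ``for continuity, the result holds for $\varepsilon_{1,2}$ sufficiently small''), so your version is, if anything, the more complete rendering of the same argument.
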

We have the following result for \eqref{Ziegler_map_pert} when generalized to arbitrary polynomial functions of $z_n$. Again, Lemma \ref{lem_fixed} is not necessary in this case.
\begin{theo}\label{theo_CD3}
It is given the family of discrete dynamical systems $f \colon \bb{R}^4 \to \bb{R}^4$ defined by
\beq\label{map_theo_CD3}
\begin{cases}
x_{n+1} = (1 + \varepsilon_1) y_n + \varepsilon_2 x_n \\
y_{n+1} = \alpha x_n + B^{(j)}(z_n) + \gamma \sin(x_n) \\
z_{n+1} = (1 + \varepsilon_1) \omega_n \\
\omega_{n+1} = -y_{n+1} + 2 \alpha x_n \,,
\end{cases}
\eeq
where $\varepsilon_{1,2} > 0$, $\alpha \ne 0$, $\gamma \in \bb{R}$,
\beq
B^{(j)}(z_n) = \sum_{k=1}^j \beta_k z_n^k \,, \quad |\beta_1| > \frac{1}{\sqrt{2}} \,, \quad \beta_k \in \bb{R} \quad \forall k \ge 2
\eeq
and $B^{(j)}$ has at least two distinct real roots. Then, for $\varepsilon_{1,2}$ sufficiently small, $\exists \overline{\alpha}(\varepsilon_1, \varepsilon_2) \in \bb{R}$ such that \eqref{map_theo_CD3} is chaotic in the sense of Li-Yorke $\forall \alpha$ such that $|\alpha| > \overline \alpha$.
\end{theo}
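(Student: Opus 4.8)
The plan is to verify the hypotheses of the Marotto--Li--Chen theorem (Theorem~\ref{MLC_theo}) by producing a snap-back repeller for \eqref{map_theo_CD3}, following the same four-step decomposition used for Theorems~\ref{theo_CD1}--\ref{theo_CD2}: (i) continuous differentiability, (ii) a fixed point, (iii) the eigenvalue condition on $J^{T}J$, (iv) a non-degenerate preimage of the fixed point. Step~(i) is immediate, since the right-hand side of \eqref{map_theo_CD3} is a finite sum of polynomials and of $\sin$. For step~(ii), note that $B^{(j)}$ has no constant term, so $B^{(j)}(0)=0$ and $p_0=(0,0,0,0)$ is a fixed point of \eqref{map_theo_CD3} for every choice of the parameters and of $\varepsilon_{1,2}$; this is why no analogue of Lemma~\ref{lem_fixed} is needed, and it is precisely the point at which the nonlinearity of $B^{(j)}$ will be exploited in step~(iv).

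For step~(iii) I would compute the Jacobian at the origin. Writing $A(x)=\alpha+\gamma\cos x$ and $C(z)=(B^{(j)})'(z)=\sum_{k\ge1}k\beta_k z^{k-1}$, the only structural change with respect to \eqref{Ziegler_map_pert} is that the fourth row of $J[f]$ is $(2\alpha-A,\,0,\,-C,\,0)$ (since $\partial_x\omega_{n+1}=-A+2\alpha$), and $A(0)=\alpha+\gamma$, $C(0)=\beta_1$. A short computation gives
\[
(J^{T}J)[f](p_0)=\begin{pmatrix}
\varepsilon_2^2+2\alpha^2+2\gamma^2 & \varepsilon_2(1+\varepsilon_1) & 2\gamma\beta_1 & 0\\
\varepsilon_2(1+\varepsilon_1) & (1+\varepsilon_1)^2 & 0 & 0\\
2\gamma\beta_1 & 0 & 2\beta_1^2 & 0\\
0 & 0 & 0 & (1+\varepsilon_1)^2
\end{pmatrix}.
\]
Fixing $\varepsilon_{1,2}$ small and letting $\alpha\to\infty$, the $(1,1)$-entry dominates and, by continuity of the spectrum, the four eigenvalues tend to $+\infty$, $2\beta_1^2$, $(1+\varepsilon_1)^2$, $(1+\varepsilon_1)^2$ (for $\varepsilon_2=0$ the block is already reducible and one checks this explicitly: the surviving $2\times2$ block has determinant $4\alpha^2\beta_1^2$ and trace $2\alpha^2+2\gamma^2+2\beta_1^2$). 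Since $|\beta_1|>1/\sqrt2$ gives $2\beta_1^2>1$ and $(1+\varepsilon_1)^2>1$, there is $\overline\alpha(\varepsilon_1,\varepsilon_2)$ with all four eigenvalues exceeding $1$ for $\alpha>|\overline\alpha|$; as in Lemma~\ref{lem_snap_back}, strictness of these inequalities lets us take the repelling neighbourhood for granted rather than constructing it.

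For step~(iv) I would solve $f(q)=p_0=0$. Because $\alpha\neq0$, the third equation forces the $\omega$-component of $q$ to vanish, the fourth (using the second) forces the $x$-component to vanish, the first then forces the $y$-component to vanish, leaving $B^{(j)}(z\text{-component})=0$; hence the preimages of the origin are exactly the points $(0,0,r,0)$ with $B^{(j)}(r)=0$. Since $B^{(j)}$ has at least two distinct real roots and $0$ is one of them, there is a nonzero real root $r^{\ast}$, taken simple; set $q_0=(0,0,r^{\ast},0)$. Then $q_0\neq p_0$, $f(q_0)=(0,B^{(j)}(r^{\ast}),0,0)=p_0$ so $k=1$ and the finite orbit $\{q_0,p_0\}$ lies in $B_r(p_0)$ for $r$ large, and a cofactor expansion gives $\det J[f](x)=-2\alpha(1+\varepsilon_1)^2(B^{(j)})'(x_3)$ (the factor $2\alpha$ arising from the cancellation of the $A$-terms between rows $2$ and $4$), whence $\det J[f](q_0)=-2\alpha(1+\varepsilon_1)^2(B^{(j)})'(r^{\ast})\neq0$. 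Since the origin remains fixed and $\det J[f](q_0)\neq0$ is an open condition, the inverse function theorem perturbs $q_0$ to a nearby preimage of $p_0$ for all sufficiently small $\varepsilon_{1,2}$; combining (i)--(iv), $p_0$ is a snap-back repeller and Theorem~\ref{MLC_theo} yields Li--Yorke chaos for $\alpha>|\overline\alpha|$ and $\varepsilon_{1,2}$ small.

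The step I expect to be the real obstacle is reconciling the position of $q_0$ with the repelling-neighbourhood clause of Definition~\ref{snap_back_def}: $q_0=(0,0,r^{\ast},0)$ sits at the fixed distance $|r^{\ast}|$ from the origin, whereas by Rolle's theorem $(B^{(j)})'$ — and hence $\det J[f]$ — must vanish at some point with $z$-coordinate strictly between $0$ and $r^{\ast}$, so no ball about $p_0$ large enough to contain $q_0$ can lie inside the region where $J^{T}J$ has all eigenvalues $>1$. Following Lemma~\ref{lem_snap_back} one sidesteps this by invoking Lemma-\ref{lem_jacobian}-type information to "avoid constructing a proper neighbourhood"; a fully rigorous treatment would instead either use a non-origin fixed point (obtained, in the spirit of the $\tan$-equation construction of Lemma~\ref{lem_jacobian}, as a solution of $(1-\alpha)y_0-\gamma\sin y_0=B^{(j)}((2\alpha-1)y_0)$) together with a nearby non-trivial preimage whose $z$-components stay within one sign-interval of $(B^{(j)})'$, or adapt the norm to the block structure of $J$. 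I would also treat separately the degenerate case in which every nonzero real root of $B^{(j)}$ is multiple, where $\det J[f^{(k)}](q_0)=0$ for the natural preimages and one must argue via a longer preimage chain and the factorisation $\det J[f^{(2)}](x)=4\alpha^2(1+\varepsilon_1)^4(B^{(j)})'(x_3)(B^{(j)})'(x_4)$.
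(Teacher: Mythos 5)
Your proposal is correct to at least the paper's own standard of rigor and follows the same skeleton --- Marotto--Li--Chen with the origin as the snap-back repeller, $2\beta_1^2>1$ coming from $|\beta_1|>1/\sqrt2$, a threshold on $\alpha$ for expansion at $p_0$, and a snap-back point supplied by a nonzero real root of $B^{(j)}$ --- but it departs from the paper in two genuine ways. First, the paper recycles the second-iterate construction of Lemma~\ref{lem_snap_back}: its snap-back point is $q_0=(0,0,r^{\ast},r^{\ast})$ with $f^{(2)}(q_0)=p_0$ (so $k=2$), computed for $\varepsilon_1=\varepsilon_2=0$ and transferred by continuity; your one-step preimage $q_0=(0,0,r^{\ast},0)$ with $f(q_0)=p_0$ is computed exactly for the perturbed map, which is cleaner and makes your closing inverse-function-theorem remark superfluous. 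Second, for the eigenvalue condition the paper simply substitutes $y_0=0$, $\sigma=-2\alpha$, $\beta\to\tilde B(0)=\beta_1$ into Lemma~\ref{lem_jacobian}; but the block-diagonal form of $J^TJ$ used there rests on $2A(y_0)+\sigma=0$, which at $y_0=0$, $\sigma=-2\alpha$ reads $2\gamma=0$, so the substitution is literally valid only for $\gamma=0$, whereas your computation keeps the off-diagonal entry $2\gamma\beta_1$ and argues asymptotically in $\alpha$ --- this is what is actually needed for general $\gamma$ (at the mild cost that $\overline\alpha$ then depends on $\gamma$ and $\beta_1$ as well as on $\varepsilon_{1,2}$). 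Finally, the two obstacles you flag at the end are not resolved by the paper either: the clause $q_0\in B_{r'}(p_0)$ of Definition~\ref{snap_back_def} is dismissed there with the phrase ``we can avoid the need to construct a proper neighborhood of $p_0$'', and the nondegeneracy check amounts to substituting $\beta_1$ for $\beta$ in $\det J[f^{(2)}]=\beta^2\sigma^2$, i.e.\ it evaluates $(B^{(j)})'$ at $0$ rather than at $r^{\ast}$; the correct value along the paper's chain is $4\alpha^2\bigl[(B^{(j)})'(r^{\ast})\bigr]^2$, so the multiple-root case you isolate is genuinely left open by the paper as well. Note, though, that your proposed repair by a longer preimage chain cannot work: every chain terminating at the origin passes through a one-step preimage of the form $(0,0,r,0)$ with $B^{(j)}(r)=0$, $r\neq0$, and the offending factor $(B^{(j)})'(r)$ reappears in the chain-rule determinant.
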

\begin{proof}
The map \eqref{map_theo_CD3} is continuously differentiable and $(0, 0, 0, 0)$ is always fixed point. Given
\beq
\tilde B(z_n) = \frac{d}{dz_n} B^{(j)}(z_n) = \sum_{k=1}^j k \beta_k z_n^{k-1} \,,
\eeq
it is sufficient to take the following substitutions in proofs of Lemma \ref{lem_jacobian} - \ref{lem_snap_back}:
\begin{itemize}
\item $y_0 = 0$;
\item $\beta \to \tilde B(0) = \beta_1$;
\item $\sigma = -2\alpha$;
\item $\lambda_\beta \to \lambda_{\beta_1} = 2 \beta_1^2$.
\end{itemize}
Finally, since $\omega_0 = 0$, equation \eqref{eq_iterate2} in Lemma \ref{lem_snap_back} becomes
\beq\begin{cases}
g(\overline{y}_0, \overline{\omega}_0) = 0 \\
-g(\overline{y}_0, \overline{\omega}_0) -\sigma \overline{y}_0 = 0 \,.
\end{cases}\eeq
so $\overline{y}_0 = \overline{x}_0 = 0$ and $\overline{\omega}_0 = \overline{z}_0$ are such that
\beq
g(0, \overline{\omega}_0) = 0 \implies B^{(j)}(\overline{\omega}_0) = 0 \,.
\eeq
Therefore, snap-back point is identified by the real roots of $B^{(j)}$; since $B^{(j)}$ has at least two distinct real roots, there exists a point $q_0 \ne (0, 0, 0, 0)$ such that $f^{(2)}(q_0) = (0, 0, 0, 0)$.
\end{proof}
A few comments are needed.

Firstly, even if the Marotto-Li-Chen theorem cannot be applied to the original system \eqref{Ziegler_map}, it has been successfully applied to the perturbed formulation \eqref{Ziegler_map_pert} for arbitrary small $\varepsilon_1$, $\varepsilon_2$; from a numerical point of view, sufficiently small values for $\varepsilon_1$, $\varepsilon_2$ lead to practically identical dynamics. However, as $\varepsilon_1$, $\varepsilon_2$ decrease, the threshold parameter $\overline{\sigma}$ increases, so that the range of parameters $\sigma$ associated to Li-Yorke chaos becomes smaller, up to the empty set for $\varepsilon_1 = \varepsilon_2 = 0$.

Secondly, Theorem \ref{theo_CD1} - \ref{theo_CD2} provide sufficient conditions for Li-Yorke chaos related to a family of forced linear coupled oscillators that are the fixed point discretization of the ODEs system
\beq
\begin{cases}
\ddot x(t) = (1 + \varepsilon_1) \bigg( \alpha x(t) + \beta z(t) + \gamma \sin(x(t)) \bigg) + \varepsilon_2 \dot x(t) \\
\ddot z(t) = -(1 + \varepsilon_1) \bigg( (\alpha + \sigma) x(t) + \beta z(t) + \gamma \sin(x(t)) \bigg) \\
x(0) = x_0 \,, \quad \dot x(0) = y_0 \,, \quad z(0) = z_0 \,, \quad \dot z(0) = \omega_0 \,,
\end{cases}
\eeq
\beq
\begin{cases}
\ddot x(t) = (1 + \varepsilon_1) \bigg( \alpha x(t) + \beta z(t) + \gamma \sin(x(t)) \bigg) + \varepsilon_2 \dot x(t) \quad \mod 2\pi \\
\ddot z(t) = -(1 + \varepsilon_1) \bigg( (\alpha + \sigma) x(t) + \beta z(t) + \gamma \sin(x(t)) \bigg) \quad \mod 2\pi \\
x(0) = x_0 \,, \quad \dot x(0) = y_0 \,, \quad z(0) = z_0 \,, \quad \dot z(0) = \omega_0 \,,
\end{cases}
\eeq
respectively. Analogously, Theorem \ref{theo_CD3} provide sufficient conditions for Li-Yorke chaos related to a family of forced nonlinear coupled oscillators that are the fixed point discretization of the ODEs system
\beq
\begin{cases}
\ddot x(t) = (1 + \varepsilon_1) \bigg( \alpha x(t) + \sum_{k=1}^j \beta_k z(t)^k + \gamma \sin(x(t)) \bigg) + \varepsilon_2 \dot x(t) \\
\ddot z(t) = -(1 + \varepsilon_1) \bigg( -\alpha x(t) + \sum_{k=1}^j \beta_k z(t)^k + \gamma \sin(x(t)) \bigg) \\
x(0) = x_0 \,, \quad \dot x(0) = y_0 \,, \quad z(0) = z_0 \,, \quad \dot z(0) = \omega_0 \,.
\end{cases}
\eeq
Theorem \ref{theo_CD3} is particularly relevant in the context of nonlinear oscillators \cite{Kovacic}, when they are perturbed by periodic external forces, defined by polynomial potentials and are subject to linear velocity-dependent perturbations.

As an explicit example of application, Theorem 3 - 5 can be applied to proper functional extensions of discrete models for cardiac arrhythmia, such as the Guevara map proposed in \cite{Guevara}.

\subsection{Qualitative remarks on scrambled sets and perturbations}\label{sec_theo_comment}
In this section we outline a heuristic argument regarding potential formal connections between Li-Yorke chaos and KAM theory for quasi-integrable dynamical systems. The following discussion is meant to be a roadmap for future research works in the field.

In our first attempt to prove Theorem \ref{theo_CD1}, we started from the definition of Li-Yorke chaos, namely the existence of a scrambled set for the discrete map, by showing that the system is Li-Yorke chaotic for the case $\sigma = 0$. Then, we wrote the system in the general case as the map corresponding to this choice, subject to a perturbation. Given this formulation of the problem, we assumed that the perturbed map shares a scrambled set with the unperturbed one.

Regardless of the actual flaws in the initial proof, a natural question arises: by assuming that a discrete map is Li-Yorke chaotic, are there sufficient conditions under which the perturbed system shares a scrambled set with the unperturbed one? In other words, how does a perturbation act on the set of chaotic initial conditions of a Li-Yorke chaotic map?

In answering this question, a curious connection with the KAM theorem \cite{Kolmogorov2, Arnold2, Moser} can be conjectured. From KAM theory, we know that a quasi-integrable system exhibits invariant tori in the phase space that are not destroyed by the presence of a perturbation; when acting on the system, the perturbation destroys some tori, while others survive but are deformed, according to specific non-resonance conditions. In the context of chaos theory, it might be argued that a perturbation acts on the scrambled set of a Li-Yorke chaotic map in a similar way, in the sense that they could exist specific sufficient conditions such that the scrambled set of a Li-Yorke map survives under a small perturbation. If this is true, it would be possible to deduce chaos for a discrete system in general through perturbative arguments.

\section{Numerical results}\label{sec_numerical}
In this section we focus on numerical results for the system \eqref{Ziegler_map_mod}. For all simulations, we return to the original parameters $\alpha$, $\tilde \alpha$, $\beta$, $\gamma$ as in \eqref{Ziegler_map_original}, remarking that $\sigma = -(\alpha + \tilde \alpha)$.

\subsection{Looking for a strange attractor}\label{sec_numerical_strange}
We present some numerical simulations for the orbit of the system \eqref{Ziegler_map_mod}.

Figure \ref{sym_xy} refers to the symmetric case $\tilde \alpha = -\alpha$, where we can notice the very sensitive dependence of the system on initial data.
\begin{figure}
\centering
\subfloat[][\label{sym_xy1}]
{\includegraphics[width=.3\textwidth]{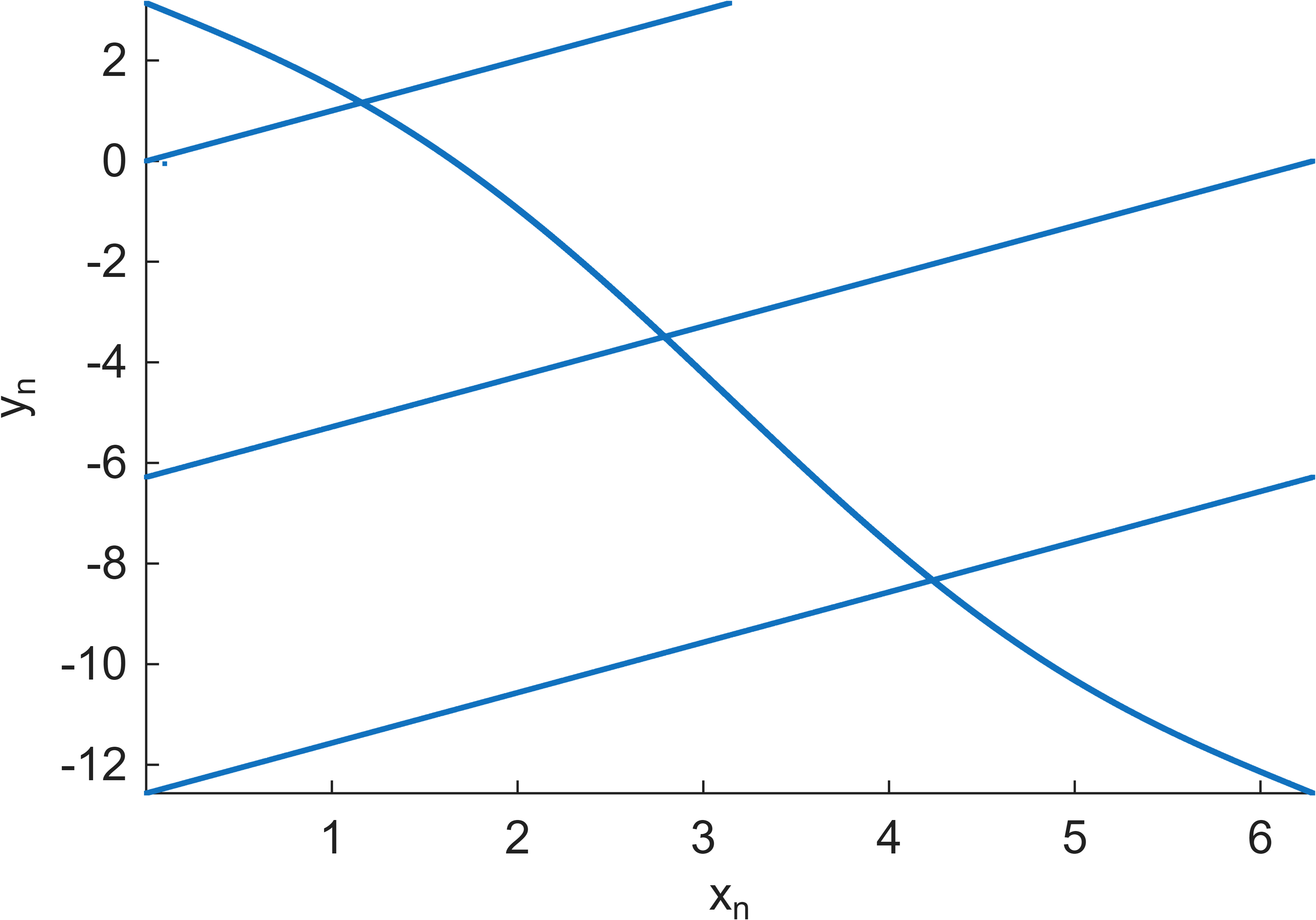}} \quad
\subfloat[][\label{sym_xy2}]
{\includegraphics[width=.3\textwidth]{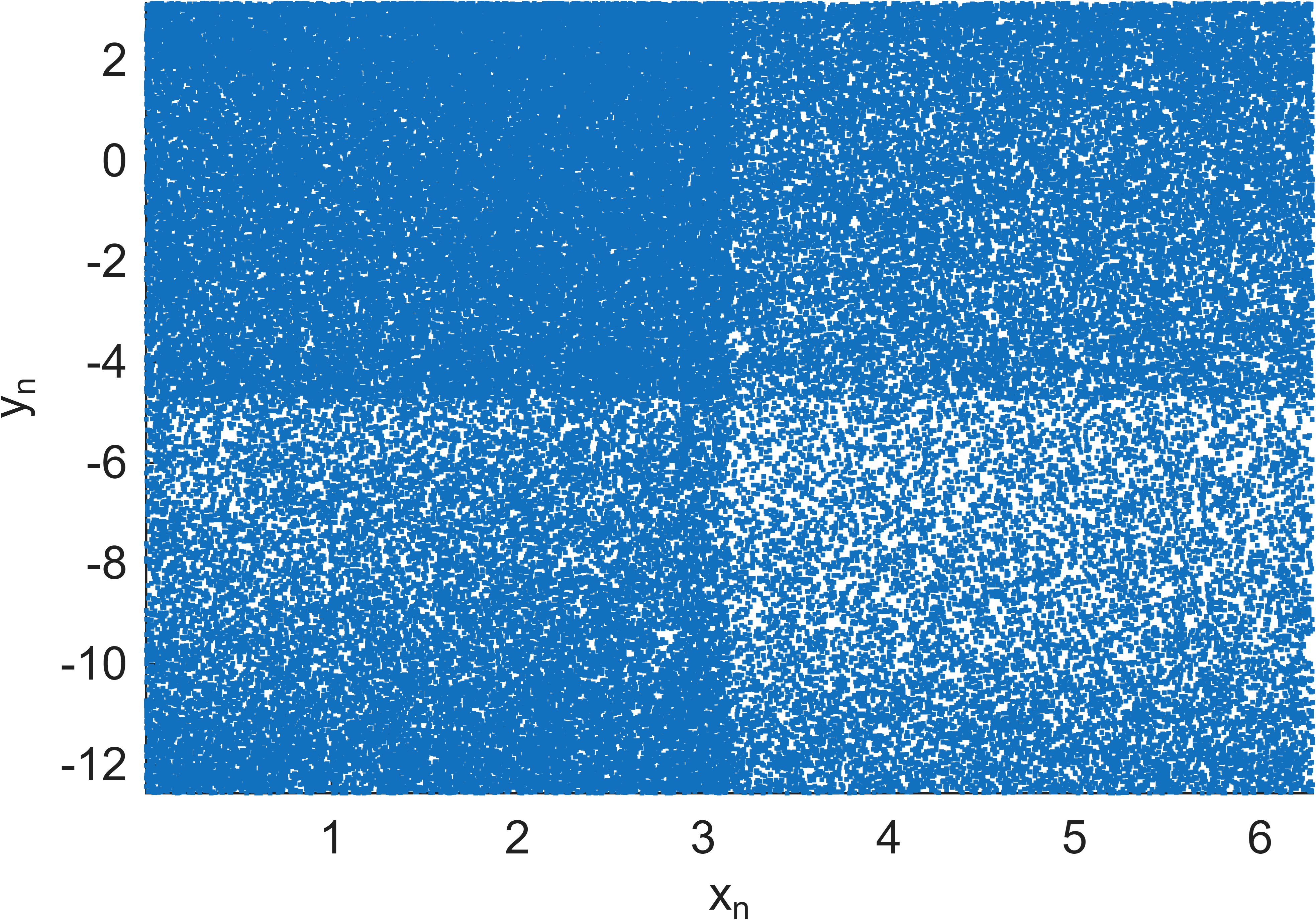}}
\caption{Projection onto the plane $(x, y)$, symmetric case $\tilde \alpha = -\alpha$. Parameters: $\alpha = -2$, $\tilde \alpha = 2$, $\beta = 0.5$, $\gamma = 1$. The simulation runs for $10^5$ iterations. a) Initial conditions: $x_0 = 0.1$, $y_0 = 0.1$, $z_0 = 0.1$, $\omega_0 = 0.1$. b) Initial conditions: $x_0 = 0.1001$, $y_0 = 0.1$, $z_0 = 0.1$, $\omega_0 = 0.1$.}
\label{sym_xy}
\end{figure}
In Figure \ref{sym_sameIC_xy} we set $\tilde \alpha = -\alpha$, all the initial conditions are set equal and we progressively increase the value of $\gamma$. This choice on parameters and initial conditions seems to be a particular regular case for the system. By increasing the value of $\gamma$, number of the straight trajectories increases, being present always in an odd quantity; furthermore, the motion is always restricted along these trajectories and a sinusoidal curve.
\begin{figure}
\centering
\subfloat[][\label{sym_sameIC_xy_gamma8}]
{\includegraphics[width=.3\textwidth]{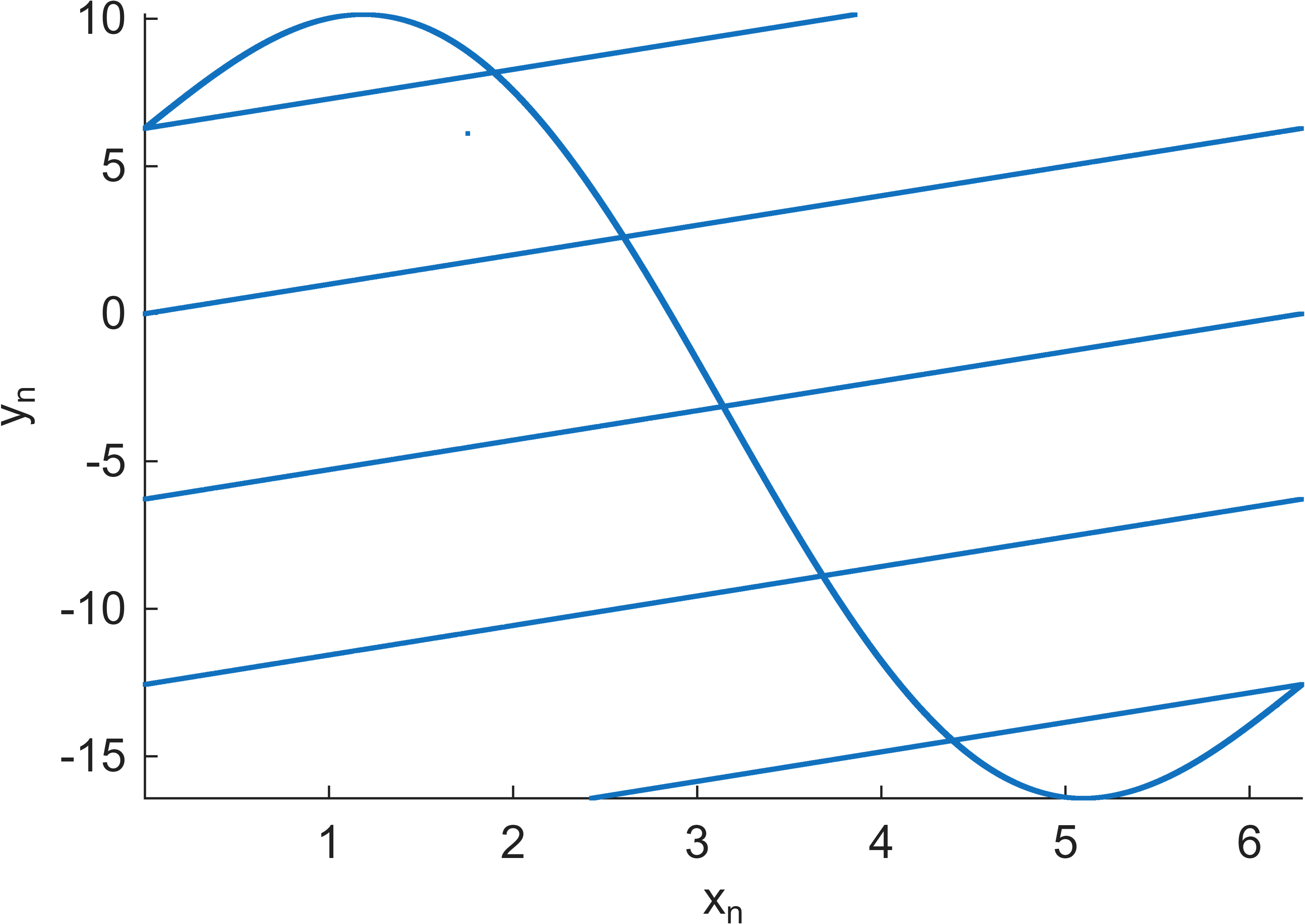}} \quad
\subfloat[][\label{sym_sameIC_xy_gamma13}]
{\includegraphics[width=.3\textwidth]{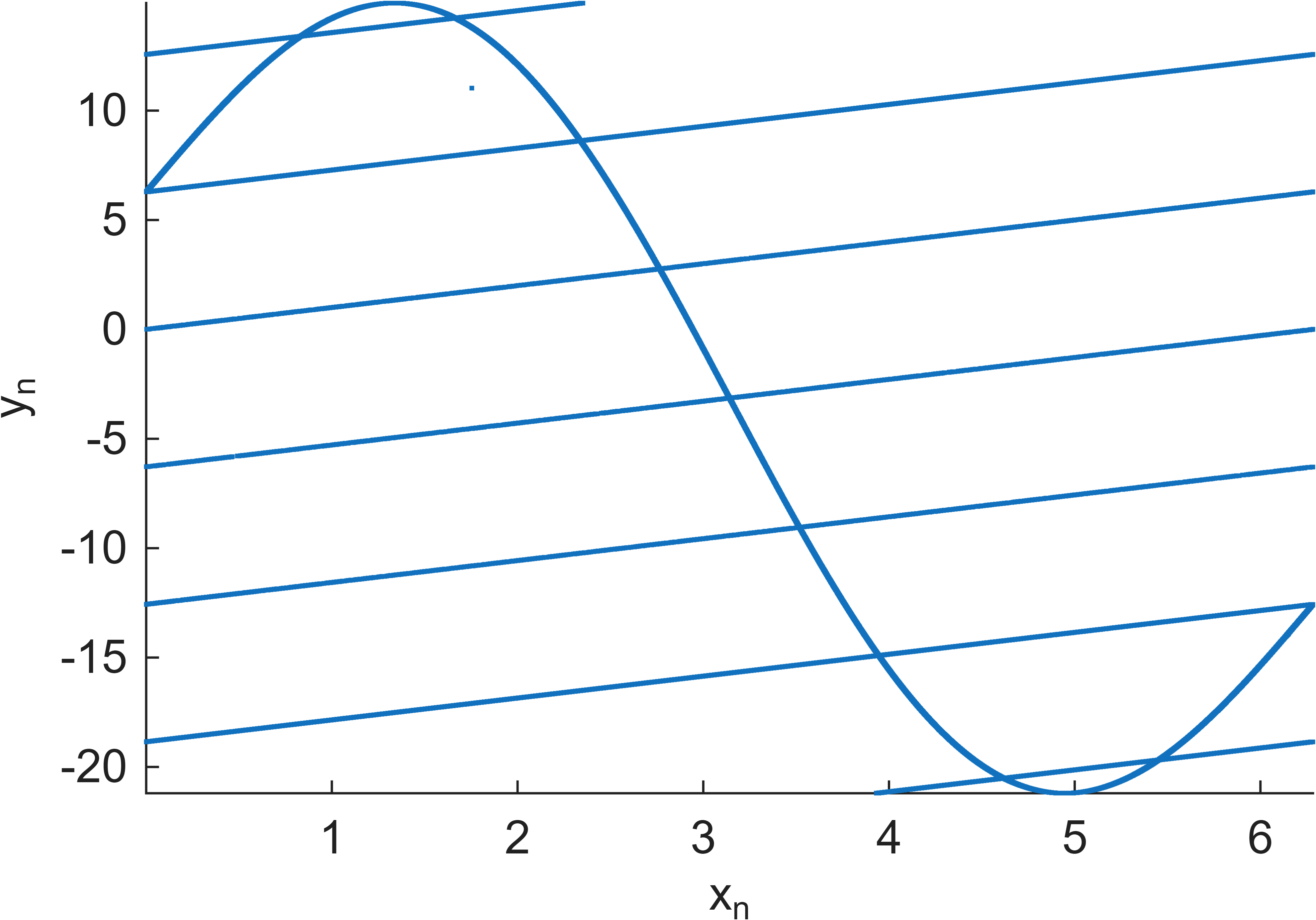}} \\
\subfloat[][\label{sym_sameIC_xy_gamma18}]
{\includegraphics[width=.3\textwidth]{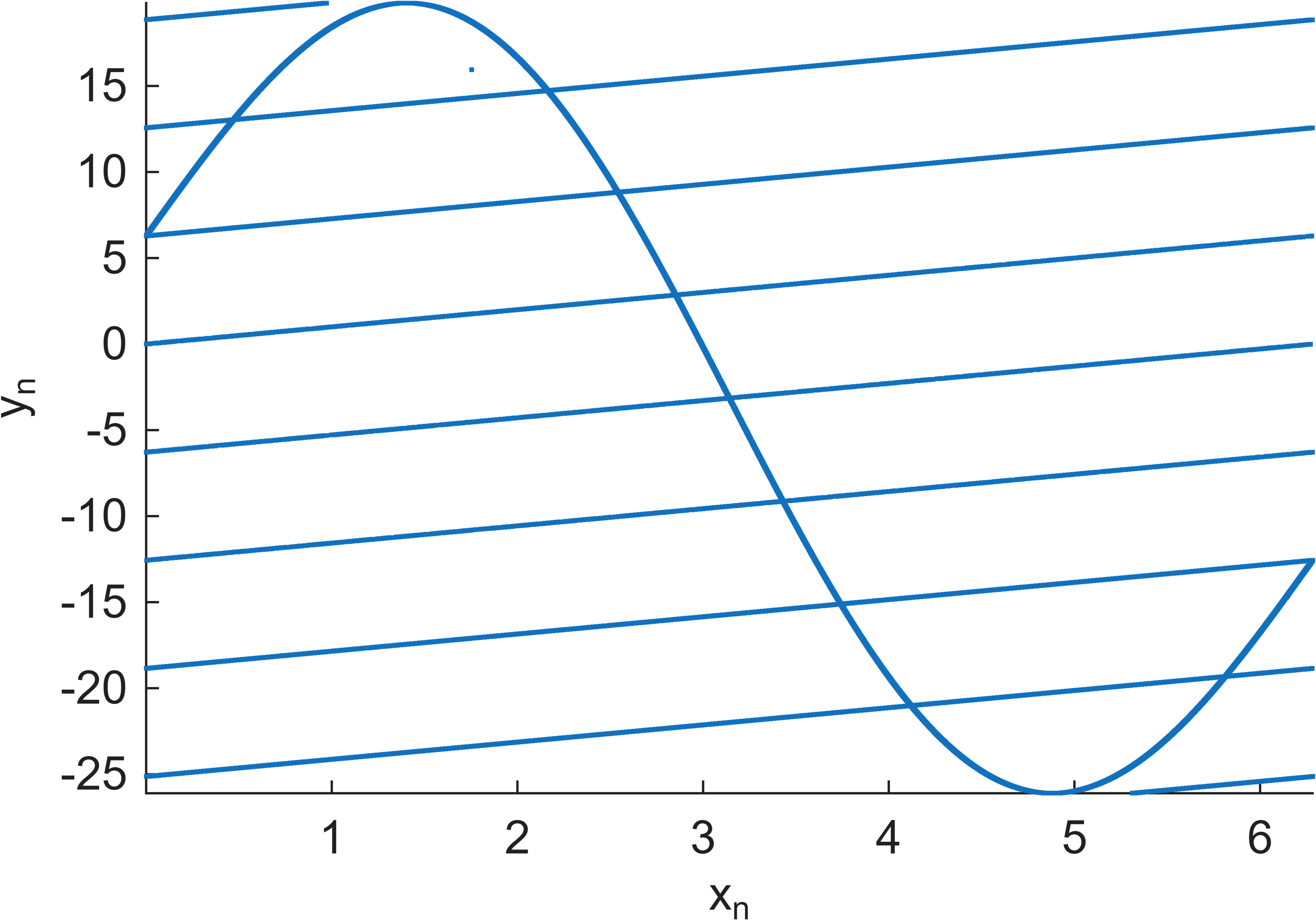}} \quad
\subfloat[][\label{sym_sameIC_xy_gamma23}]
{\includegraphics[width=.3\textwidth]{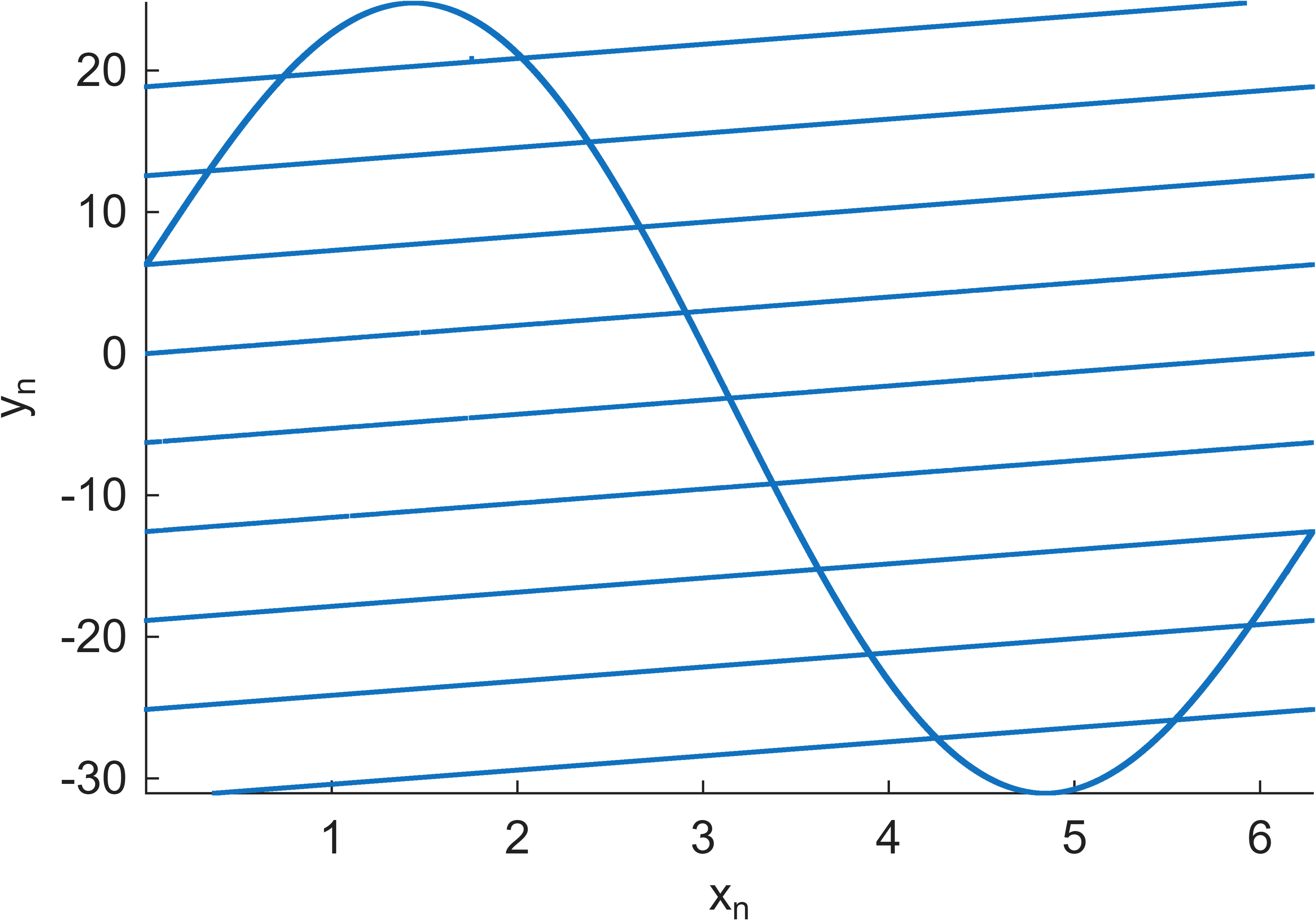}}
\caption{Projection onto the plane $(x, y)$, symmetric case $\tilde \alpha = -\alpha$. Parameters: $\alpha = -2$, $\tilde \alpha = 2$, $\beta = 1$. Initial conditions: $x_0 = 1.754$, $y_0 = 1.754$, $z_0 = 1.754$, $\omega_0 = 1.754$. The simulation runs for $10^5$ iterations. a) $\gamma = 8$. b) $\gamma = 13$. c) $\gamma = 18$. d) $\gamma = 23$.}
\label{sym_sameIC_xy}
\end{figure}

In Figure \ref{attractor_xz_tot} - \ref{attractor_yw_tot} we set $\tilde \alpha \ne \alpha$. It is recognized the arising of a strange attractor for the system, that exhibits a fractal structure analogous to the Hénon attractor \cite{Henon}. This attractor is confined to the projection planes $(x, z)$ and $(y, \omega)$, while it is replaced by completely random motion by taking projections on the planes $(x, y)$ and $(z, \omega)$, as shown in Figure \ref{attractor_xy_zw}.
\begin{figure}
\centering
\subfloat[][\label{attractor_xz}]
{\includegraphics[width=.85\textwidth]{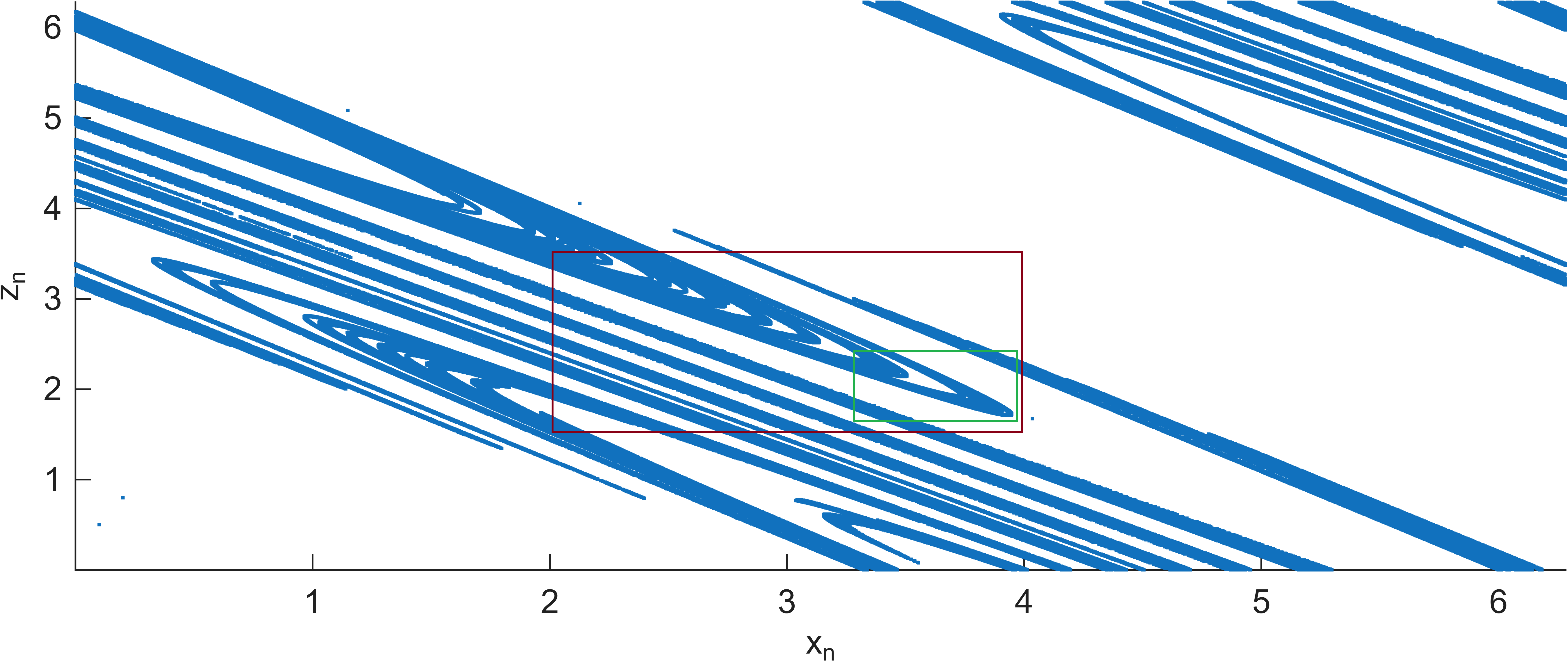}} \\
\subfloat[][\label{attractor_xz_zoom1}]
{\includegraphics[width=.85\textwidth]{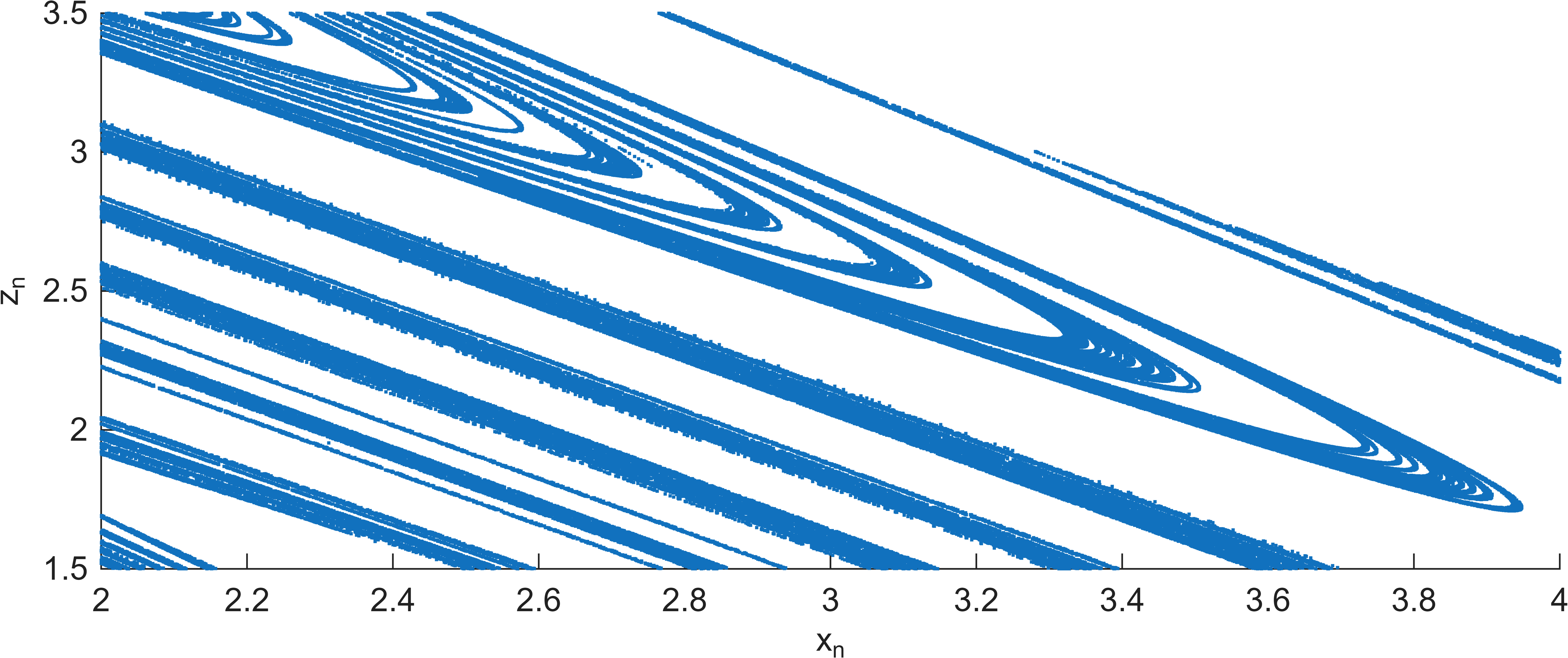}} \quad
\subfloat[][\label{attractor_xz_zoom2}]
{\includegraphics[width=.85\textwidth]{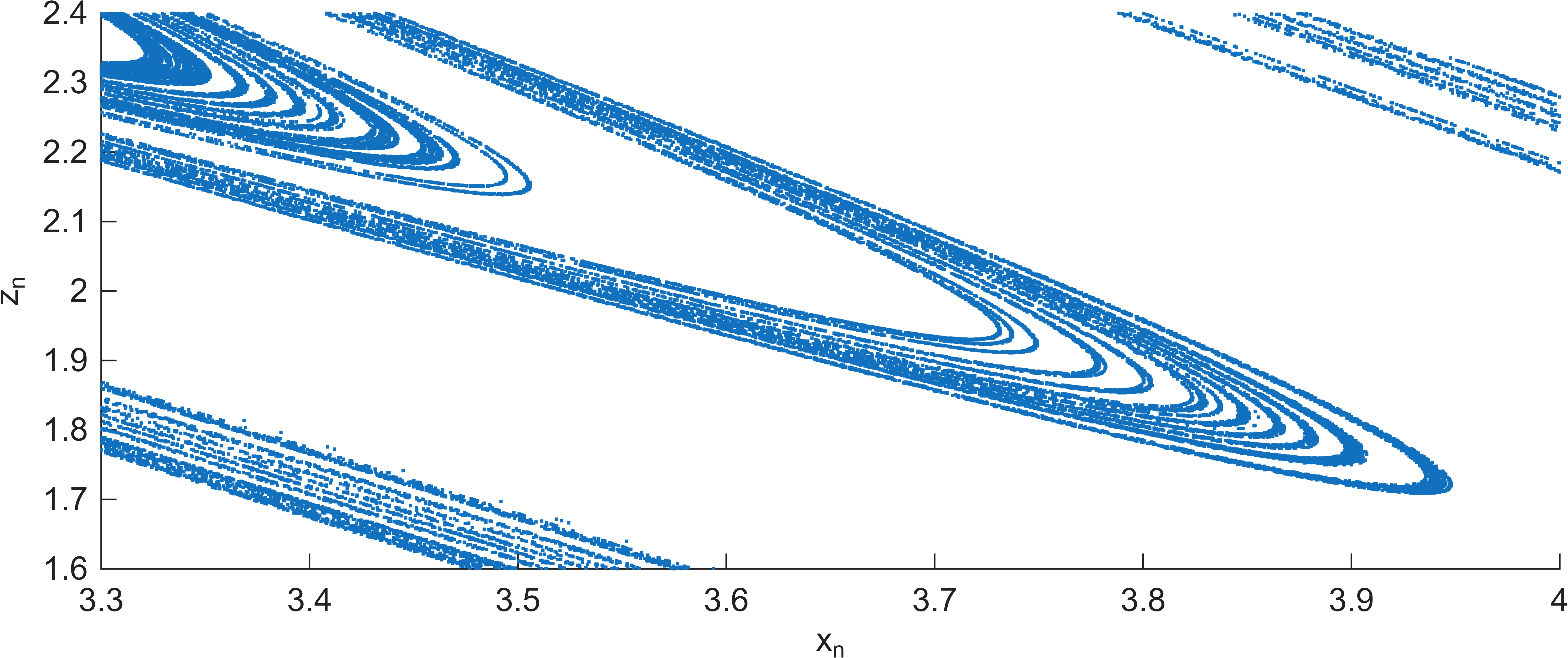}}
\caption{Confined strange attractor for the projection onto the plane $(x, z)$, non-symmetric case $\tilde \alpha \ne -\alpha$. The simulation runs for $10^6$ iterations. Parameters: $\alpha = -2.5$, $\tilde \alpha = 2$, $\beta = 0.8$, $\gamma = 10$. Initial conditions: $x_0 = 0.2$, $y_0 = 0.1$, $z_0 = 0.8$, $\omega_0 = 0.5$. a) Full attractor. b) Magnification of a) related to the red rectangle. c) Magnification of a) related to the green rectangle.}
\label{attractor_xz_tot}
\end{figure}
\begin{figure}
\centering
\subfloat[][\label{attractor_yw}]
{\includegraphics[width=.85\textwidth]{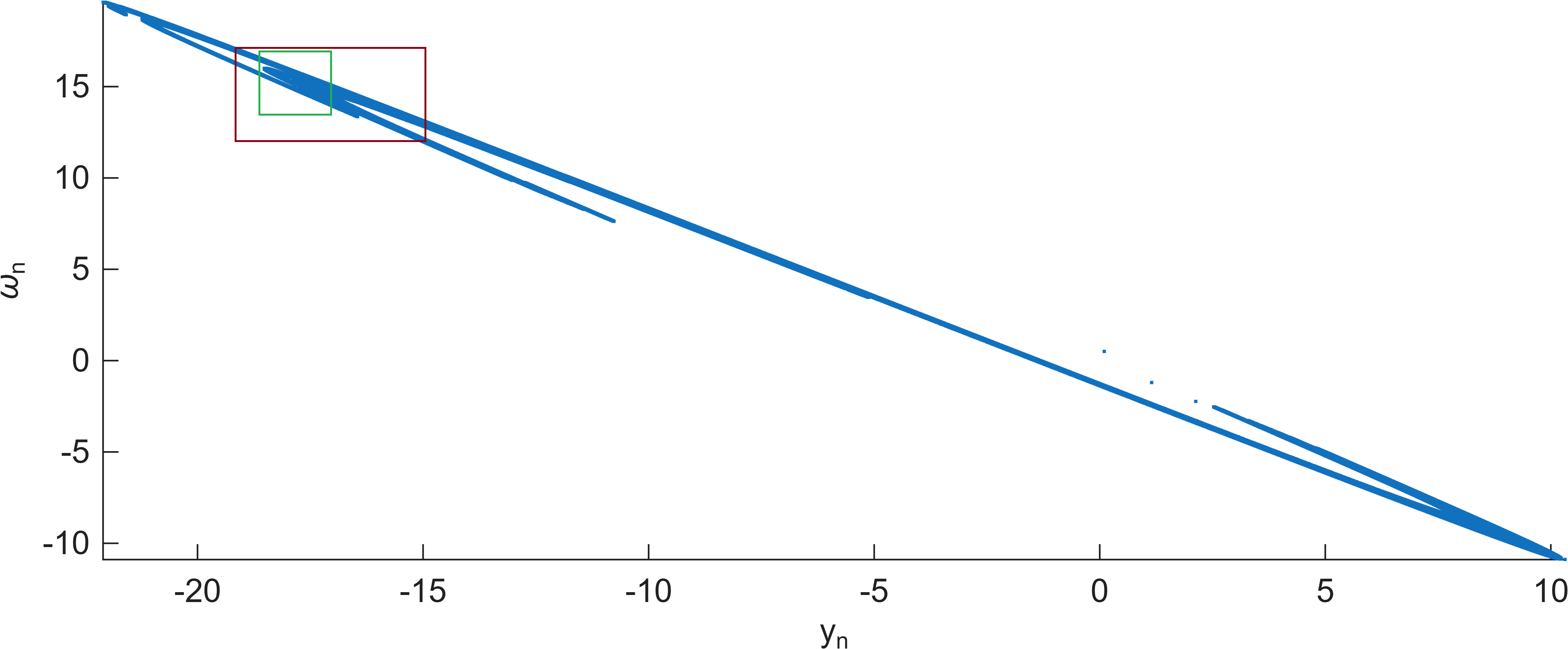}} \\
\subfloat[][\label{attractor_yw_zoom1}]
{\includegraphics[width=.85\textwidth]{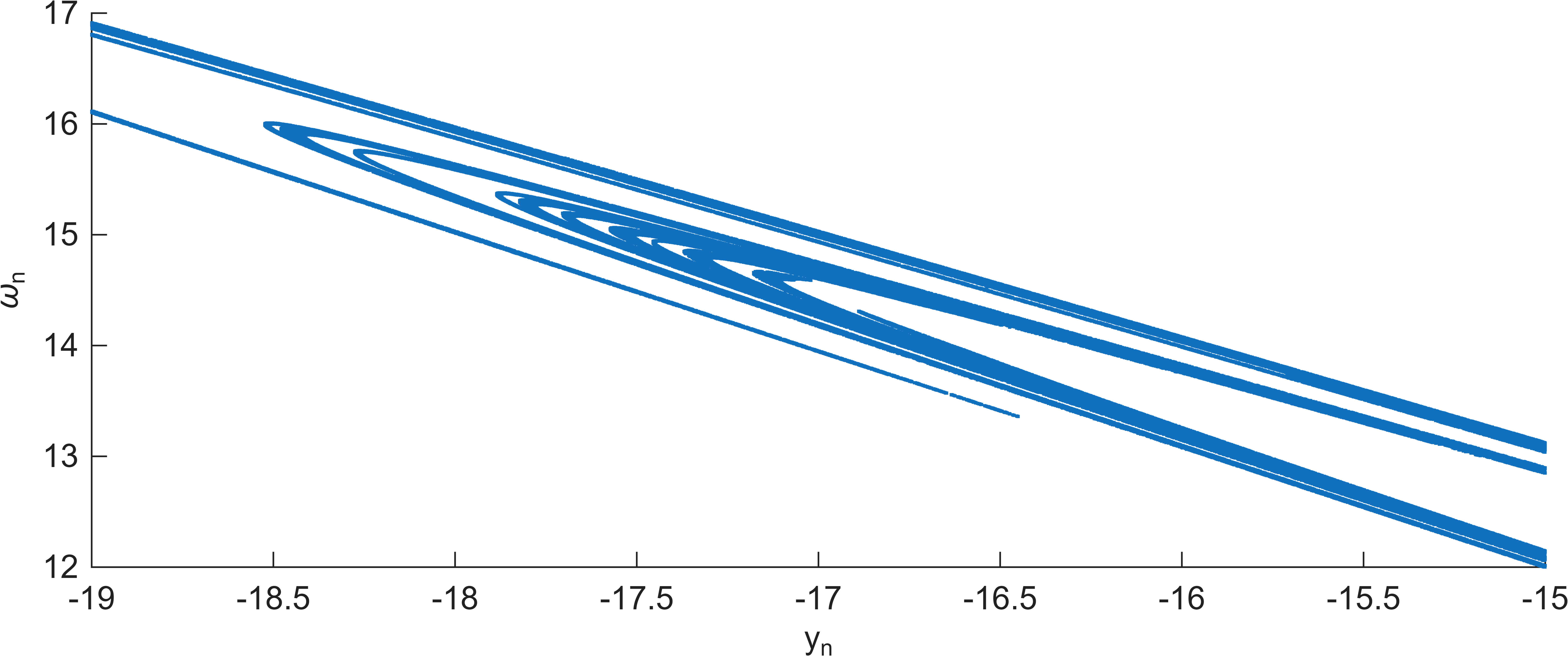}} \quad
\subfloat[][\label{attractor_yw_zoom2}]
{\includegraphics[width=.85\textwidth]{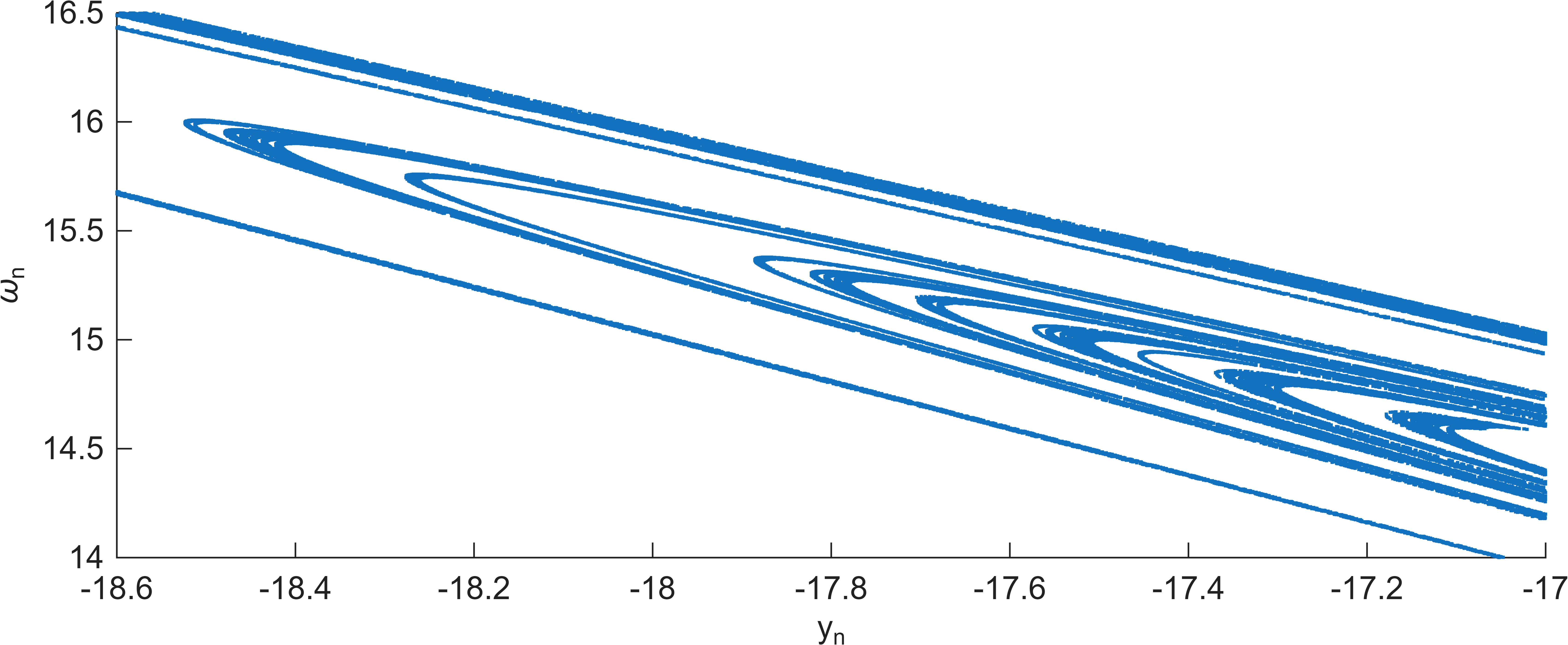}}
\caption{Confined strange attractor for the projection onto the plane $(y, \omega)$, non-symmetric case $\tilde \alpha \ne -\alpha$. The simulation runs for $10^6$ iterations. Parameters: $\alpha = -2.5$, $\tilde \alpha = 2$, $\beta = 0.8$, $\gamma = 10$. Initial conditions: $x_0 = 0.2$, $y_0 = 0.1$, $z_0 = 0.8$, $\omega_0 = 0.5$. a) Full attractor. b) Magnification of a) related to the red rectangle. c) Magnification of a) related to the green rectangle.}
\label{attractor_yw_tot}
\end{figure}
\begin{figure}
\centering
\subfloat[][\label{attractor_xy}]
{\includegraphics[width=.3\textwidth]{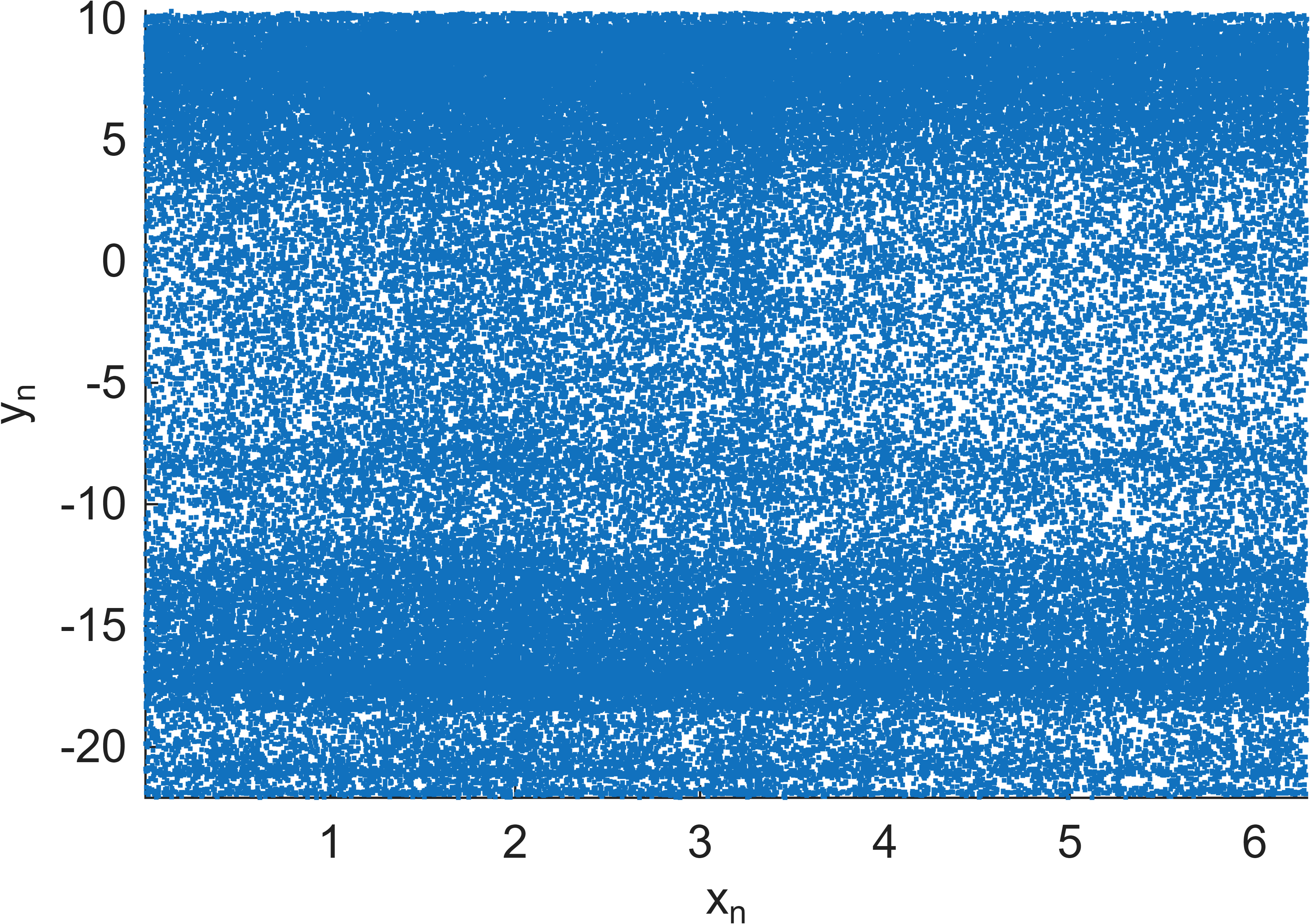}} \quad
\subfloat[][\label{attractor_zw}]
{\includegraphics[width=.3\textwidth]{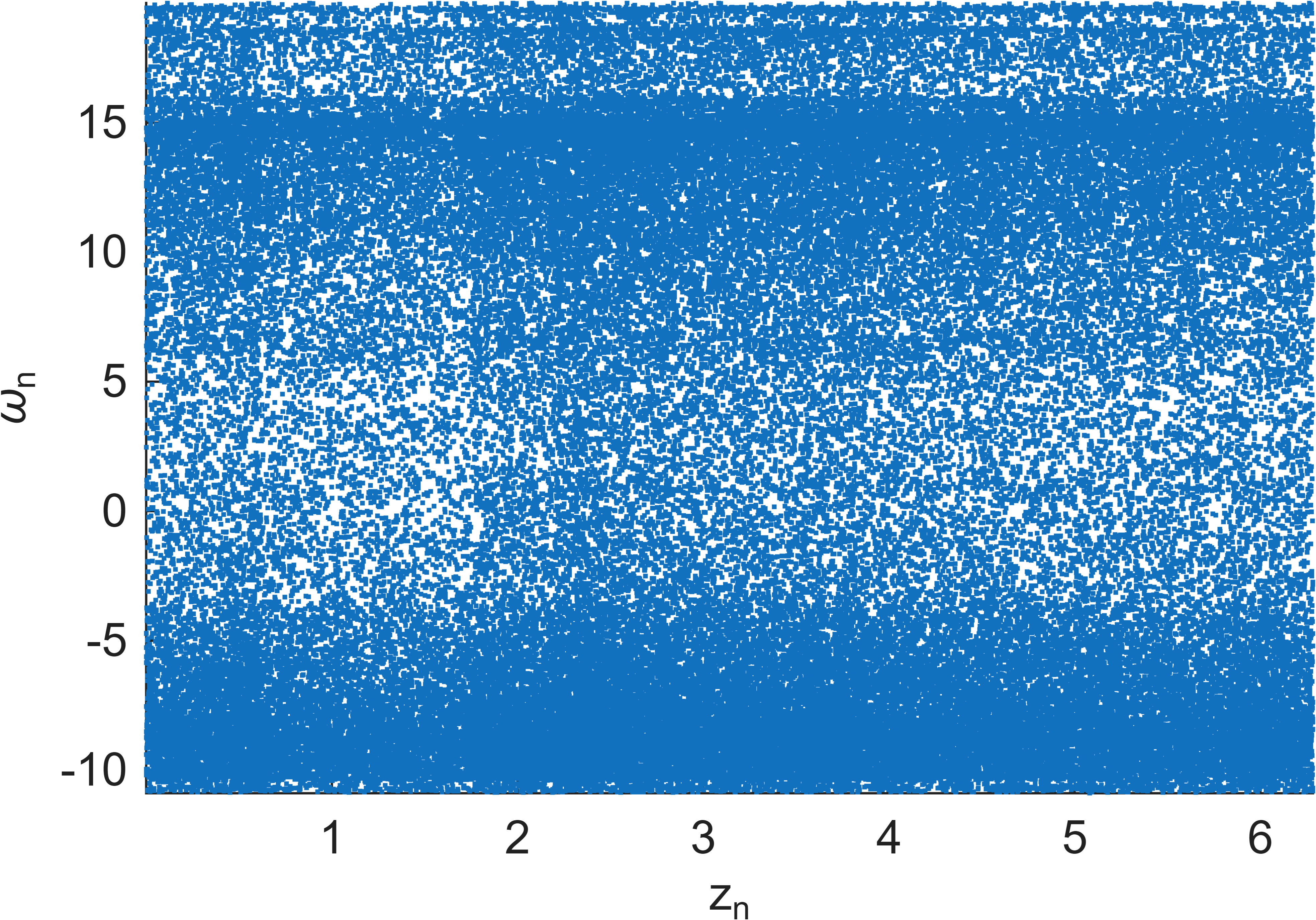}}
\caption{Random motion for projections on the planes $(x, y)$, $(z, \omega)$, non-symmetric case $\tilde \alpha \ne -\alpha$. The simulation shows $10^5$ iterations. Parameters: $\alpha = -2.5$, $\tilde \alpha = 2$, $\beta = 0.8$, $\gamma = 10$. Initial conditions: $x_0 = 0.2$, $y_0 = 0.1$, $z_0 = 0.8$, $\omega_0 = 0.5$. a) Projection on the plane $(x, y)$. b) Projection on the plane $(z, \omega)$.}
\label{attractor_xy_zw}
\end{figure}
\begin{rem}
Simulations shown in Figure \ref{attractor_xz_tot} - \ref{attractor_yw_tot} are not an isolated case. By varying parameters, the shape of the attractor changes; however, the attractor is always confined to the coordinate-coordinate and momentum-momentum projection planes, while it is replaced by completely random motion on the coordinate-momentum projection plane. For fixed values of parameters, the attractor survives for almost every sets of initial conditions.
\end{rem}
Now, we briefly focus on the case $\gamma = 0$. By recalling the physical origin of the system \eqref{Ziegler_map}, the term $\gamma$ represents an external non-conservative force acting on an Hamiltonian system; by setting $\gamma = 0$, the map comes out from an integrable continuous system whose phase space is foliated in invariant tori, due to Liouville-Arnold theorem \cite{Liouville, Arnold1}. Then, it is worth to see whether the strange attractor shown in Figure \ref{attractor_xz_tot} - \ref{attractor_yw_tot} survives for this unperturbed case.

In Figure \ref{attractor_xz_gamma0_tot} - \ref{attractor_yw_gamma0_tot} it is shown the orbit of the system \eqref{Ziegler_map_mod}; remaining parameters and initial conditions are taken as in Figure \ref{attractor_xz_tot} - \ref{attractor_yw_tot}. We can see that the strange attractor survives maintaining a fractal structure; however, the absence of the sinusoidal term due to a vanishing value for $\gamma$ removes the curved trajectories present in Figure \ref{attractor_xz_tot} - \ref{attractor_yw_tot}.
\begin{figure}
\centering
\subfloat[][\label{attractor_xz_gamma0}]
{\includegraphics[width=.3\textwidth]{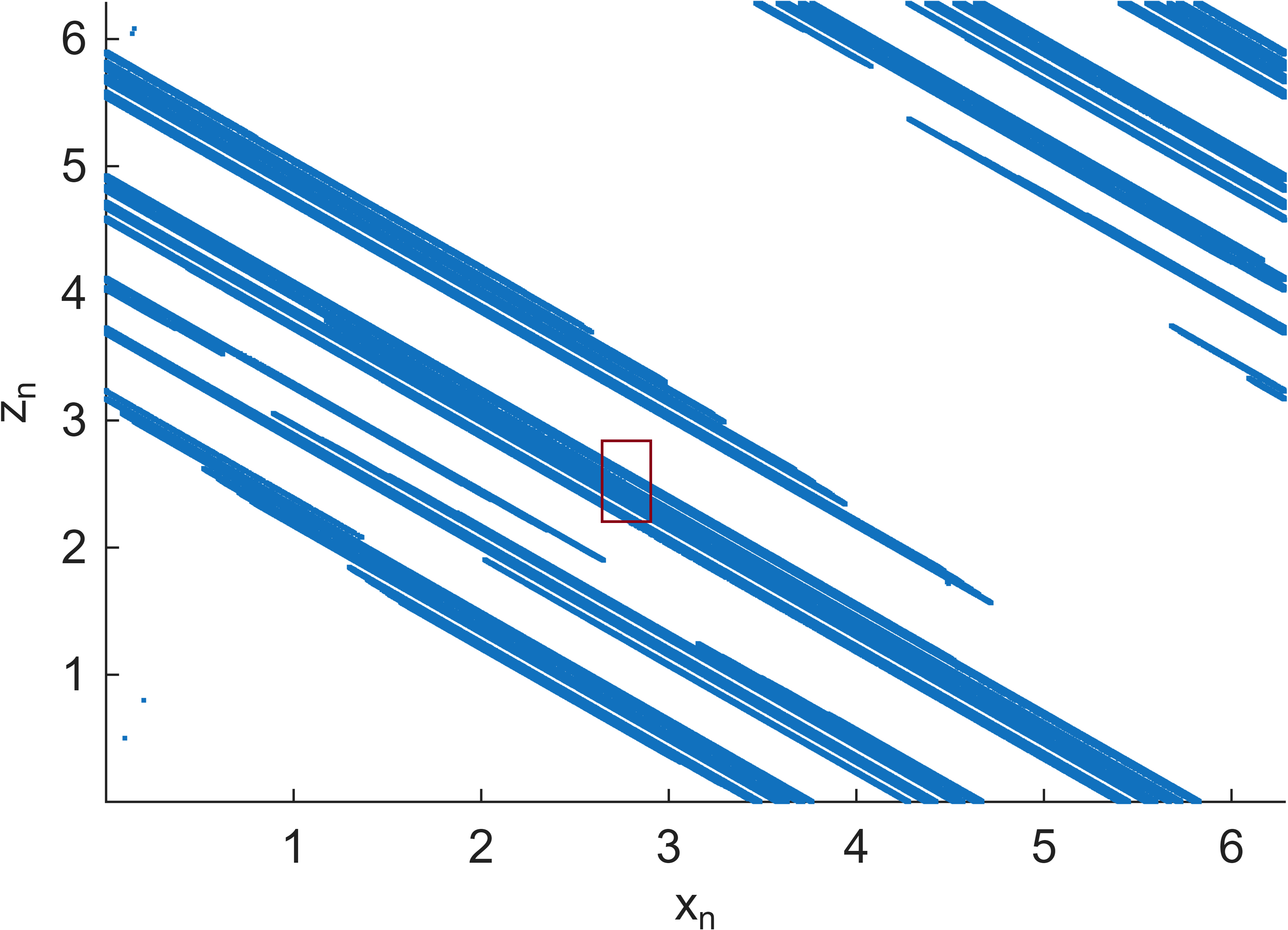}} \quad
\subfloat[][\label{attractor_xz_gamma0_zoom}]
{\includegraphics[width=.3\textwidth]{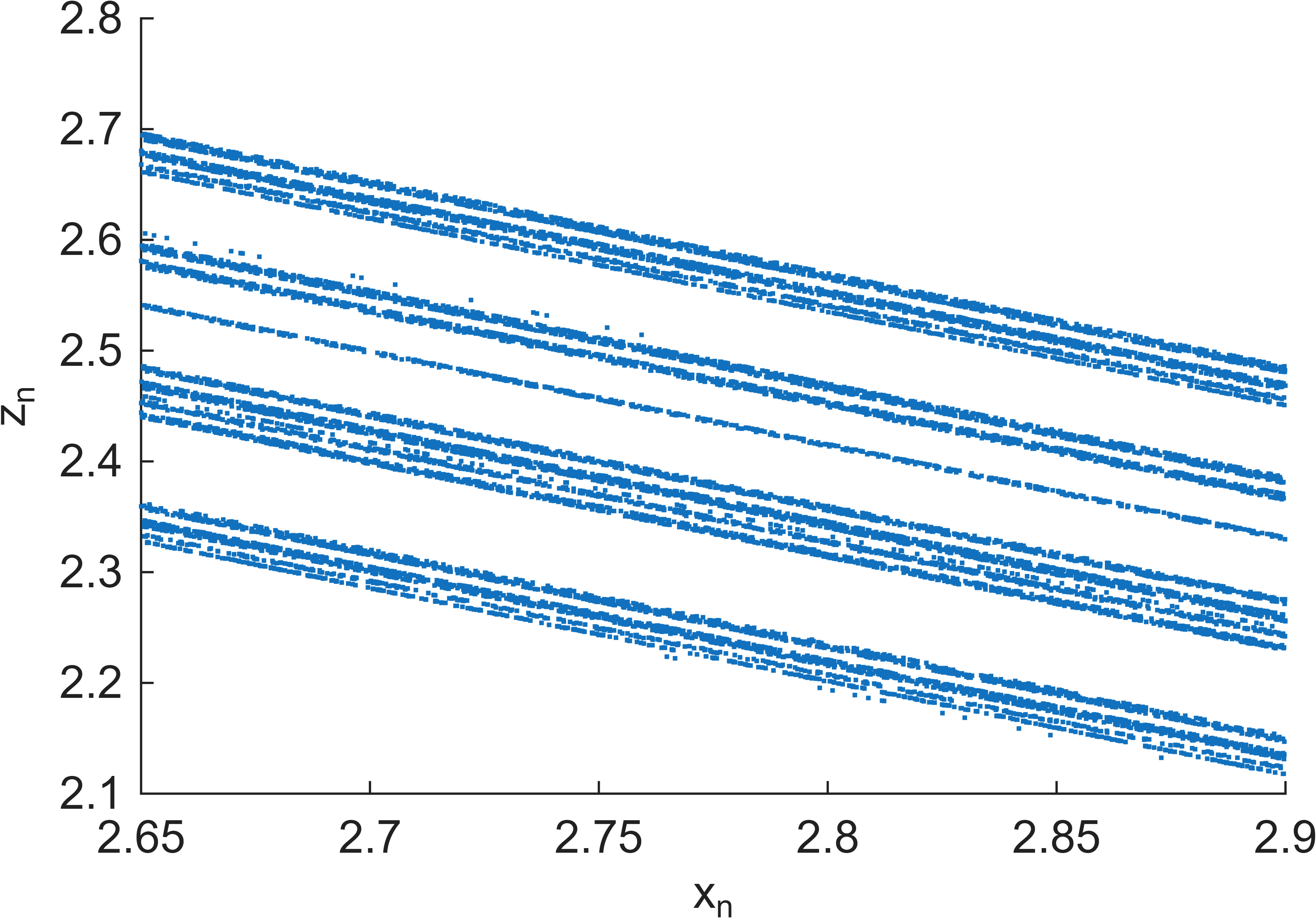}}
\caption{Confined strange attractor for the projection onto the plane $(x, z)$, non-symmetric case $\tilde \alpha \ne -\alpha$. The simulation runs for $10^6$ iterations. Parameters: $\alpha = -2.5$, $\tilde \alpha = 2$, $\beta = 0.8$, $\gamma = 0$. Initial conditions: $x_0 = 0.2$, $y_0 = 0.1$, $z_0 = 0.8$, $\omega_0 = 0.5$. a) Full attractor. b) Magnification of a) related to the red rectangle.}
\label{attractor_xz_gamma0_tot}
\end{figure}
\begin{figure}
\centering
\subfloat[][\label{attractor_yw_gamma0}]
{\includegraphics[width=.3\textwidth]{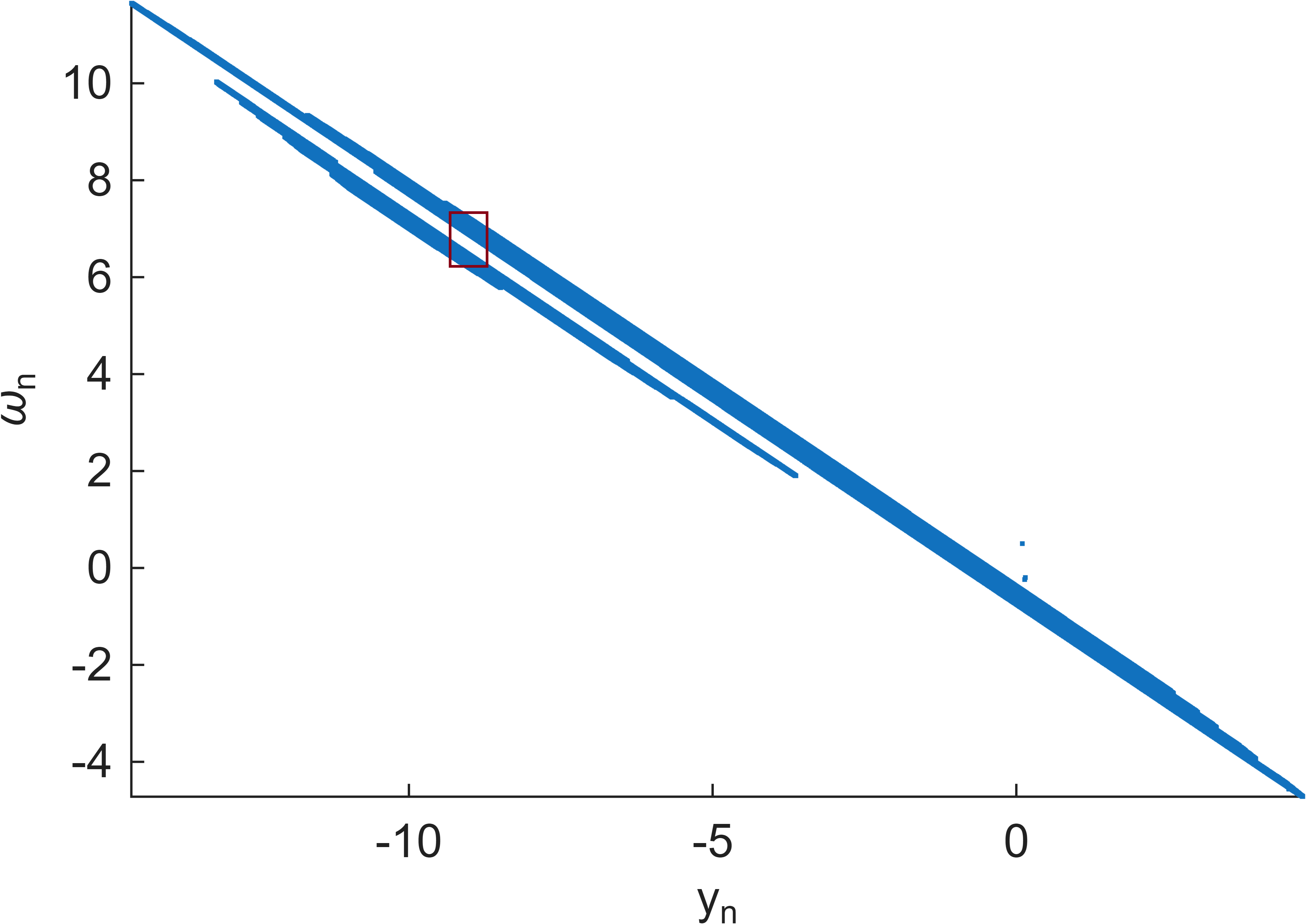}} \quad
\subfloat[][\label{attractor_yw_gamma0_zoom}]
{\includegraphics[width=.3\textwidth]{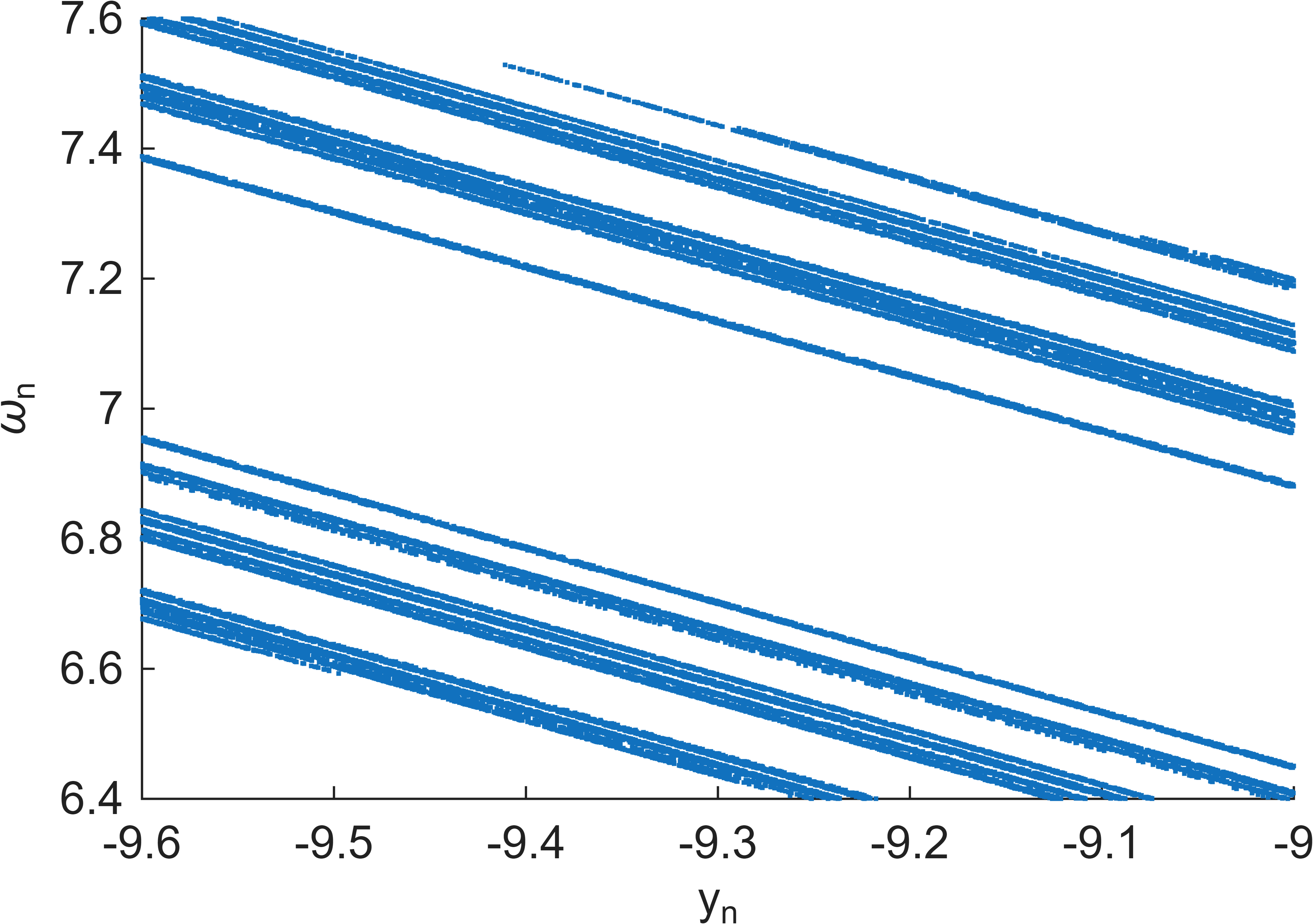}}
\caption{Confined strange attractor for the projection onto the plane $(y, \omega)$, non-symmetric case $\tilde \alpha \ne -\alpha$. The simulation runs for $10^6$ iterations. Parameters: $\alpha = -2.5$, $\tilde \alpha = 2$, $\beta = 0.8$, $\gamma = 0$. Initial conditions: $x_0 = 0.2$, $y_0 = 0.1$, $z_0 = 0.8$, $\omega_0 = 0.5$. a) Full attractor. b) Magnification of a) related to the red rectangle.}
\label{attractor_yw_gamma0_tot}
\end{figure}

In Figure \ref{attractor_2orbits} they are shown projections on the planes $(x, z)$ and $(y, \omega)$ of two trajectories with very close initial conditions, giving a further confirmation of the sensitive dependence on initial data for the system. The sensitive dependence of the system shown in Figure \ref{attractor_2orbits} will found another confirm in Section \ref{sec_numerical_bifurcation} through the computation of positive Lyapunov exponents.
\begin{figure}
\centering
\subfloat[][\label{attractor_xz_2orbits}]
{\includegraphics[width=.4\textwidth]{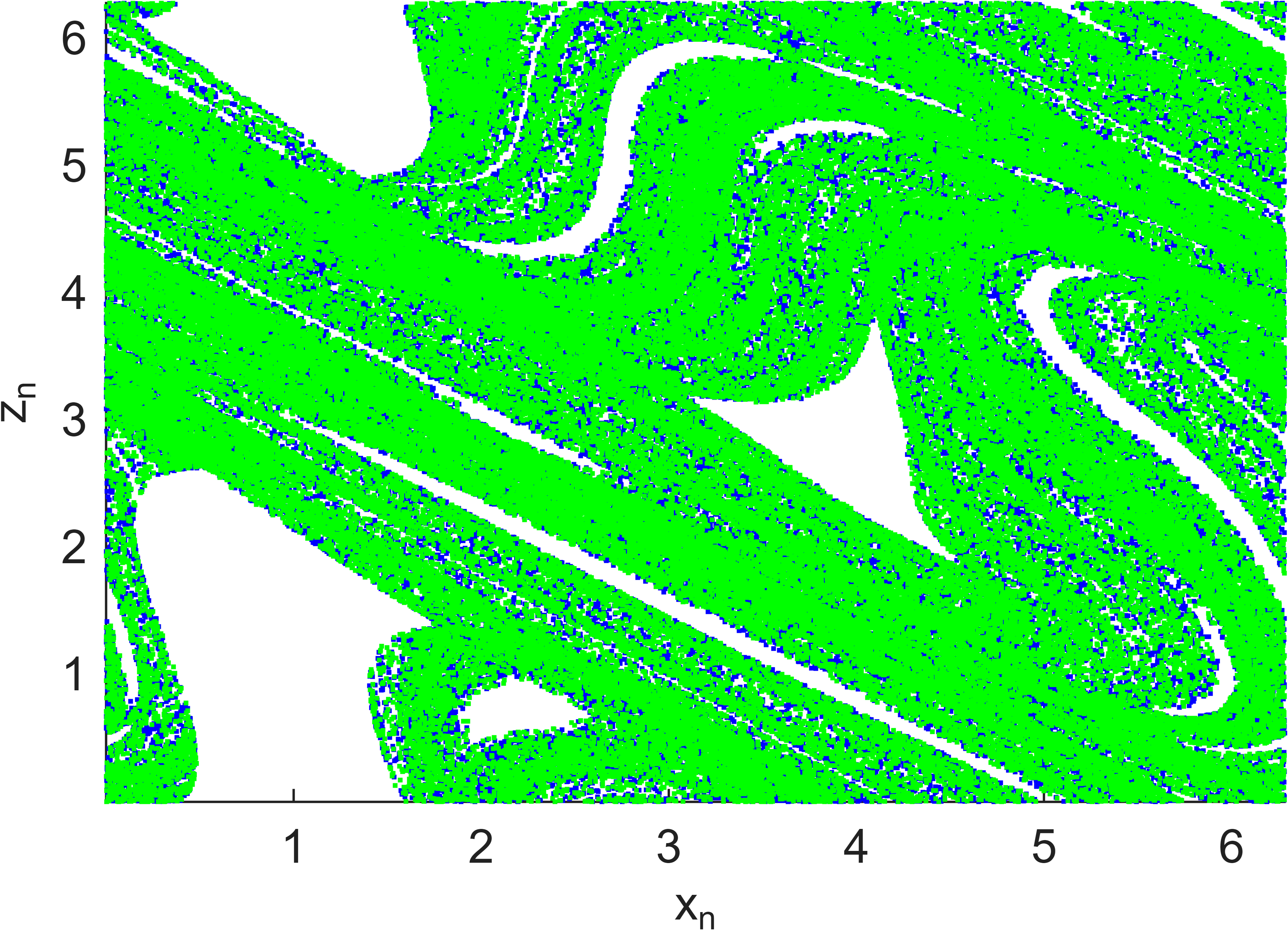}} \quad
\subfloat[][\label{attractor_yw_2orbits}]
{\includegraphics[width=.4\textwidth]{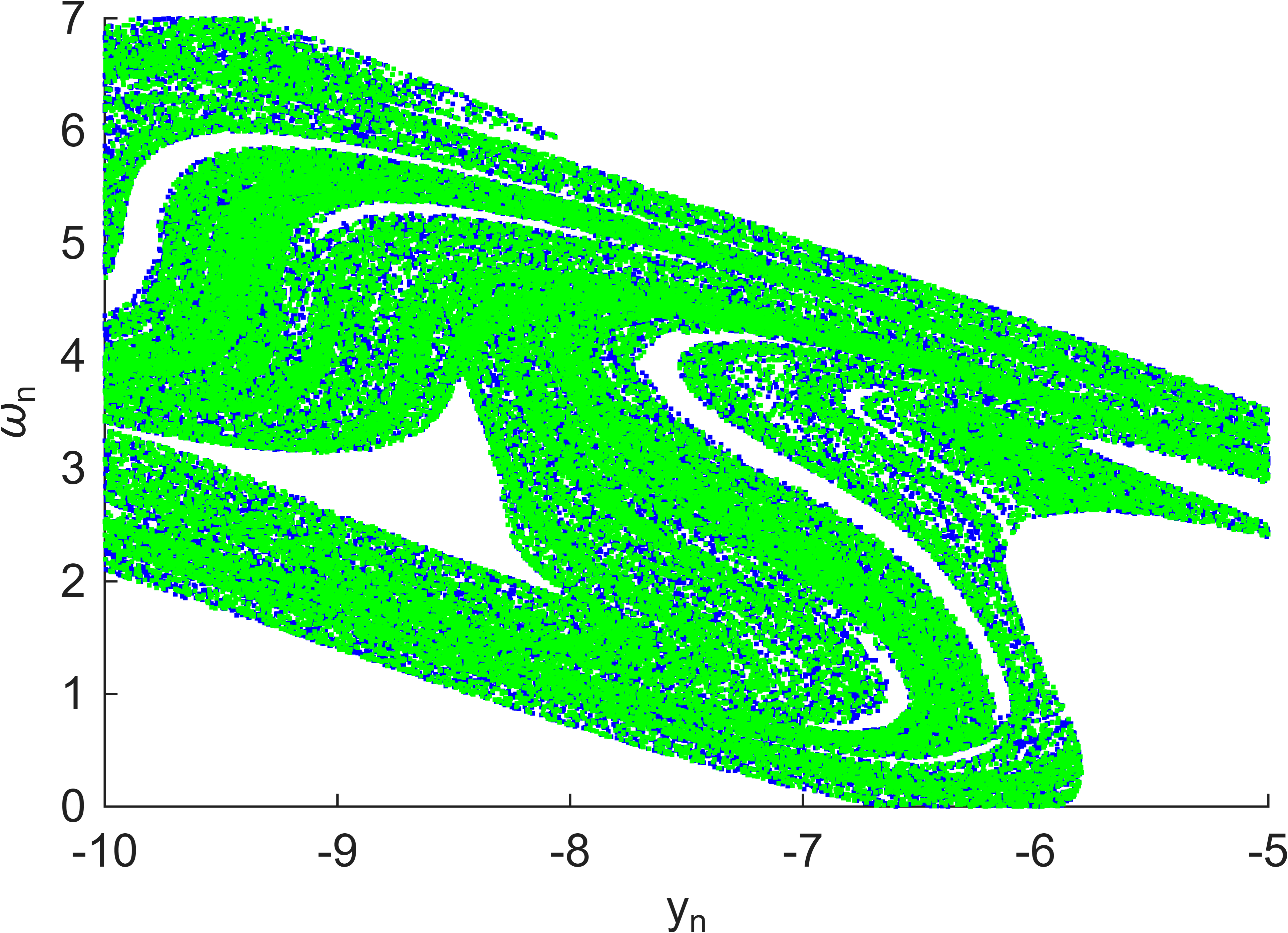}}
\caption{Confined strange attractor for two very close initial conditions. Parameters: $\alpha = -3.5$, $\tilde \alpha = 2$, $\beta = 0.8$, $\gamma = -4$. Initial conditions, blue orbit: $x_0 = 1.4$, $y_0 = 0.8$, $z_0 = 2.7$, $\omega_0 = 2.34$. Initial conditions, green orbit: $x_0 = 1.4001$, $y_0 = 0.7999$, $z_0 = 2.7001$, $\omega_0 = 2.3399$. The simulation runs for $10^5$ iterations. a) Projection onto the plane $(x, z)$. b) Projection onto the plane $(y, \omega)$, magnification on the center.}
\label{attractor_2orbits}
\end{figure}

\subsection{Bifurcation diagrams and Lyapunov exponents}\label{sec_numerical_bifurcation}
In this section we perform a bifurcation analysis of the system \eqref{Ziegler_map_mod}. We start with the following simple result.
\begin{prop}\label{prop_1D_map}
For $\tilde \alpha = -\alpha$, the dynamics of \eqref{Ziegler_map_mod} is uniquely determined by $y_0$, $y_1$.
\end{prop}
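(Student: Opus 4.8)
The plan is to show that, when $\tilde\alpha=-\alpha$ (equivalently $\sigma=-(\alpha+\tilde\alpha)=0$), three of the four coordinates become slaved to the scalar sequence $(y_n)$, and that $(y_n)$ itself obeys a closed one–dimensional recursion. First I would use the fourth line of \eqref{Ziegler_map_mod}: with $\sigma=0$ it reads $\omega_{n+1}=-y_{n+1}$, so $\omega_n=-y_n$ for every $n\ge 1$. Feeding this into the third line gives $z_{n+1}=\omega_n\bmod 2\pi=(-y_n)\bmod 2\pi$ for $n\ge 1$, i.e.\ $z_n=(-y_{n-1})\bmod 2\pi$ for $n\ge 2$; and the first line already gives $x_n=y_{n-1}\bmod 2\pi$ for $n\ge 1$. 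Thus, once $n$ is large enough, each of $x_n$, $z_n$, $\omega_n$ is an explicit function of the $y$-sequence alone.

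Next I would close the recursion on $y$. Substituting the slaving relations into the second line of \eqref{Ziegler_map_mod} and using the $2\pi$-periodicity of the sine, one gets, for $n\ge 2$,
\[
y_{n+1}=\alpha\,(y_{n-1}\bmod 2\pi)+\beta\,\bigl((-y_{n-1})\bmod 2\pi\bigr)+\gamma\sin(y_{n-1})=:\Phi(y_{n-1}),
\]
so $y_{n+1}$ depends on $y_{n-1}$ only. Equivalently, the pair $(y_n,y_{n+1})$ evolves under the planar map $G(a,b)=(b,\Phi(a))$, whose second iterate $G^{(2)}(a,b)=(\Phi(a),\Phi(b))$ is the uncoupled product of two copies of the scalar map $\Phi$; hence the even- and odd-indexed subsequences of $(y_n)$ are each orbits of $\Phi$. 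Consequently the whole sequence $(y_n)_{n\ge 1}$ — and with it, via the reconstruction formulas of the first step, the whole orbit of \eqref{Ziegler_map_mod} — is determined by one consecutive pair of values; discarding the initial step (as one does when producing a bifurcation diagram) and relabelling, this pair is exactly $y_0,y_1$, which proves the proposition. I would also note in passing that on $(0,2\pi)$ one has $(-t)\bmod 2\pi=2\pi-t$, so there $\Phi(t)=(\alpha-\beta)t+2\pi\beta+\gamma\sin t$, a convenient closed form for the subsequent bifurcation study.

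The only point requiring care — and the step I would be most deliberate about — is the bookkeeping around the ``$\bmod 2\pi$'' operations together with the honest identification of the two genuinely free scalars: strictly, the raw initial components $x_0,z_0$ influence the orbit only through the single number $y_1=\alpha x_0+\beta z_0+\gamma\sin x_0$, and $\omega_0$ only through the single number $y_2$, so two of the four coordinates of the initial state are redundant for the dynamics. This is precisely the content of the statement once $y_0,y_1$ are read as a consecutive pair of values of the $y$-orbit, and everything beyond it is a direct substitution into \eqref{Ziegler_map_mod}.
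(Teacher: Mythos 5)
Your argument is essentially the paper's own proof: setting $\sigma=-(\alpha+\tilde\alpha)=0$, you slave $x_n$, $z_n$, $\omega_n$ to the $y$-sequence and close the recursion as $y_{n+2}=\Phi(y_n)$ with $\Phi(t)=\alpha\,\Mod(t,2\pi)+\beta\,\Mod(-t,2\pi)+\gamma\sin t$, then split the orbit into even and odd $\Phi$-orbits determined by a consecutive pair of $y$-values, exactly as in the paper. Your extra bookkeeping (that $x_0,z_0$ enter only through $y_1$ and $\omega_0$ only through $y_2$, so the claim is to be read after the initial transient/relabelling) is a correct refinement of a point the paper passes over silently, but it does not change the route.
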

\begin{proof}
We set $\tilde \alpha = -\alpha$. By substituting $z_n$, $\omega_n$ in the first two equations of \eqref{Ziegler_map_mod}, the system reduces to
\beq\begin{cases}
y_{n+1} = f( \Mod(y_{n-1} \,, 2\pi) \,, \Mod(-y_{n-1} \,, 2\pi) ) \\
y_{n+2} = f( \Mod(y_n \,, 2\pi) \,, \Mod(-y_n \,, 2\pi) ) \,.
\end{cases}\eeq
Therefore, \eqref{Ziegler_map_mod} is equivalent to
\beq\label{1D_map}
y_{n+2} = f(y_n) \,, \quad f(y_n) = \alpha \Mod(y_n \,, 2\pi) + \beta \Mod(-y_n \,, 2\pi) + \gamma \sin( \Mod(y_n \,, 2\pi) ) \,.
\eeq
Given $(y_0, y_1)$, the orbit is uniquely determined, i.e. $(y_0 \,, y_1) \to (y_2 \,, y_3 \,, y_4 \,, \dots)$. In particular, we can split the overall orbit into an even and an odd orbit, that depend uniquely on $y_0$ and $y_1$, respectively, i.e. for $k \in \bb{N}$:
\beq\begin{split}
&y_0 \to (y_2 \,, y_4 \,, \dots, y_{2k} \,, \dots) \,, \\
&y_1 \to (y_3 \,, y_5 \,, \dots, y_{2k+1} \,, \dots) \,.
\end{split}\eeq
\end{proof}
Given Proposition \ref{prop_1D_map}, we begin performing the bifurcation diagram of \eqref{1D_map}. We implement the classical algorithm \cite{Strogatz} for the construction of the bifurcation diagram. The basic idea of the algorithm is to compute the trajectory of the system for every value of the bifurcation parameter up to a certain iteration step ($n_{it}$), discarding the first iterations ($n_{tran}$) so that the system has actually reached its attractor and any transitory effects are not shown. In particular, we construct two separate lists $y_{even}$, $y_{odd}$ for the even and odd iterations respectively; then, we alternatively plot the values of $y_{even}$ and $y_{odd}$ in order to obtain the whole orbit.

In Figure \ref{1Dbif_gamma} it is shown the bifurcation diagram $(\gamma, y_n)$ associated to the map \eqref{1D_map} for $\tilde \alpha = -\alpha = -1$, $\beta = 0.1$, $\gamma \in [0, 5]$; we perform $n_{it} = 3.000$ iterations by discarding the first $n_{tran} = 300$ ones and discretizing $\gamma$ into $10.000$ values.
\begin{figure}
\centering
\includegraphics[width=10cm]{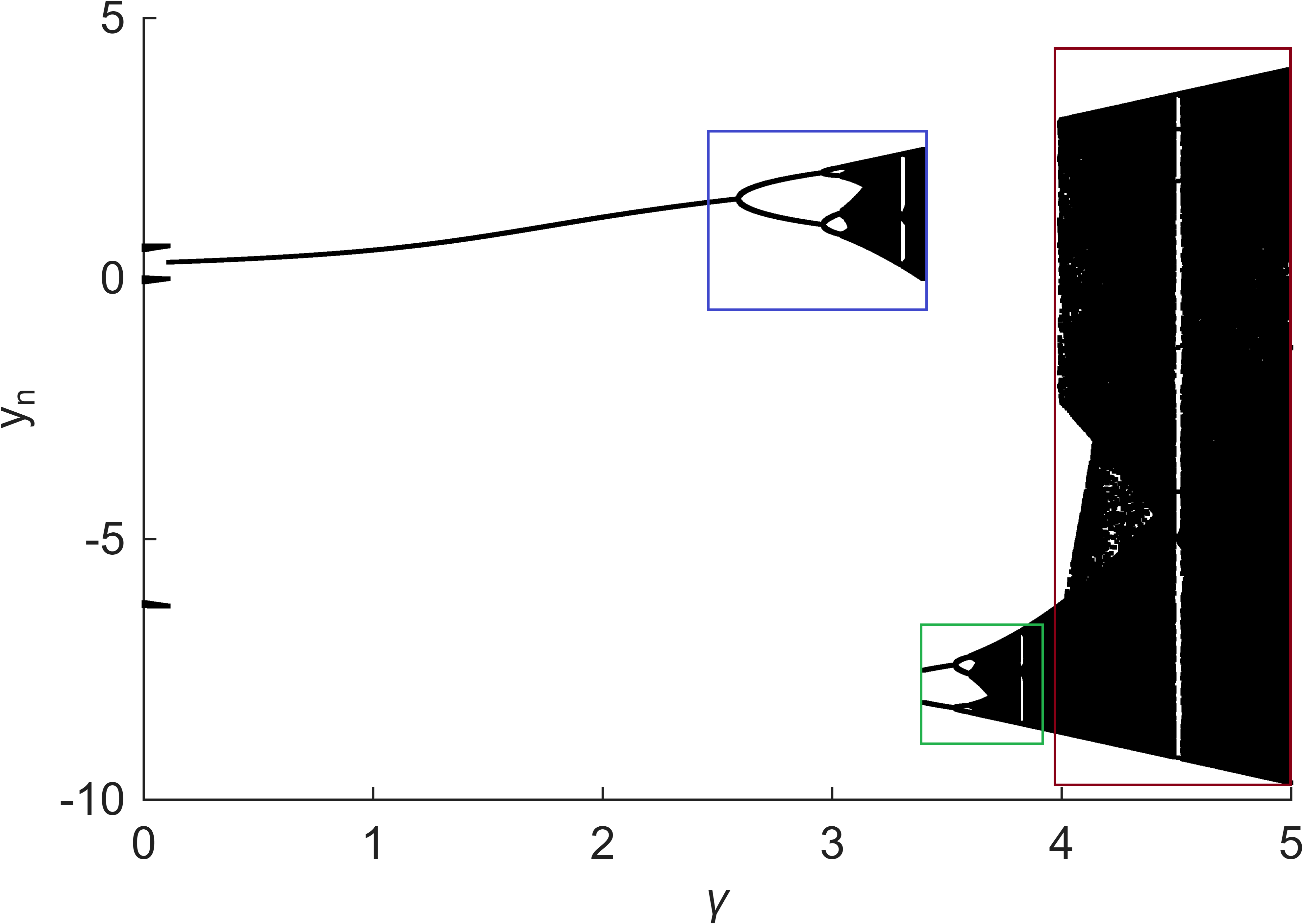}
\caption{Bifurcation diagram $(\gamma, y_n)$ for the map \eqref{1D_map}. Parameters: $\tilde \alpha = -\alpha = 1$, $\beta = 0.1$.}
\label{1Dbif_gamma}
\end{figure}
\begin{figure}
\centering
\includegraphics[width=7cm]{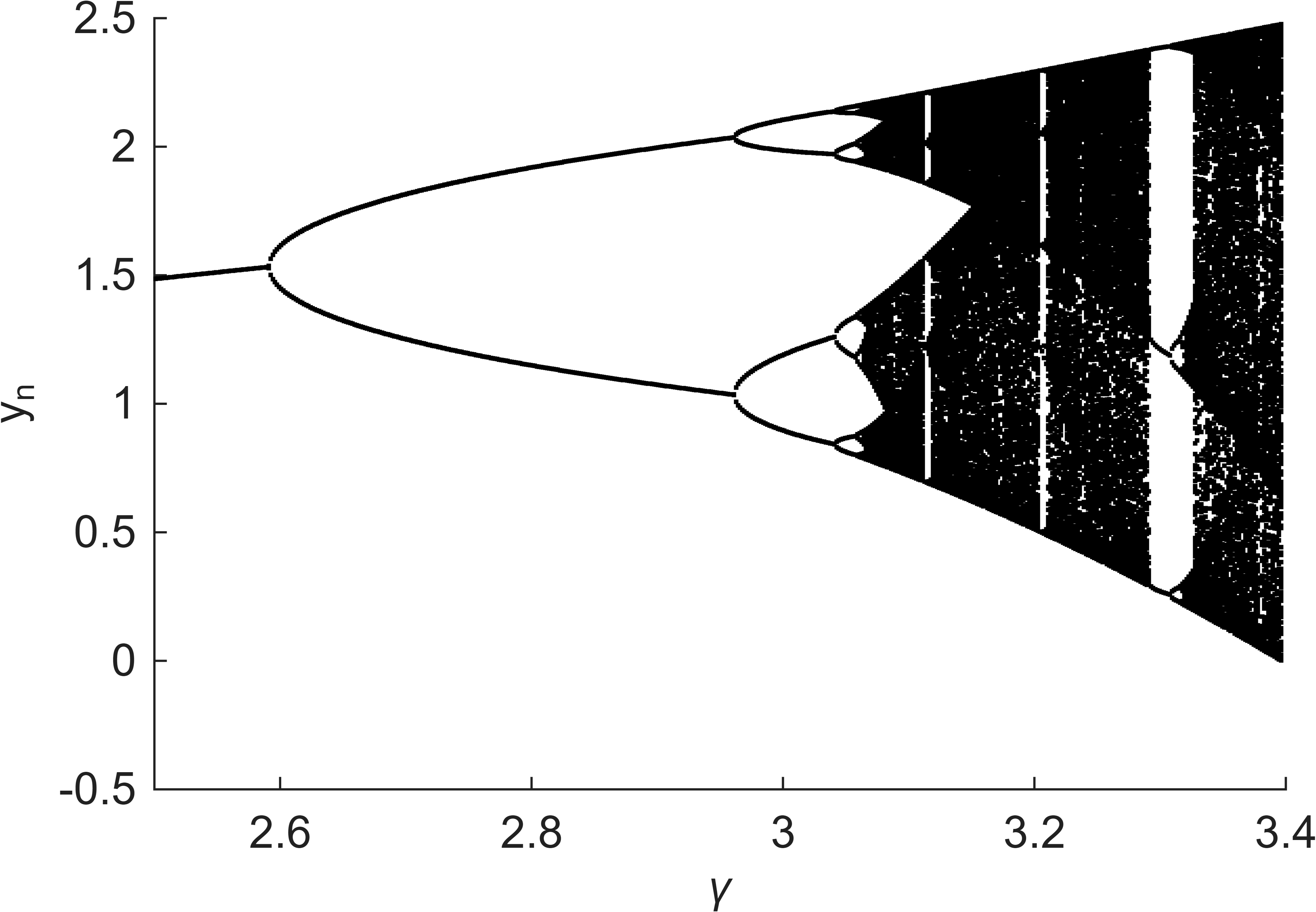}
\caption{Magnification of Figure \ref{1Dbif_gamma} related to the blue rectangle.}
\label{1Dbif_gamma_zoom1}
\end{figure}
\begin{figure}
\centering
\includegraphics[width=7cm]{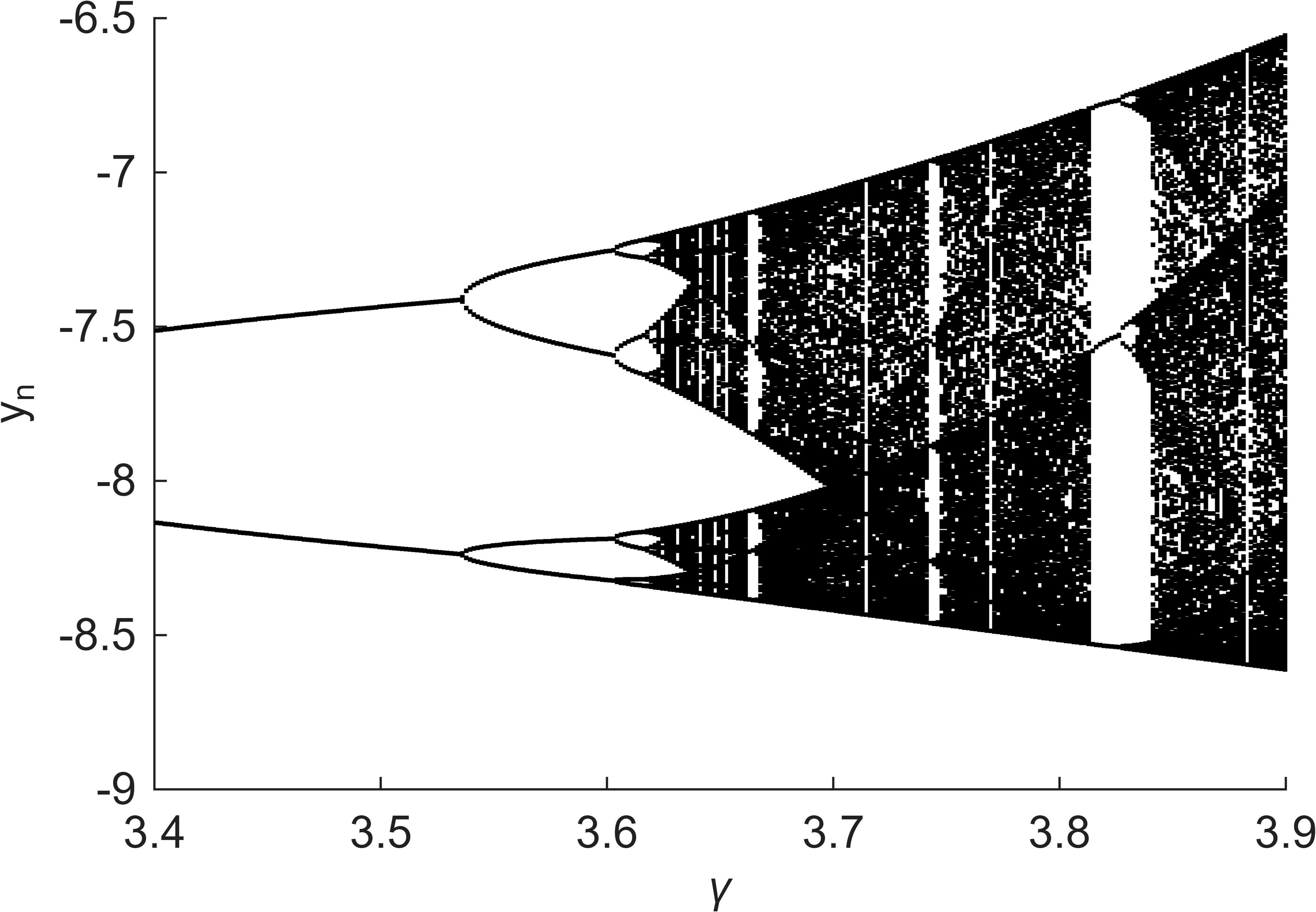}
\caption{Magnification of Figure \ref{1Dbif_gamma} related to the green rectangle.}
\label{1Dbif_gamma_zoom2}
\end{figure}
\begin{figure}
\centering
\includegraphics[width=7cm]{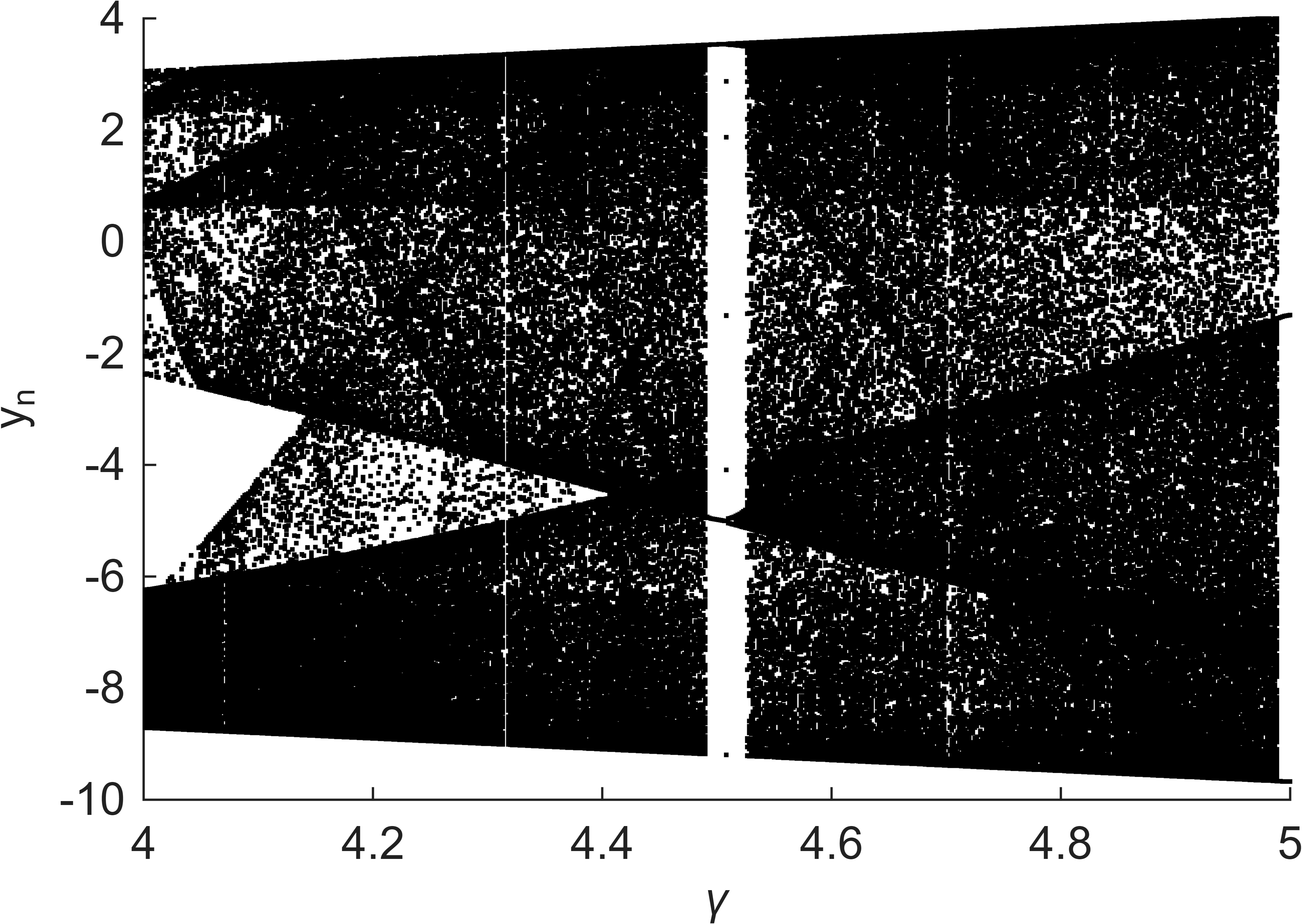}
\caption{Magnification of Figure \ref{1Dbif_gamma} related to the red rectangle.}
\label{1Dbif_gamma_zoom3}
\end{figure}
The system exhibits the typical period doubling cascade, encountered in the well known logistic map \cite{May} as well as in several discrete maps; however, it arises in a peculiar way.
\begin{itemize}
\item The system starts with periodic orbits of period $3$, $6$, $12$ up to $\gamma \sim 0.1$.
\item A large interval for $\gamma$ is associated to an attractive fixed point, depending on the value of $\gamma$ up to $\gamma \sim 2.2$.
\item A first period doubling bifurcation appears at $\gamma \sim 2.6$ and leads to a chaotic regime at $\gamma \sim 3.1$ and $y_n \in [-\frac{1}{2}, \frac{5}{2}]$.
\item The previous chaotic regime is sharply substituted by a a new 2-periodic orbit for $\gamma \sim 3.4$, followed by an analogous period doubling cascade at $\gamma \sim 3.6$ and $y_n \in [-9, -\frac{13}{2}]$.
\item From $\gamma \sim 4$ the previous chaotic regime is mixed with a different chaotic one for $y_n \in [-10, 4]$; this scenario exhibits islands of stability, for example at $\gamma \sim 4.5$.
\end{itemize}
Discontinuities exhibited in Figure \ref{1Dbif_gamma} are not totally unexpected. Since the map \eqref{1D_map}, such as \eqref{Ziegler_map_mod}, presents discontinuities of the first kind at $y_n = 2k\pi$, $k \in \bb{Z}$, we are supposed to see analogous jumps in the bifurcation diagram for $y_n$.

In discontinuous systems, multistability is a frequently observed phenomenon \cite{Pisarchik}, defined as the coexistence of different attractor and completely different behavior for a given choice on the parameters, depending solely on the initial conditions. To investigate the existence of multistable states, in Figure \ref{basins_gamma33788} we present basins of attraction of \eqref{Ziegler_map_mod} for values of the parameters corresponding to the second discontinuity shown in Figure \ref{1Dbif_gamma}. The simulations were performed using the Julia scripts provided in \cite{Wagemakers}, designed for generating basins of attraction for iterated maps and systems of ordinary differential equations. Figure \ref{basins_gamma33788} shown the projection of the basins of attraction onto the $(x, y)$ plane for $\alpha = -\tilde \alpha = -1$, $\beta = 0.1$, $\gamma = 3.3788$, with fixed $z_0 = \omega_0 = 0$ and $x_0, y_0 \in [0, 2\pi)$. The plot clearly shows the coexistence of at least three distinct attractors, identified by different colors corresponding to the asymptotic state reached by the initial conditions, confirming the presence of multistability in the system \eqref{Ziegler_map_mod}; furthermore, the basins exhibit a clear fractal structure.
\begin{figure}
\centering
\includegraphics[width=10cm]{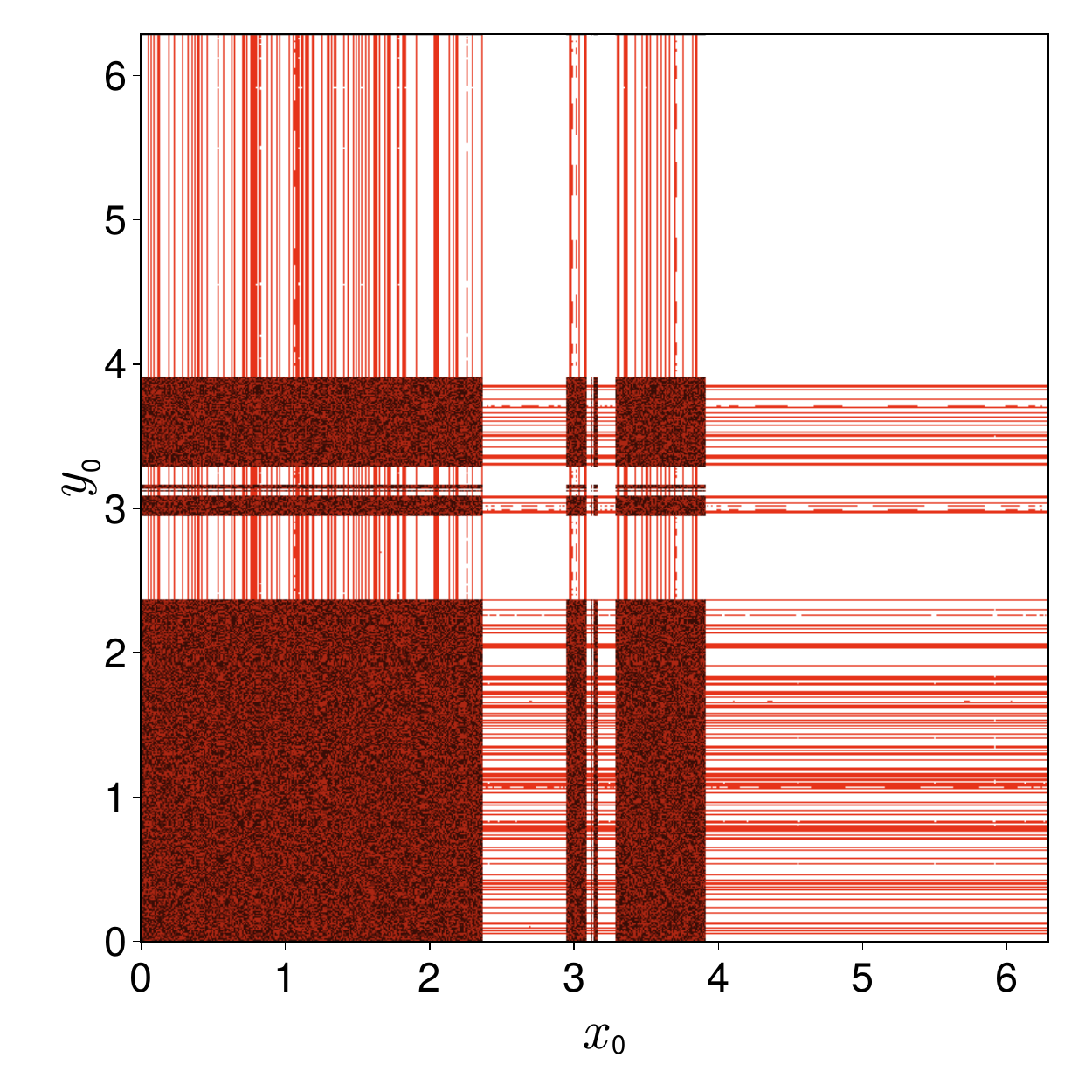}
\caption{Basins of attraction of \eqref{Ziegler_map_mod}, projection onto the $(x, y)$ plane. Parameters: $\alpha = -\tilde \alpha = -1$, $\beta = 0.1$, $\gamma = 3.3788$. Fixed initial conditions: $z_0 = \omega_0 = 0$. White, red and black regions correspond to distinct asymptotic states reached for a given set of initial condition $(x_0, y_0)$.}
\label{basins_gamma33788}
\end{figure}

Now, we perform a bifurcation analysis of the system \eqref{1D_map} in order to confirm the period doubling bifurcations shown in Figure \ref{1Dbif_gamma}. We use the software \textsc{MatContM}, a MATLAB tool devoted to bifurcation analysis of iterated smooth maps through continuation methods; see \cite{Dhooge} for a review of the native software \textsc{MatCont} (developed for ODE systems) and \cite{Neirynck} for an application of \textsc{MatContM} to a nonlinear map of economic interest.

As before, we implement the even orbit of \eqref{1D_map}, set $\tilde \alpha = - \alpha = -1$, $\beta = 0.1$ and take $\gamma$ as bifurcation parameter. In Figure \ref{1Dbif_PD} it is shown the continuation diagram for fixed points and period doubling bifurcations. We find period doubling bifurcations at $\gamma = 0.10516472$ and at $\gamma = 2.5912211$, according to the bifurcation diagram in Figure \ref{1Dbif_gamma}. It is also found a generalized period doubling bifurcation \cite{Kuznetsov} at $\gamma = 0.52673498$.
\begin{figure}
\centering
\includegraphics[width=7cm]{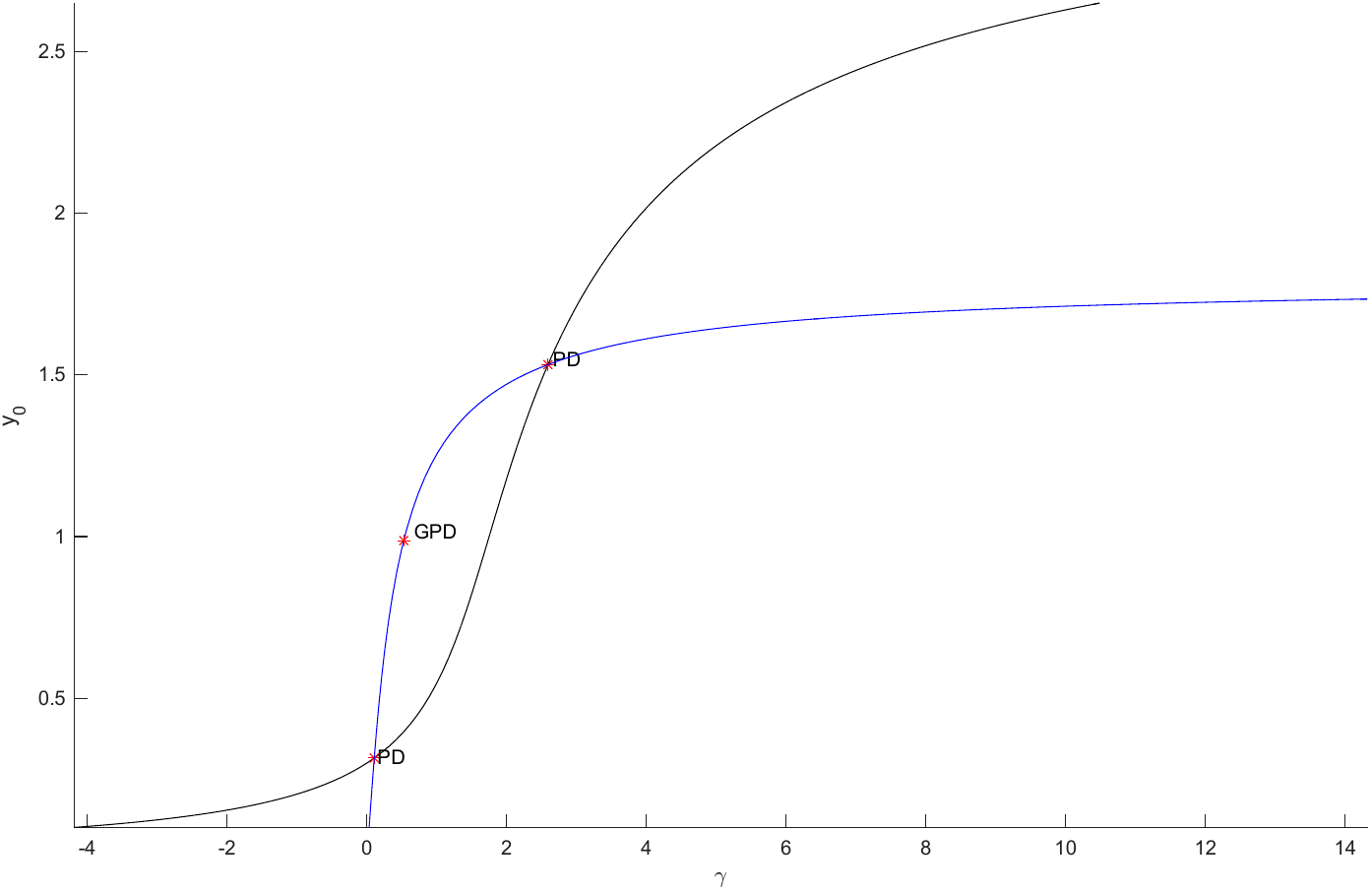}
\caption{Continuation diagram for fixed points (black) and period doubling bifurcations (blue) of the map \eqref{1D_map}. $\tilde \alpha = -\alpha = 1$, $\beta = 0.1$. Two period doubling bifurcations (PD) are found at $\gamma = 0.10516472$ and $\gamma = 2.5912211$. A generalized period doubling bifurcation (GPD) is found at $\gamma = 0.52673498$.}
\label{1Dbif_PD}
\end{figure}

In Figure \ref{1Dlyap} we present Lyapunov exponents for the system \eqref{1D_map} for $\alpha = -\tilde \alpha = -1$, $\beta = 0.1$, $\gamma \in [0,5]$; we implement the even orbit for \eqref{1D_map} with initial condition $y_0 = 0.1$. Negative Lyapunov exponents are found for values of $\gamma$, according to the stability islands shown in Figure \ref{1Dbif_gamma} - \ref{1Dbif_gamma_zoom3}.
\begin{figure}
\centering
\includegraphics[width=7cm]{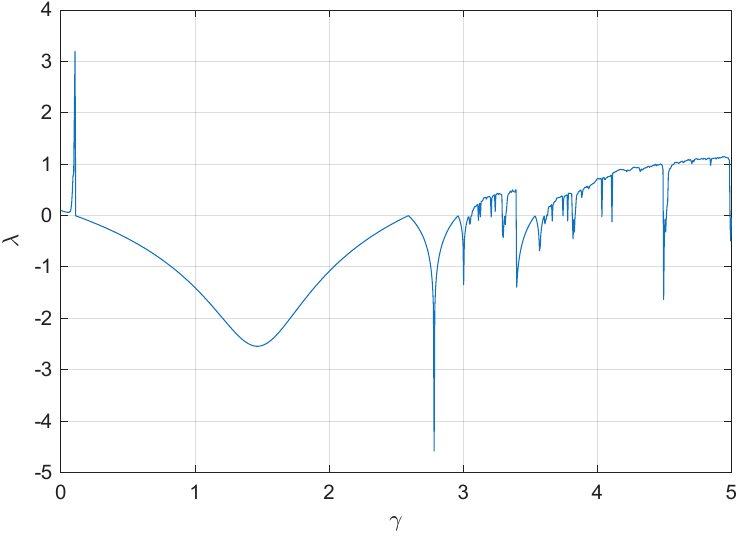}
\caption{Lyapunov exponents of the map \eqref{1D_map} for $y_0 = 0.1$ with respect to $\gamma \in [0,5]$. Parameters: $\tilde \alpha = -\alpha = 1$, $\beta = 0.1$, $y_0 = 0.1$.}
\label{1Dlyap}
\end{figure}

Numerical results presented in Figure \ref{1Dbif_gamma} - \ref{1Dbif_gamma_zoom3}, \ref{1Dbif_PD} - \ref{1Dlyap} refer to the 1D map \eqref{1D_map}, whose dynamics is equivalent to the four-dimensional \eqref{Ziegler_map_mod} when $\alpha = -\tilde \alpha$. Now, we present 2D bifurcation diagrams for the full system \eqref{Ziegler_map_mod} and spectra of Lyapunov exponents, that were generated using a dedicated Python implementation. For a comprehensive overview of a recent high-performance numerical tool for bifurcation analysis, we refer to \cite{Rybin}. \\We perform bifurcation analysis for three sets of parameters. For all the following simulations we consider $n_{it} = 5.000$ iterations by discarding the first $n_{tran} = 500$ ones and discretizing the parameters range into $1.000$ values. The color bar indicates the maximum Lyapunov exponents of \eqref{Ziegler_map_mod} for a given pair of parameters.

In Figure \ref{2Dbif_beta_gamma} it is shown the 2D bifurcation diagram $(\beta, \gamma)$ for $\alpha = -\tilde \alpha = -1$, $\beta \in [0, 5]$, $\gamma \in [0, 5]$. In Figure \ref{lyap_spectrum_beta_gamma} the Lyapunov spectrum for same values of $\alpha$, $\tilde \alpha$ and $\beta = 0.1$ is presented, by varying $\gamma$ in the interval $[0, 5]$.
\begin{figure}
\centering
\includegraphics[width=12cm]{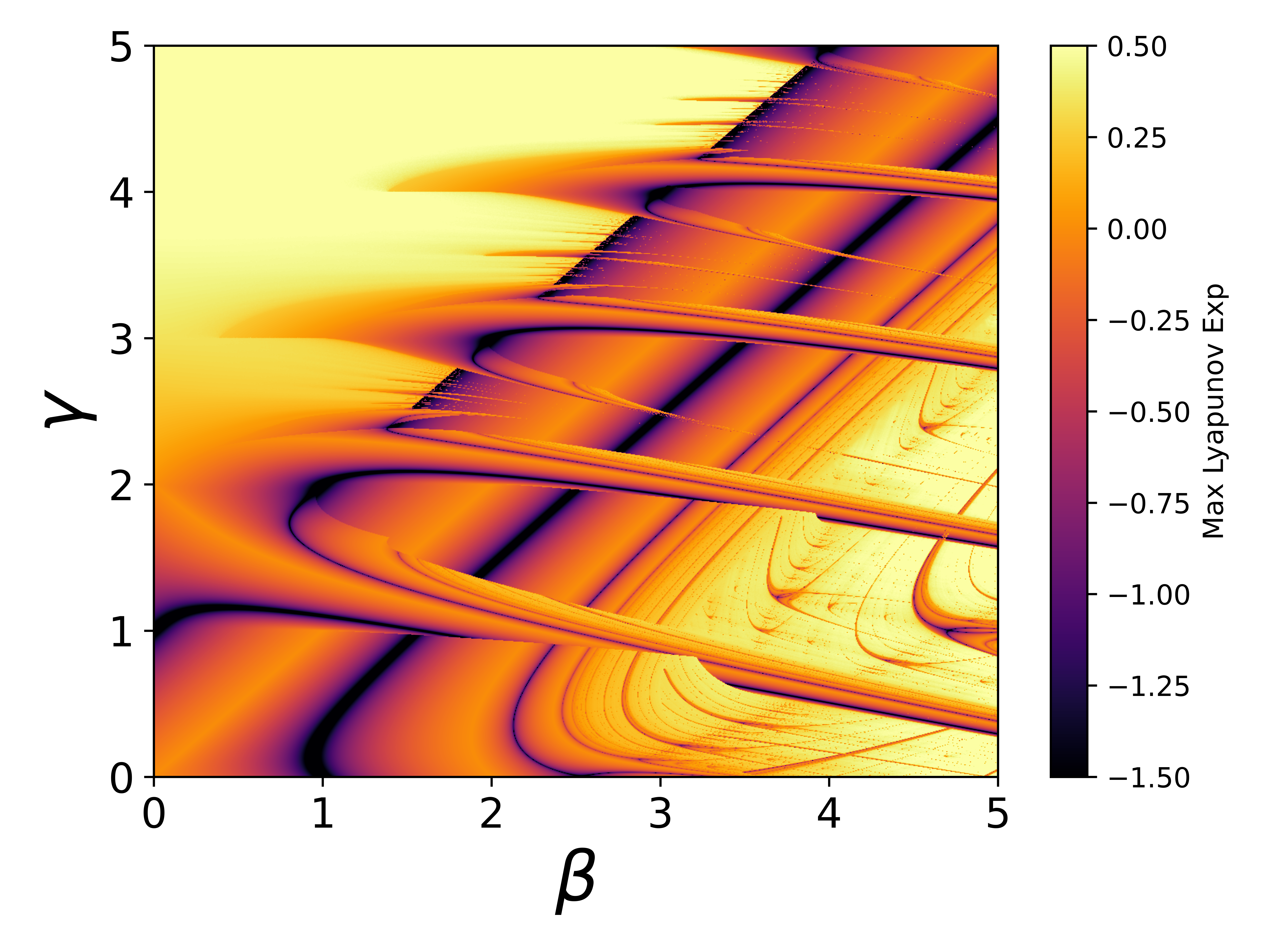}
\caption{2D bifurcation diagram $(\beta, \gamma)$ for the map \eqref{Ziegler_map_mod}. Parameters: $\alpha = -\tilde \alpha = -1$.}
\label{2Dbif_beta_gamma}
\end{figure}
\begin{figure}
\centering
\includegraphics[width=8cm]{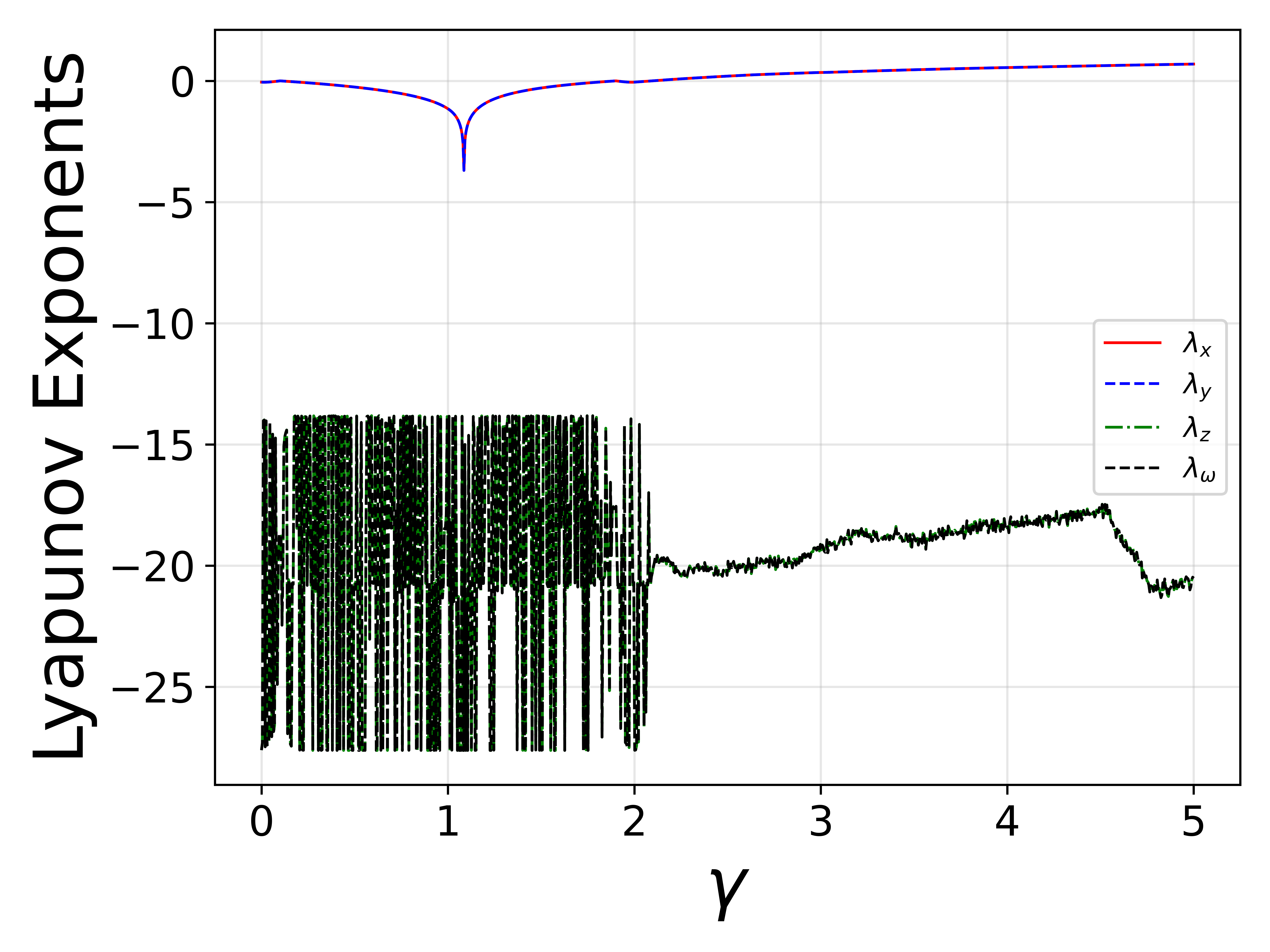}
\caption{Lyapunov exponents spectrum for the map \eqref{Ziegler_map_mod}. Parameters: $\alpha = -\tilde \alpha = -1$, $\beta = 0.1$.}
\label{lyap_spectrum_beta_gamma}
\end{figure}

In Figure \ref{2Dbif_alfa_gamma} it is shown the 2D bifurcation diagram $(\alpha, \gamma)$ for $\tilde \alpha = 2$, $\beta = 0.8$, $\alpha \in [-5, 0]$, $\gamma \in [0, 10]$. In Figure \ref{lyap_spectrum_alfa_gamma} the Lyapunov spectrum for same values of $\tilde \alpha$, $\beta$ and $\alpha = -2.5$ is presented, by varying $\gamma$ in the interval $[0, 10]$. These results are comprehensive of the parameters choice for the confined strange attractor shown in Figure \ref{attractor_xz_tot} - \ref{attractor_yw_tot}.
\begin{figure}
\centering
\includegraphics[width=12cm]{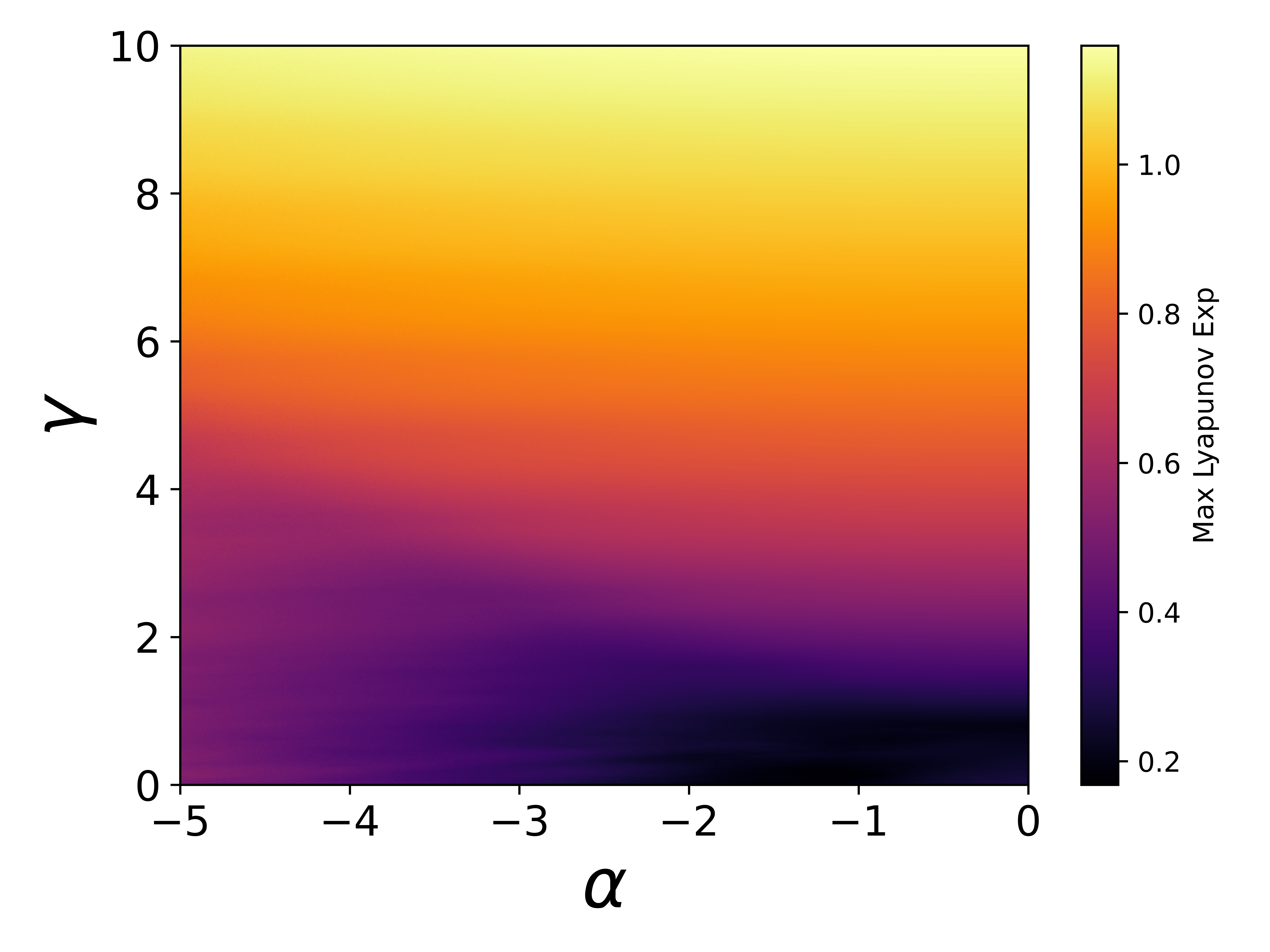}
\caption{2D bifurcation diagram $(\alpha, \gamma)$ for the map \eqref{Ziegler_map_mod}. Parameters: $\tilde \alpha = 2$, $\beta = 0.8$.}
\label{2Dbif_alfa_gamma}
\end{figure}
\begin{figure}
\centering
\includegraphics[width=8cm]{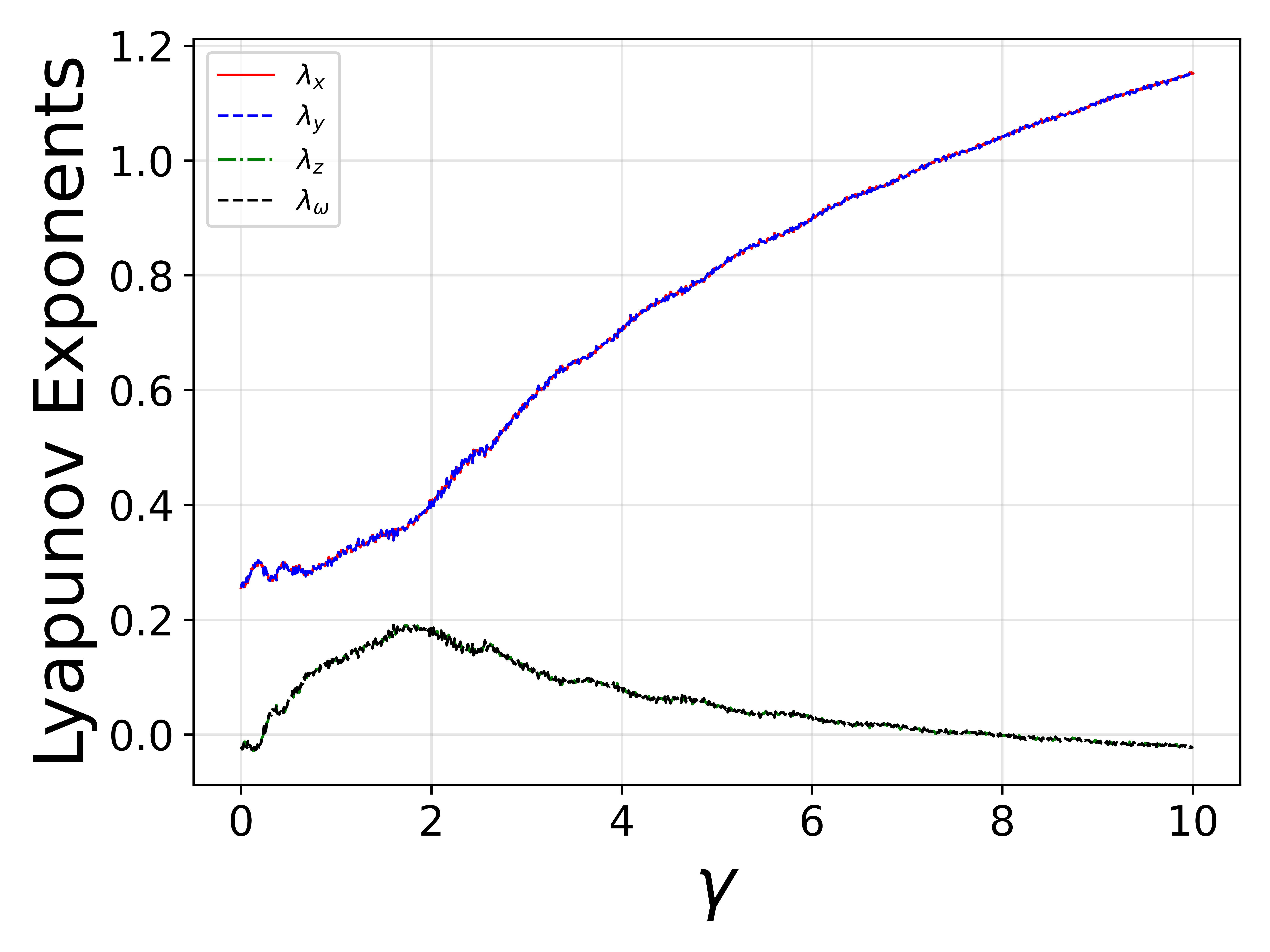}
\caption{Lyapunov exponents spectrum for the map \eqref{Ziegler_map_mod}. Parameters: $\tilde \alpha = 2$, $\beta = 0.8$, $\alpha = -2.5$.}
\label{lyap_spectrum_alfa_gamma}
\end{figure}

In Figure \ref{2Dbif_alfa_beta} it is shown the 2D bifurcation diagram $(\alpha, \beta)$ for $\tilde \alpha = 2$, $\gamma = 4$, $\alpha \in [-1, 1]$, $\beta \in [-1, 1]$. In Figure \ref{lyap_spectrum_alfa_beta} the Lyapunov spectrum for same values of $\tilde \alpha$, $\gamma$ and $\alpha = 0$ is presented, by varying $\beta$ in the interval $[-1, 1]$.
\begin{figure}
\centering
\includegraphics[width=12cm]{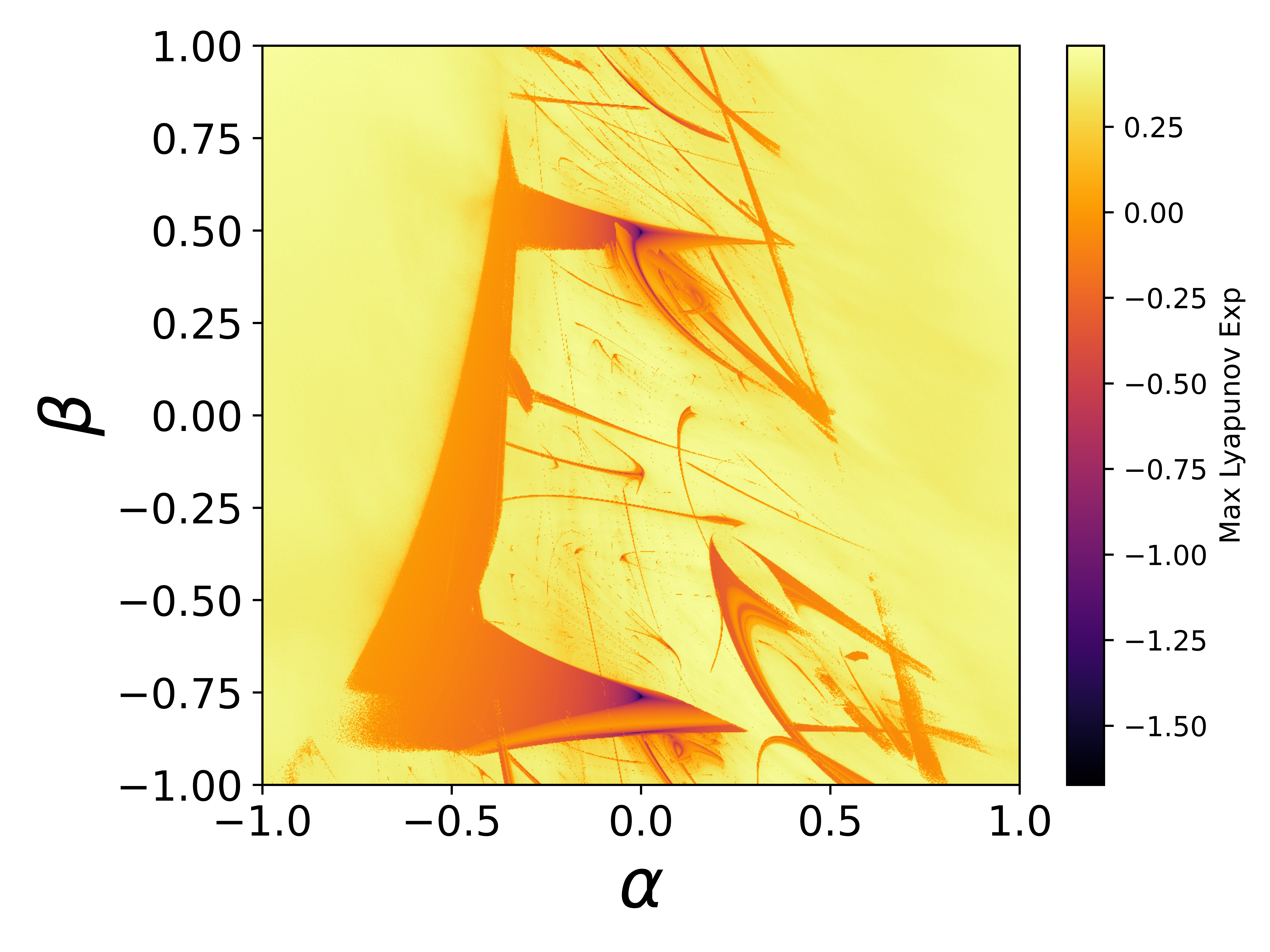}
\caption{2D bifurcation diagram $(\alpha, \beta)$ for the map \eqref{Ziegler_map_mod}. Parameters: $\tilde \alpha = 2$, $\gamma = 4$.}
\label{2Dbif_alfa_beta}
\end{figure}
\begin{figure}
\centering
\includegraphics[width=8cm]{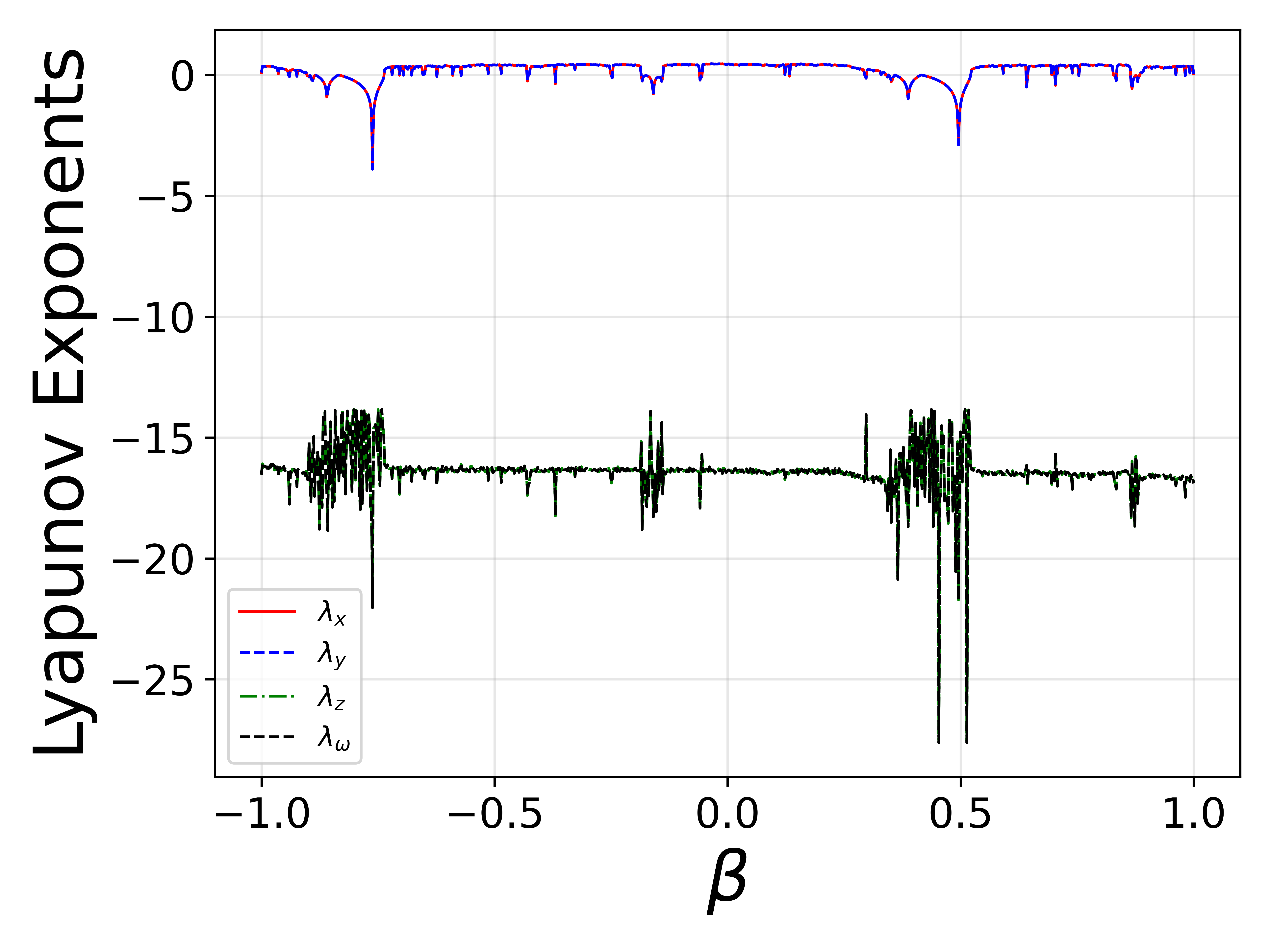}
\caption{Lyapunov exponents spectrumfor the map \eqref{Ziegler_map_mod}. Parameters: $\tilde \alpha = 2$, $\gamma = 4$, $\alpha = 0$.}
\label{lyap_spectrum_alfa_beta}
\end{figure}

A few remarks are in order. Firstly, notice that the Jacobian of \eqref{Ziegler_map_mod} exhibits a symmetry on its eigenvalues such that Lyapunov exponents are equal in pairs, as evident from the spectra shown in Figure \ref{lyap_spectrum_beta_gamma}, \ref{lyap_spectrum_alfa_gamma}, \ref{lyap_spectrum_alfa_beta}. Furthermore, for the parameters set selected in Figure \ref{lyap_spectrum_alfa_gamma}, we have a positive sum of the Lyapunov exponents, that is incompatible with a global strange attractor; however, a confined strange attractor can still emerge in specific projections, such as the one that is observed in Figure \ref{attractor_xz_tot} - \ref{attractor_yw_tot}. Finally, for all the previous simulations wide sets of parameters exhibit two positive Lyapunov exponents, leading to hyperchaos \cite{Kapitaniak, Stankevich} for the system \eqref{Ziegler_map_mod}.

\section{Conclusions}\label{sec_conclusions}
In this work we analyzed a family of four-dimensional discrete dynamical systems defined through polynomial and periodic functions of one coordinate, proving that the map is chaotic in the sense of Li-Yorke under suitable requests on the parameters. Relevance of Theorem \ref{theo_CD1} - \ref{theo_CD3} lies on the connection that they establish between chaotic behavior of these discrete maps and respective continuous systems, when discretized in a proper way. Discrete maps satisfying hypothesis of Theorem \ref{theo_CD1} - \ref{theo_CD3} can be seen as discrete versions of several continuous dynamical systems, that can be integrable, quasi-integrable or chaotic; regardless of the regularity of the motion exhibited by these systems, their discrete version, as considered in this work, is always Li-Yorke chaotic. This fact is open to some interpretations. In particular, it shows how the necessary conditions for the emerging of chaos strongly depend on the choice between a discrete or continuous evolution step, at least within the framework considered in this work; on the other hand, it is a further confirmation of the generality of the definition given by Li-Yorke, that is sufficiently weak to be satisfied by a large class of dynamical systems.

Despite imposing a number of constraints, hypothesis of Theorem \ref{theo_CD1} - \ref{theo_CD3} are sufficiently general to be applied in modeling of interest in several fields, from physics to biology to economics.

Chaotic behavior of the modular definition of the map has been numerically confirmed by identifying a confined strange attractor for general choices on parameters and initial conditions. The one-dimensional bifurcation diagram of the map exhibits a fractal structure analogous to the logistic map, with a singular rising of the period doubling cascade. Sensitive dependence of the system on initial data and rising of positive Lyapunov exponents are confirmed by two-dimensional bifurcation diagrams and the Lyapunov exponents spectra. Furthermore, the basins of attraction of the system confirm the presence of multistable states.

Results presented in this work naturally suggest a number of questions.

They might be taken into consideration extensions of Theorem \ref{theo_CD1} - \ref{theo_CD3} to the most large generalizations of the problem; in particular, we might think to generalize Theorem \ref{theo_CD1} - \ref{theo_CD3} for polynomial function of $x_n$ and arbitrary periodic functions of $x_n$. Furthermore, we may think about a further extension to analogous discrete maps defined in $\bb{R}^{2n}$, $n \in \bb{N}$.

In addition to the previous ones, further proofs might be provided in order to generalize Theorem \ref{theo_CD1} - \ref{theo_CD3} for standard discretization procedures of the system; conversely, we might derive continuous systems that are discretized in the form \eqref{Ziegler_map} through usual discretization methods.

From a purely theoretical point of view, the comment presented in Section \ref{sec_theo_comment} could be an interesting starting point for  studies at the intersection of chaos theory and KAM theory.

The rising of a strange attractor for the system leads to some questions. Firstly, we might numerically compute the Hausdorff dimension \cite{Hausdorff} of the attractor, in order to have a confirmation of its fractal dimension. Secondly, it is not immediate to provide an explanation for certain dynamical properties, such as the rising of the attractor only on certain projection planes or the peculiar regular orbits rising in the symmetric case. Finally, the bifurcation analysis performed in this work may be extended to larger sets of parameters, in order to provide a complete scenario for stability and transition to chaos for the system.

We conclude posing a last conceptual question concerning theoretical and practical meaning of the fixed point discretization we used in this work. We stress the fact that the discretization procedure we considered here does not belong to the class of standard algorithms to approximate ODEs with discrete maps; however, it could be possible to provide a formal framework for this alternative discretization procedure such that the discrete version can give informations about the behavior of the associate continuous system. Furthermore, we might ask ourselves what analytical and topological properties a continuous system and its fixed point discretization share.

We leave all these questions for future developments of this work.
\\ \\ \\ \textbf{CRediT authorship contribution statement}

Conceptualization, S.D. and V. C.; Methodology, S.D. and V. C.; Software, S.D.; Validation, V. C.; Formal analysis, S.D.; Writing--original draft, S.D.; Writing--review and editing, V. C.; Visualization, S.D.; Supervision, V. C.; Project administration, V. C.; Funding acquisition, V. C. All authors have read and agreed to the published version of the manuscript.
\\ \\ \textbf{Declaration of competing interests}

The authors declare no conflicts of interest.
\\ \\ \textbf{Funding}

This research was funded by the University of Ferrara, FIRD 2024.
\\ \\ \textbf{Data availability}

The data that supports the funding of this study are available within the article.
\\ \\ \textbf{Declaration of generative AI and AI-assisted technologies in the manuscript preparation process}

During the preparation of this work the authors used Google Gemini (Large Language Model) in order to write the Python code for reproducing the 2D bifurcation diagrams and the Lyapunov exponents spectra (Figure \ref{2Dbif_beta_gamma} - \ref{lyap_spectrum_alfa_beta}). After using this tool, the authors reviewed and edited the content as needed and take full responsibility for the content of the published article.
\\ \\ \textbf{Acknowledgments}

The first author is grateful to prof. Davide Liessi (University of Udine) for his insightful lectures on the MATLAB tool \textsc{MatCont}. The authors also thank prof. Michele Miranda (University of Ferrara) for helpful discussions regarding the initial flawed proof of Theorem \ref{theo_CD1}. The authors are grateful to the anonymous reviewers for their helpful comments and suggestions.

\bibliographystyle{elsarticle-num}
\bibliography{discrete_biblio_accepted}

\end{document}